\RequirePackage{etex}
\documentclass[11pt, a4paper]{article}
\usepackage[utf8]{inputenc}
\usepackage{inputenc}
\usepackage{amsmath}
\usepackage{amssymb}
\usepackage{stmaryrd} 
\usepackage{upgreek}
\usepackage{graphicx}
\usepackage{authblk}
\usepackage{layout}
\usepackage{dsfont}
\usepackage{bbold}
\usepackage[english]{babel}
\usepackage[top=2.2cm, bottom=3.2cm, left=2.2cm, right=2.2cm]{geometry}
\usepackage{multicol}
\usepackage{amsthm}
\usepackage{mathrsfs}
\usepackage{tikz}
\usepackage{tikz-network}
\usepackage[toc,page]{appendix}
\usepackage{hyperref}
\usepackage{complexity}
\usepackage[margin=1cm]{caption}
\usepackage{enumitem}
\usepackage{fontawesome}
\usepackage{todonotes}
\usepackage{xspace}
\usepackage{subcaption}


\def\rotateclockwise#1{
  \newdimen\xrw
  \pgfextractx{\xrw}{#1}
  \newdimen\yrw
  \pgfextracty{\yrw}{#1}
  \pgfpoint{\yrw}{-\xrw}
}

\def\rotatecounterclockwise#1{
  \newdimen\xrcw
  \pgfextractx{\xrcw}{#1}
  \newdimen\yrcw
  \pgfextracty{\yrcw}{#1}
  \pgfpoint{-\yrcw}{\xrcw}
}

\def\outsidespacerpgfclockwise#1#2#3{
  \pgfpointscale{#3}{
    \rotateclockwise{
      \pgfpointnormalised{
        \pgfpointdiff{#1}{#2}}}}
}

\def\outsidespacerpgfcounterclockwise#1#2#3{
  \pgfpointscale{#3}{
    \rotatecounterclockwise{
      \pgfpointnormalised{
        \pgfpointdiff{#1}{#2}}}}
}

\def\outsidepgfclockwise#1#2#3{
  \pgfpointadd{#2}{\outsidespacerpgfclockwise{#1}{#2}{#3}}
}

\def\outsidepgfcounterclockwise#1#2#3{
  \pgfpointadd{#2}{\outsidespacerpgfcounterclockwise{#1}{#2}{#3}}
}

\def\outside#1#2#3{
  ($ (#2) ! #3 ! -90 : (#1) $)
}

\def\cornerpgf#1#2#3#4{
  \pgfextra{
    \pgfmathanglebetweenpoints{#2}{\outsidepgfcounterclockwise{#1}{#2}{#4}}
    \let\anglea\pgfmathresult
    \let\startangle\pgfmathresult

    \pgfmathanglebetweenpoints{#2}{\outsidepgfclockwise{#3}{#2}{#4}}
    \pgfmathparse{\pgfmathresult - \anglea}
    \pgfmathroundto{\pgfmathresult}
    \let\arcangle\pgfmathresult
    \ifthenelse{180=\arcangle \or 180<\arcangle}{
      \pgfmathparse{-360 + \arcangle}}{
      \pgfmathparse{\arcangle}}
    \let\deltaangle\pgfmathresult

    \newdimen\x
    \pgfextractx{\x}{\outsidepgfcounterclockwise{#1}{#2}{#4}}
    \newdimen\y
    \pgfextracty{\y}{\outsidepgfcounterclockwise{#1}{#2}{#4}}
  }
  -- (\x,\y) arc [start angle=\startangle, delta angle=\deltaangle, radius=#4]
}

\def\corner#1#2#3#4{
  \cornerpgf{\pgfpointanchor{#1}{center}}{\pgfpointanchor{#2}{center}}{\pgfpointanchor{#3}{center}}{#4}
}

\def\hedgem#1#2#3#4{
  
  \outside{#1}{#2}{#4}
  \pgfextra{
    \def\hgnodea{#1}
    \def\hgnodeb{#2}
  }
  foreach \c in {#3} {
    \corner{\hgnodea}{\hgnodeb}{\c}{#4}
    \pgfextra{
      \global\let\hgnodea\hgnodeb
      \global\let\hgnodeb\c
    }
  }
  \corner{\hgnodea}{\hgnodeb}{#1}{#4}
  \corner{\hgnodeb}{#1}{#2}{#4}
  -- cycle
}

\tikzstyle{noeud}=[circle, inner sep=1.5, minimum size =2 pt, line width = .8pt, draw=black, fill=white]
\tikzstyle{R}=[circle, inner sep=1.5, minimum size =6 pt, line width = 1.5pt, draw=red, fill=red]
\tikzstyle{B}=[circle, inner sep=1.5, minimum size =6 pt, line width = 1.5pt, draw=blue, fill=blue]
\tikzstyle{inv}=[circle,inner sep=0, minimum size =4 pt, line width = 1pt, draw=white, fill=white, text= black]

 \tikzstyle{rond}=[circle, draw = green, inner sep=2mm, line width = 1.5pt]
 \tikzstyle{croix}=[cross out, draw = blue, inner sep=3mm]

\tikzstyle{v}=[circle,inner sep=0, minimum size =3 pt, line width = 1pt, draw=black, fill=black, text= white]

\tikzstyle{ghost}=[circle,inner sep=0, minimum size =4 pt, line width = 1pt, draw=black!20, fill=black!20, text= white]

\tikzstyle{arete}=[draw, line width=1.5pt, draw=black]

\tikzstyle{decision} = [diamond, draw, fill=yellow!20, 
    text width=10em, text centered, node distance=5cm, inner sep=0pt]
\tikzstyle{block} = [rectangle, draw, fill=yellow!20, 
    text width=10em, text centered, rounded corners, minimum height=4em]
\tikzstyle{line} = [draw, -latex']
    
\tikzstyle{sortieD} = [draw, ellipse,fill=blue!20]
 \tikzstyle{sortieA} = [draw, ellipse,fill=red!20]

\newtheorem{theorem}{Theorem}
\newtheorem{corollary}[theorem]{Corollary}
\newtheorem{proposition}[theorem]{Proposition}
\newtheorem{lemma}[theorem]{Lemma}

\newtheoremstyle{cla}{13pt}{12pt}{\it}{}{$\quad$}{.}{ }{}
{\theoremstyle{cla} \newtheorem{claim}{Claim}}

\newenvironment{proofclaim}[1][Proof]{\begin{proof}[#1]}{\end{proof}}

\newcommand{\EXPTIME}{\lang{EXPTIME}\xspace}
\newcommand{\hyp}{\mathcal{H}}
\newcommand{\strat}{\mathcal{S}}
\newcommand{\pos}{\mathcal{P}}
\newcommand{\Zp}{\mathbb{Z}_+}
\newcommand{\interval}[2]{\left\llbracket #1,#2 \right\rrbracket}
\newcommand{\ov}[1]{\overline{#1}}

\title{Token positional games\thanks{This research was supported by the ANR project P-GASE (ANR-21-CE48-0001-01)}}

\author[1]{Guillaume Bagan} 
\author[1]{Quentin Deschamps}
\author[2]{Florian Galliot}
\author[3]{Mirjana Mikala\v{c}ki \thanks{Partly supported by Provincial Secretariat for Higher Education and Scientific Research, Province of Vojvodina (Grant No.~142-451-2686/2021). Partly supported by Ministry of Science,
Technological Development and Innovation of Republic of Serbia
(Grants 451-03-66/2024-03/200125 \& 451-03-65/2024-03/200125).
}}
\author[4]{Nacim Oijid\thanks{Kempe Foundation Grant No. JCSMK24-515 (Sweden)}}

\affil[1]{Univ Lyon, CNRS, UCBL, INSA Lyon, LIRIS, UMR5205, F-69622 Villeurbanne, France}
\affil[2]{Aix-Marseille Université, CNRS, Centrale Marseille, I2M, UMR 7373, 13453 Marseille, France}
\affil[3]{Department of Mathematics and Informatics, Faculty of Sciences, University of Novi Sad, Serbia}
\affil[4]{Umeå University, Sweden}
\date{}

\begin{document}

\maketitle

\begin{abstract}
    The classical Maker-Breaker positional game is played on a board which is a hypergraph $\hyp$, with two players, Maker and Breaker, alternately claiming vertices of $\hyp$ until all the vertices are claimed. When the game ends, Maker wins if she has claimed all the vertices of some edge of $\hyp$; otherwise, Breaker wins. Playing this game in real life can be done by placing tokens on the vertices of the board. In this paper, we study the unfortunate case in which one or both players do not have enough tokens to cover all the vertices and, as such, will have to move their tokens around at some point instead of placing new ones. There may be a bias, in that Maker and Breaker do not necessarily have the same amount of tokens. The present paper initiates the study of this generalization of positional games, called {\em token positional games}.
    
    A particularly interesting case is when Maker has a winning strategy in the classical game: what is the lowest number of tokens with which she still wins against Breaker's unlimited stock? We notably show that, for $k$-uniform hypergraphs on an arbitrarily large number $n$ of vertices, this number equals $k$ if $k \in\{2,3\}$ but can vary from $k$ to $\Omega(n)$ if $k \geq 4$. From an algorithmic point of view, {\sf PSPACE}-hardness in general is inherited from classical positional games, but we get a polynomial-time algorithm to solve the case where Breaker only has one token. We also establish {\sf EXPTIME}-completeness for a ``token sliding'' variation of the game.
\end{abstract}

\section{Introduction}

Classical {\em Maker-Breaker} games \cite{chvatal1978} are two-player combinatorial games played on hypergraphs. Let $\hyp$ be a finite hypergraph whose vertex set $V(\hyp)$ we call the {\em board} of the game and whose edge set $E(\hyp)$ may be interpreted as a family of {\em winning sets}. The two players, called Maker and Breaker, take turns in claiming previously unclaimed elements of the board, with Maker going first. Maker wins the game if, by the end of the game, she has claimed all the elements of some winning set: we say that she has {\em filled} that winning set. Otherwise, Breaker is declared the winner of the game. Note that no draw is possible. We define the {\em outcome} as the function that maps a hypergraph $\hyp$ to the player that has a winning strategy on $\hyp$ (for short, we say that this player {\em wins on $\hyp$}).

Due to their convenient properties, Maker-Breaker games have been by far the most studied among a larger family of games on hypergraphs called {\em positional games}. The general study of positional games started in the second half of the 20th century~\cite{Hales1963, erdos1973}, and developed over the years in roughly two directions. On the one hand, different {\em conventions} are being considered: this means either other winning conditions for the players, as in the {\em Maker-Maker} \cite{Hales1963}, {\em Avoider-Avoider} \cite{Har81} and {\em Avoider-Enforcer} \cite{Lu92} conventions, or alternative ways to select vertices, as in the {\em Client-Waiter} and {\em Waiter-Client} \cite{beckpicker} conventions. On the other hand, the general framework of positional games has been extended in several ways: in recent years, we have seen the introduction of scoring positional games, in which we count the number of edges filled by Maker instead of asking whether she can fill one or not~\cite{BaganIncidence}, as well as a version where the vertices are partially ordered, forcing some moves to be played before others~\cite{BAGANpopo}.

The study of all these variations of positional games usually aims, for various hypergraph classes, at characterizing hypergraphs on which such or such player wins, and determining the algorithmic complexity of computing the outcome of the game. The latter aspect is typically studied depending on the size of the edges. The {\em rank} of a hypergraph is defined as the size of its largest edge, and a hypergraph is deemed {\em $k$-uniform} if all its edges have size exactly $k$. Maker-Breaker games were first proved to be \PSPACE-complete on hypergraphs of rank 11 through the game \textsf{POS CNF}~\cite{schaefer1978complexity}, and this result has been improved several times during the last four years~\cite{rahman20236,koepke2025advances,galliot2025}, so it is now known that they remain \PSPACE-complete even restricted to 4-uniform hypergraphs. Under classical complexity assumptions, this bound is tight since Maker-Breaker games can be solved in polynomial time on hypergraphs of rank 3~\cite{GGS25}.

Historically, a lot of Maker-Breaker games were played on a complete graph of which the players were claiming edges, with Maker aiming at claiming a subgraph with some prescribed structure. Since Maker would usually win these games very easily, Breaker was given more power: a {\em bias} was introduced, allowing Breaker to claim $q>1$ elements per move, while Maker could still only claim one~\cite{chvatal1978}. This led to an abundant literature on the {\em threshold bias} of various Maker-Breaker games, i.e. the smallest value of $q$ such that Breaker wins~\cite{chvatal1978, beck1982, Bednarska2000}. Later, Maker was also given a bias $p$, leading to the study of the threshold bias of Breaker as a function of $p$~\cite{balogh2009diameter,beckbook2008,gebauer2012clique, hefetz2012doubly, fouadi2025asymptotic}.

In this paper, we introduce {\em token positional games} as a different way to change the balance of the Maker-Breaker game. Instead of biases, we give the players tokens: $a$ red tokens for Maker, and $b$ blue tokens for Breaker.
The {\em $(a,b)$-game} on a hypergraph $\hyp$ is then defined as follows. On their turn, each player places one of their own tokens on an unoccupied vertex. This might be done by placing a token that had not yet been used, or by moving one that was already placed on the board (note that, once all tokens are placed, it is necessary to play according to the latter case). We allow for pass moves. Note that, when a player moves a token, the vacated vertex becomes available for the other player (or for that same player) to place a token later. A player is allowed to move a token from its position before all of their tokens are already used, but this is suboptimal since it leaves that vertex available for the opponent. Maker wins in this setting as in the classical version, i.e.\ if she manages to fill some edge with red tokens, while Breaker wins if he manages to indefinitely prevent Maker from winning or if the game reaches the same state twice. Note that, if $\varnothing \in E(\hyp)$, then Maker wins before the game even starts. As in the classical Maker-Breaker game, Breaker wins if the set of blue tokens on board forms a transversal of $\hyp$, as he may simply pass all his upcoming moves and the game will reach the same state twice. We use the symbol $\ast$ when a player is given enough tokens to be able to use a new one on each turn, which we may interpret as that player having an unlimited amount of tokens. Note that the $(\ast,\ast)$-game coincides with the classical Maker-Breaker game. The ``token'' point of view is not relevant for players that have an unlimited amount, so we may simply say that they {\em claim} vertices as in the classical game.

This variation of positional games involving tokens was first mentioned in the PhD thesis of the third author~\cite{galliot2023hypergraphs}, but such games actually have a history of their own. Namely, it was mentioned in~\cite{zaslavsky1982tic} that in the ancient Roman Empire, there existed some version of Tic-Tac-Toe called {\em Terni Lapilli} where both players were given three tokens, which they would place on the board in the first three rounds and then move around until some player gets the desired three-in-line configuration (which never happens with optimal play). That game actually included some restrictions on the way the tokens could be moved, but a similar game without restrictions was also known in France under the name {\em Les Pendus}, as noted by Kraitchik and Gardner~\cite{gardner1959,kraitchik1942}. The game of {\em Nine Men’s Morris}, which likely dates back to the Roman Empire as well, is also of a similar nature although the game does not end when a player gets three aligned tokens: instead, that player gets to remove one of the opponent’s tokens from the board.

Section \ref{section2} features some first results on token positional games, which will be useful in later sections. Section \ref{section3} solves the case where Breaker has a single token: in particular, we get a polynomial-time algorithm. Section \ref{section4}, on the contrary, addresses the case where Breaker has unlimited tokens. We consider the threshold number of tokens that Maker needs to win depending on the size $k$ of the edges, which we relate to the duration of the classical game which is a well-studied parameter in positional games. We provide bounds on both parameters which are either exact or almost tight, for all edge sizes. Section \ref{section5} is dedicated to algorithmic results. Token positional games are \PSPACE-hard in general but lie in \XP~parameterized by the number of tokens of both players. We also get an \EXPTIME-completeness result for a version of the game where the tokens are restricted to be slid along the edges. Section \ref{section6} concludes the paper and suggests some leads for future research.

\section{Preliminaries and general results}\label{section2}

A trivial remark is that more tokens is always better.

\begin{proposition}
    Let $\hyp$ be a hypergraph, and let $a,a',b,b' \in \Zp$ such that $a' \geq a$ and $b' \leq b$. If Maker wins the $(a,b)$-game on $\hyp$, then Maker wins the $(a',b')$-game on $\hyp$.
\end{proposition}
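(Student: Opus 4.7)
The plan is to reduce the statement to the two ``atomic'' cases $(a,b)\to(a+1,b)$ and $(a,b)\to(a,b-1)$ by a trivial induction on $(a'-a)+(b-b')$, then handle each by a direct strategy-copying argument. Let $\sigma$ be a winning Maker strategy in the $(a,b)$-game; I will describe how Maker uses $\sigma$ verbatim in each of the two target games.

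For the case $(a+1,b)$: Maker mentally sets aside one of her $a+1$ tokens and never touches it. On each of her turns, she consults $\sigma$ on the ``virtual'' $(a,b)$-game played on the board and executes the prescribed action, be it a placement, a slide, or a pass. The action is always legal: a slide from $X$ to $Y$ works because the red token at $X$ is really there (by induction on turns, the positions of all non-reserved tokens in the real and virtual games coincide); a placement works because Maker has at least as many non-reserved unused tokens as in the virtual game; a pass is always legal. Hence the red-position, blue-position, and whose-turn triples remain identical in both games throughout play, so whenever $\sigma$ would fill an edge in the virtual game, Maker fills it in the real game.

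For the case $(a,b-1)$: Maker again plays $\sigma$, but now the simulation runs on Breaker's side. She views each Breaker action in the real $(a,b-1)$-game as an action of a virtual Breaker in the $(a,b)$-game who is permanently leaving one extra blue token in reserve. Any legal Breaker move in the $(a,b-1)$-game remains legal for virtual Breaker in the $(a,b)$-game (strictly larger token supply, same board), so the play produced against $\sigma$ in the real game is among those against which $\sigma$ is guaranteed to win.

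The only slightly delicate point is the ``same state twice'' rule, since it forces one to specify what counts as a state. Regardless of whether the state is taken to be just (red positions, blue positions, turn) or to also record the number of reserved unused tokens of each player, the simulation preserves the relevant data: the board and turn coincide between real and virtual games, and the reserved tokens (one extra red, respectively one extra blue, held aside forever) contribute a constant offset. Thus a state repetition in the real game would entail a repetition in the virtual game, contradicting the assumption that $\sigma$ is winning there. This is the one verification that needs care; everything else is mechanical strategy transfer.
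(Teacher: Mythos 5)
Your proof is correct, and its first half (Maker reserves the extra tokens and plays her $(a,b)$-strategy verbatim) is exactly the paper's argument, just carried out one token at a time instead of all $a'-a$ at once. Where you genuinely diverge is the Breaker side: the paper argues by contraposition through Breaker's strategies --- if Breaker won the $(a',b')$-game he could win the $(a',b)$-game by ignoring his $b-b'$ extra tokens, hence he does not win $(a',b')$ --- and then concludes that Maker wins, which implicitly invokes determinacy of the game (no draws, so exactly one player has a winning strategy; this holds here by Zermelo since the repetition rule makes the game finite, but it is a hidden step). You instead argue directly on Maker's side: every legal Breaker move in the $(a,b-1)$-game is a legal Breaker move in the $(a,b)$-game, so the real play is among the plays $\sigma$ already defeats. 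This buys you a self-contained argument that never needs determinacy, at the cost of the simulation bookkeeping; your explicit check that the ``same state twice'' rule transfers (board positions and turn coincide, reserved tokens contribute a constant offset, so a repetition in the real game forces one in the virtual game) is a point the paper glosses over entirely and is worth making, since Breaker wins by repetition and one must confirm the simulation cannot manufacture new repetitions.
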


\begin{proof}
    Suppose that Maker wins the $(a,b)$-game on $\hyp$. Maker can apply the same strategy to win the $(a',b)$-game on $\hyp$: she simply never uses the $a'-a$ extra tokens. Similarly, it is therefore impossible that Breaker wins the $(a',b')$-game on $\hyp$, since he could then apply the same strategy to win the $(a',b)$-game on $\hyp$.
\end{proof}

A {\em subhypergraph} of a hypergraph $\hyp$ is a hypergraph $\hyp'$ such that $V(\hyp') \subseteq V(\hyp)$ and $E(\hyp') \subseteq E(\hyp)$ (note that this need not be an induced structure). A convenient property of classical Maker-Breaker games, which largely explains why this convention is the most studied, is the fact that Maker winning on a subhypergraph $\hyp'$ of $\hyp$ implies that Maker also wins on $\hyp$, as she may ignore any action that takes place outside of $\hyp'$. This property still holds for token positional games.

\begin{proposition}[Subhypergraph monotonicity]
    Let $\hyp$ be a hypergraph, let $\hyp'$ be a subhypergraph of $\hyp$, and let $a,b \in \Zp$. If Maker wins the $(a,b)$-game on $\hyp'$, then Maker wins the $(a,b)$-game on $\hyp$.
\end{proposition}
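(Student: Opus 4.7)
The plan is to adapt the classical ``virtual game'' argument for subhypergraph monotonicity to the token setting. Fix a winning strategy $\sigma'$ for Maker in the $(a,b)$-game on $\hyp'$. In the real game on $\hyp$, Maker will simulate a parallel virtual game on $\hyp'$ in which she plays according to $\sigma'$, and she executes in the real game exactly the moves dictated by $\sigma'$. I plan to maintain two invariants throughout: (I1) Maker's tokens occupy the same set of vertices in the real and in the virtual games, and (I2) every vertex of $V(\hyp')$ occupied by real Breaker is also occupied by virtual Breaker. Under (I1)--(I2), any vertex that $\sigma'$ selects lies in $V(\hyp')$ and is unoccupied in the virtual game, hence unoccupied in the real game, so Maker's moves will be legal; (I1) is automatically preserved since she plays identically in both games.

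The substantive work will be defining virtual Breaker's response to each real Breaker move so that (I2) is preserved. A real Breaker pass, or any action confined to $V(\hyp)\setminus V(\hyp')$, will be matched by a virtual pass. An action that merely vacates a vertex of $V(\hyp')$ will also be met by a virtual pass: keeping an \emph{extra} virtual token in $V(\hyp')$ can only strengthen (I2). When real Breaker's action includes occupying a previously unoccupied vertex $v \in V(\hyp')$, the natural response is for virtual Breaker to occupy $v$ too; if $v$ is already virtually occupied, he passes, while if not, he places a new virtual token on $v$ (when he still has an unused one) or else moves a virtual token from some source vertex $w$ onto $v$.

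The main obstacle I anticipate is in this last sub-case: to preserve (I2) when moving a token, one must choose $w$ so that $w \notin B^{\text{real}}$. I will verify this is always possible when virtual Breaker is ``full'' ($|B^{\text{virt}}|=b$) via a short counting argument: in the two situations where this sub-case genuinely arises (real Breaker placing a brand-new token, or moving one from a vertex outside $V(\hyp')$ into $V(\hyp')$), one gets $|B^{\text{real}}\cap V(\hyp')| < b = |B^{\text{virt}}|$, hence some $w \in B^{\text{virt}}\setminus B^{\text{real}}$ exists. In the remaining sub-case, where real Breaker moves a token $u \to v$ with $u,v \in V(\hyp')$, virtual Breaker simply mirrors the move $u \to v$; invariant (I2) is preserved because $u \in B^{\text{virt}}$ already holds by (I2). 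Once the invariants hold at every turn, the winningness of $\sigma'$ forces virtual Maker to fill some edge $e \in E(\hyp')$ after finitely many moves; then (I1) ensures real Maker's tokens also cover $e$, (I2) together with $e \subseteq V(\hyp')$ prevents real Breaker from blocking any vertex of $e$, and $e \in E(\hyp')\subseteq E(\hyp)$ concludes that Maker wins the real game.
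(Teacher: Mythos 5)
Your proof is correct and takes essentially the same approach as the paper: a simulation (``virtual game'') argument in which Maker runs her winning $\hyp'$-strategy on $\hyp$ and reinterprets Breaker's moves, the only cosmetic difference being that the paper maps a real Breaker placement outside $V(\hyp')$ to a placement on an arbitrary unoccupied vertex of $V(\hyp')$ where you use a virtual pass. Your extra bookkeeping --- the invariants (I1)--(I2), the counting argument producing a source vertex $w \in B^{\text{virt}} \setminus B^{\text{real}}$, and the mirrored moves inside $V(\hyp')$ --- simply makes explicit the token-budget details that the paper's two-sentence proof leaves implicit.
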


\begin{proof}
    If Maker wins the $(a,b)$-game on $\hyp'$, then she may apply the same winning strategy on $\hyp$, playing all of her moves inside $V(\hyp')$ and eventually filling an edge in $E(\hyp') \subseteq E(\hyp)$. Indeed, if Breaker ever places a token outside $V(\hyp')$, Maker can pretend that Breaker has placed that token on some arbitrary unoccupied vertex in $V(\hyp')$ instead, and go on with her winning strategy.
\end{proof}

A {\em pairing} is a set of pairwise disjoint pairs of vertices. In classical Maker-Breaker games, any pairing $\Pi$ has an associated {\em pairing strategy} for Breaker. This means that, if Maker claims a vertex $x \in \{x,y\} \in \Pi$ where $y$ is available, then Breaker answers by claiming $y$, otherwise Breaker claims an arbitrary vertex. A pairing is said to be {\em complete} in a hypergraph $\hyp$ if every edge $e \in E(\hyp)$ is {\em covered} i.e. $\pi \subseteq e$ for some $\pi \in \Pi$. If some pairing $\Pi$ is complete in $\hyp$, then the pairing strategy associated to $\Pi$ is a winning Breaker strategy for the Maker-Breaker game on $\hyp$, as it ensures that he claims at least one vertex per edge.

Pairing strategies can obviously be adapted to token positional games, provided Breaker has enough tokens to carry them out. As we will also need such strategies mid-game, we introduce some definitions. A {\em position} of the $(a,b)$-game is a triple $(\hyp,M,B)$ where $\hyp$ is a hypergraph and $M,B \subseteq V(\hyp)$ correspond to the current tokens that the respective players have on the board: $M \cap B = \varnothing$, $|M| \leq a$ and $|B| \leq b$. Note that the position $(\hyp,\varnothing,\varnothing)$ is simply the starting position when playing the game on the hypergraph $\hyp$. We can define the vertex set and the edge set of a position $\pos=(\hyp,M,B)$ as $V(\pos)=V(\hyp) \setminus (M \cup B)$ and $E(\pos)=\{e \setminus M \mid e \in E(\hyp), e \cap B = \varnothing\}$ respectively. This is a natural definition since, in the position $(\hyp,M,B)$, an edge $e \in E(\hyp)$ is either disjoint from $B$ in which case $e \setminus M$ is all that is left to fill for Maker, or it intersects $B$ in which case Breaker already has defended the edge $e$. A pairing is then said to be {\em complete} in a position $\pos=(\hyp,M,B)$ if it is complete in the hypergraph with vertex set $V(\pos)$ and edge set $E(\pos)$. We have the following result.

\begin{proposition}[Pairing strategy]\label{prop:pairing}
    Let $a,b \in \Zp$. If a pairing $\Pi$ is complete in a position $\pos=(\hyp,M,B)$ of the $(a,b)$-game and $b-|B| \geq \min(a,|\Pi \cap 2^{V(\pos)}|)$, then Breaker wins the $(a,b)$-game played from that position (with Maker playing first, as usual).
\end{proposition}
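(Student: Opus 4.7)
I plan to have Breaker execute the natural token-adaptation of the pairing strategy on $\Pi$. Set $k := \min(a,|\Pi \cap 2^{V(\pos)}|)$, and split Breaker's tokens into $|B|$ \emph{defender} tokens (those on $B$, which Breaker never moves) and $b-|B| \geq k$ \emph{pairing} tokens, reserved for answering Maker on $V(\pos)$. The strategy is: whenever Maker's latest move leaves her holding a token on some $x$ with $\{x,y\}\in\Pi\cap 2^{V(\pos)}$, Breaker ensures that a pairing token sits on $y$ (placing a fresh one if available, otherwise sliding a pairing token from a pair that has just become inactive); in all other cases, Breaker passes. The invariant to be maintained is: after each of Breaker's moves, for every $\{x,y\}\in\Pi\cap 2^{V(\pos)}$ with Maker holding one of the two vertices, Breaker holds the other with a pairing token. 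The key sub-observation in the inductive step is that when Maker newly claims a pair vertex $x$, its partner $y$ is necessarily not Maker-held (otherwise the invariant would have forced Breaker onto $x$, blocking the move) and satisfies $y \notin B$ (since $y \in V(\pos)$), so $y$ is available for a pairing response.

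The second step is a token budget check. Under the invariant, each \emph{active pair} (one containing a Maker token) accounts for exactly one Maker token on a pair vertex of $V(\pos)$, so the number of active pairs is bounded both by $a$ and by $|\Pi \cap 2^{V(\pos)}|$, hence by $k \leq b-|B|$. A case analysis on the type of Maker's move — fresh placement versus slide, entering, leaving, or shifting between pair vertices — shows that when the active-pair count strictly increases, Breaker has at least one fresh pairing token to use; when it merely shifts (one pair deactivated, one activated), Breaker can slide a now-superfluous pairing token to the new partner; and when it decreases, Breaker just passes.

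Finally, the invariant implies that Maker never fills an edge. For any $e\in E(\hyp)$: if $e\cap B\neq\varnothing$, a defender token permanently blocks $e$; otherwise $e\setminus M\in E(\pos)$, and completeness of $\Pi$ provides a pair $\{x,y\}\subseteq e\setminus M\subseteq V(\pos)$ on which Maker is forbidden from holding both vertices simultaneously. Since Maker cannot win and pass moves are allowed, play continues forever, and as the state space is finite, some position must eventually repeat, giving Breaker the win by convention. The main obstacle is precisely the token accounting of the second step: it is not a priori obvious that $b-|B|$ pairing tokens suffice throughout, because Breaker may have to recycle them as Maker slides hers around, and the argument hinges on the fact that the tight count is $\min(a,|\Pi\cap 2^{V(\pos)}|)$, rather than $|\Pi\cap 2^{V(\pos)}|$ alone.
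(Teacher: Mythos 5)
Your proposal is correct and follows essentially the same approach as the paper's proof: Breaker keeps his tokens on $B$ fixed and uses his $b-|B|$ remaining tokens to answer on the partner of any pair Maker touches, recycling tokens from deactivated pairs when necessary, so that Maker never holds both vertices of a pair in $\Pi \cap 2^{V(\pos)}$ nor any vertex of $B$, and completeness then blocks every edge. Your explicit invariant and active-pair budget check (active pairs $\leq \min(a,|\Pi \cap 2^{V(\pos)}|) \leq b-|B|$) merely spells out what the paper compresses into the remark that Breaker ``has enough of them to follow Maker without ever vacating a pair on which Maker has one of her own.''
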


\begin{proof}
    Breaker leaves his tokens on $B$ throughout, and only uses his $b-|B|$ other tokens. Whenever Maker places a token on some $x \in \{x,y\} \in \Pi$ where $y$ is free, Breaker places a token on $y$. This is obviously possible if $b-|B| \geq |\Pi \cap 2^{V(\pos)}|$, without even the need to move tokens. If $a \leq b-|B| < |\Pi \cap 2^{V(\pos)}|$, then Breaker may need to move tokens, but he has enough of them to ``follow'' Maker without ever vacating a pair on which Maker has one of her own. Using this strategy, Breaker ensures that Maker never has a token on any vertex in $B$ and never has tokens on both vertices of a pair in $\Pi \cap 2^{V(\pos)}$. By definition of a complete pairing, this implies that Maker never fills an edge of $\hyp$.
\end{proof}

As a first application of pairing strategies, we can easily solve token positional games on 2-uniform hypergraphs. Note that hypergraphs containing an edge of size 1 may be disregarded in general, since Maker wins in one move in that case.

\begin{proposition}\label{prop:2-unif}
    Let $\hyp$ be a 2-uniform hypergraph, and let $a,b \in \Zp$ with $a \geq 2$. Breaker wins the $(a,b)$-game on $\hyp$ if and only if the edges of $\hyp$ are pairwise disjoint and $b \geq \min(a,|E(\hyp)|)$.
\end{proposition}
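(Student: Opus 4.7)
I prove the two directions separately.

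\emph{Sufficiency.} Suppose the edges of $\hyp$ are pairwise disjoint. Then $\Pi := E(\hyp)$ is itself a pairing, and it is trivially complete in $\hyp$ (each edge covers itself). Apply Proposition~\ref{prop:pairing} to the initial position $(\hyp,\varnothing,\varnothing)$: Breaker wins whenever $b \geq \min(a,|\Pi|) = \min(a,|E(\hyp)|)$.

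\emph{Necessity.} I argue the contrapositive: if either two edges share a vertex, or the edges are pairwise disjoint and $b < \min(a,|E(\hyp)|)$, then Maker wins.

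\emph{First subcase: two edges share a vertex.} Say $\{x,y\}, \{x,z\} \in E(\hyp)$. Maker places a token on $x$ first. Breaker's single response can cover at most one of $y,z$; on her next turn Maker uses a fresh token (available since $a \geq 2$) on the uncovered one, filling one of the two edges.

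\emph{Second subcase: edges pairwise disjoint but $b < \min(a,|E(\hyp)|)$.} Here $b<a$ and $|E(\hyp)| > b$, so by disjointness we can label $b+1$ distinct edges $e_i = \{x_i,y_i\}$ on $2(b+1)$ distinct vertices. Maker's strategy is to place her tokens on $x_1, x_2, \ldots, x_{b+1}$ in this order, which she can afford since $a \geq b+1$. The key observation is that after Maker plays $x_i$, Breaker is forced to place or move a token onto $y_i$ on his next turn; otherwise Maker's next move (taking $y_i$) would fill $e_i$. Thus, after Breaker's $b$-th reply, his $b$ tokens lie on $y_1,\ldots,y_b$. When Maker then plays $x_{b+1}$, Breaker has no fresh token and must slide one of the $y_j$'s onto $y_{b+1}$, freeing $y_j$. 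Maker now wins by occupying $y_j$ to fill $e_j$: if $a > b+1$ she uses a fresh token, while if $a = b+1$ she moves the token from some $x_i$ with $i \neq j$ (which exists since $b+1 \geq 2$). The degenerate case $b=0$ is immediate: Maker simply places her two tokens on both endpoints of any edge.

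\emph{Main obstacle.} The delicate point is the endgame of the second subcase when all of Maker's $a$ tokens and all of Breaker's $b$ tokens are simultaneously on the board: one has to check carefully that the forced token-slide by Breaker always opens a vertex whose matching partner is still occupied by Maker, and that Maker has a spare token (or can legally vacate an $x_i$) to claim it. Apart from this bookkeeping, every other step is straightforward.
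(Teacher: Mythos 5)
Your proof is correct and follows essentially the same route as the paper's: the pairing $E(\hyp)$ with Proposition~\ref{prop:pairing} for sufficiency, the shared-vertex double threat, and the forcing sequence $x_1,\ldots,x_{b+1}$ that overloads Breaker's $b$ tokens and frees some $y_j$ for Maker to take. Your extra bookkeeping in the endgame (distinguishing $a > b+1$ from $a = b+1$, and treating $b=0$ separately) is a valid, slightly more careful rendering of the paper's ``Maker can move one of her tokens to $y_i$'' step.
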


\begin{proof}
    In the case where $\hyp$ has two edges of the form $\{x,y\}$ and $\{y,z\}$, Maker can start by placing a token on $y$ and win the game with her next move by placing one on either $x$ or $z$. Now, suppose that $\hyp$ has pairwise disjoint edges $\{x_i,y_i\}$. If $b \geq \min(a,|E(\hyp)|)$, then Proposition \ref{prop:pairing} ensures that Breaker wins using the pairing $E(\hyp)$ which is obviously complete in $\hyp$. If $b < \min(a,|E(\hyp)|)$, then Maker can start by placing distinct tokens on $x_1,\ldots,x_b$, which forces Breaker to place distinct tokens on $y_1,\ldots,y_b$ respectively as an answer, then Maker can place a new token on $x_{b+1}$: necessarily, some $y_i$ with $i \in \interval{1}{b+1}$ will be unoccupied after Breaker's turn, so Maker can move one of her tokens to $y_i$ and win the game.
\end{proof}

Finally, let us mention a construction which can help generalizing results to all edge sizes.

\begin{proposition}\label{prop:thetaplus1}
    For any $k$-uniform hypergraph $\hyp$ on $n$ vertices, there exists a $(k+1)$-uniform hypergraph $\hyp'$ on $n+2$ vertices satisfying the following two properties:
    \begin{itemize}[nolistsep,noitemsep]
        \item For all $a,b \in \Zp$, Maker wins the $(a,b)$-game on $\hyp$ if and only if Maker wins the $(a+1,b+1)$-game on $\hyp'$.
        \item For all $t \in \Zp$, Maker has a strategy ensuring that she fills an edge during the first $t$ rounds of the $(\ast,\ast)$-game on $\hyp$ if and only if Maker has a strategy ensuring that she fills an edge during the first $t+1$ rounds of the $(\ast,\ast)$-game on $\hyp'$.
    \end{itemize}
    
\end{proposition}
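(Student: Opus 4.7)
The plan is to construct $\hyp'$ by adjoining two new vertices $u, v$ to $V(\hyp)$, augmenting every existing edge with $u$ to reach $(k+1)$-uniformity, and adding a large family of ``trap'' edges containing both $u$ and $v$. Concretely, I set $V(\hyp') := V(\hyp) \cup \{u,v\}$ and
\[
E(\hyp') := \{e \cup \{u\} : e \in E(\hyp)\} \;\cup\; \bigl\{\{u,v\} \cup S : S \subseteq V(\hyp),\ |S| = k-1\bigr\}.
\]
This gives a $(k+1)$-uniform hypergraph on $n+2$ vertices. The two added vertices play complementary roles: since every $\hyp'$-edge contains $u$, Maker must claim $u$ on her first move or Breaker seizes it and wins instantly; and since the trap edges are abundant, Breaker is pushed to spend a token on $v$ to neutralize them.

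For the first property I would prove both directions by explicit strategy translation. In the backward direction, given a winning Breaker strategy on $\hyp$, Breaker plays on $\hyp'$ as follows: if Maker's opening is not $u$, Breaker claims $u$ (a one-vertex transversal of $\hyp'$) and wins immediately; if Maker opens with $u$, Breaker claims $v$, killing every trap edge, and the remainder is equivalent to the $(a,b)$-game on $\hyp$ with Maker to move, where the given strategy wins. In the forward direction, Maker opens with $u$; if Breaker replies with $v$ the reduction to $\hyp$ is immediate and Maker applies her strategy. If Breaker instead plays some $z \in V(\hyp)$, Maker pivots by claiming $v$ in round~2 and then adaptively selects $k-1$ further vertices of $V(\hyp)\setminus\{z\}$ that avoid Breaker's tokens, thereby completing a trap $\{u,v\}\cup S$. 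A counting check shows that Breaker's remaining $b$ tokens cannot force every $(k-1)$-subset of $V(\hyp)\setminus\{z\}$ to contain one of his vertices, provided $n$ is not too small relative to $k$.

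The second property follows from analogous translations applied to the $t$-round versions of each game. For $(\Rightarrow)$, Maker prepends a round-1 claim of $u$ to her winning $t$-round strategy on $\hyp$, finishing in $t+1$ rounds; should Breaker refuse to answer with $v$, Maker instead completes a trap edge in at most $k+1 \leq t+1$ rounds, the inequality holding because filling any edge requires $t \geq k$ tokens. For $(\Leftarrow)$, Breaker cancels Maker's extra round by claiming $v$ as his first move, after which his $t$-round blocking strategy on $\hyp$ finishes the job. The main obstacle throughout is the forward direction of the first property when Breaker refuses $v$: verifying the trap-edge race rigorously requires careful counting, and the degenerate regime $n < 2k-1$ needs a separate treatment, where $\hyp$ is itself nearly trivial and the conclusion can be checked directly.
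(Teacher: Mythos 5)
Your construction is exactly the paper's (your $u,v$ are the paper's $v,\ov{v}$), and your main-line analysis is the same skeleton as the paper's proof: Maker must open on $u$ or Breaker seizes a one-vertex transversal; Breaker must answer on $v$ or Maker claims it and completes a trap; the residual game is the $(a,b)$-game on $\hyp$; and the one-round offset gives the $\tau$-statement. One simplification you missed: the ``counting check'' you invoke is unnecessary. Since the traps comprise \emph{all} $(k-1)$-subsets of $V(\hyp)$, Breaker can never block them by positional choice, only by exhausting free vertices; so the race reduces to noting that before each of Maker's first $k+1$ placements at most $2k$ vertices are occupied, whence a free vertex of $V(\hyp)$ exists exactly when $n+2 \geq 2k+1$. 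This is why the paper can let Maker place her remaining tokens ``arbitrarily inside $V(\hyp)$'' with no counting at all. (Also, in your backward direction you should say explicitly that if Maker ever vacates $u$ later, Breaker moves any token onto $u$ and wins; the paper phrases this as neither player ever moving the tokens on $v,\ov{v}$.)

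The genuine gap is the regime $n \leq 2k-2$, which you defer on the grounds that $\hyp$ is ``nearly trivial'' there and the conclusion ``can be checked directly''. That characterization is false for token games: because tokens can be moved, Maker can win on boards with $n \leq 2k-2$. For instance, for $k=3$ and $\hyp$ the complete $3$-uniform hypergraph on $4$ vertices, Maker wins the $(3,1)$-game (two reductions in the sense of the paper's Lemma~\ref{lem:reduced2} produce an edge of size~$1$), so the first property is a nontrivial assertion about $\hyp'$ in exactly the range you set aside, and ``check directly'' is not a proof — especially since this is where your trap threat loses its unconditional force (the free-vertex race can fail, so Breaker is not obviously punished for refusing $v$). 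The paper does not case-check either; it closes this regime inside the same argument, observing that when $\min\left(a+1,\frac{n+2}{2}\right)\leq k$ Maker cannot amass the $k+1$ tokens needed on $\hyp'$ (nor, correspondingly, $k$ tokens on $\hyp$), so that both games are Breaker wins and the equivalence holds vacuously; your write-up needs some argument of this kind to be complete. Note that for the second property the deferral is harmless, though you never say why: if Maker wins the $(\ast,\ast)$-game on $\hyp$ at all, she claims $k$ vertices while Breaker claims at least $k-1$, forcing $n \geq 2k-1$ automatically, so the degenerate regime only threatens the first property.
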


\begin{proof}
    Let $\hyp'$ be defined by: $V(\hyp')=V(\hyp) \cup \{v,\ov{v}\}$ where $v$ and $\ov{v}$ are new vertices, and $E(\hyp')=E_1 \cup E_2$ where $E_1 = \{e \cup \{v\} \mid e\in E(\hyp)\}$ and $E_2 = \{\{v,\ov{v}\} \cup U \mid U \subseteq V(\hyp), |U|=k-1\}$. It is clear that $\hyp'$ is $(k+1)$-uniform. Now, let $a,b \in \Zp$, and consider the $(a+1,b+1)$-game played on $\hyp'$. Recall that the $(a,b)$-game and the $(a+1,b+1)$-game both coincide with the $(\ast,\ast)$-game if $a$ and $b$ are large enough, so that we can address both assertions of the proposition at once. We claim that Maker and Breaker should place tokens on $v$ and $\ov{v}$ respectively during the first round and never move these two tokens afterwards, otherwise they get a losing position. Indeed:
    \begin{itemize}[nolistsep,noitemsep]
        \item[--] If $v$ is ever available to Breaker, then he wins the game by placing a token on $v$ and keeping it there afterwards since every edge of $\hyp'$ contains $v$.
        \item[--] After Maker has placed a token on $v$ as her first move, there are two possibilities. If $\min\left(a+1,\frac{|V(\hyp')|}{2}\right)\leq k$, then Maker will place at most $k$ tokens in total, which is not enough to fill an edge of size $k+1$, so Breaker could play arbitrary moves and win. If $\min\left(a+1,\frac{|V(\hyp')|}{2}\right)> k$ and $\ov{v}$ is ever available to Maker, then she wins by placing a token on $\ov{v}$ and then placing her remaining tokens arbitrarily inside $V(\hyp)$: indeed, since $|V(\hyp)|\geq 2k-1$ and Maker has $a-1 \geq k-1$ tokens to place outside $\{v,\ov{v}\}$, Maker will get tokens on $k-1$ vertices in $V(\hyp)$ and automatically fill an edge in $E_2 \subseteq E(\hyp')$.
    \end{itemize}
    Therefore, we can assume that Maker and Breaker place tokens on $v$ and $\ov{v}$ respectively during the first round, and that they never move these two tokens throughout. After that first round of play, Maker and Breaker have $a$ and $b$ remaining tokens to play with respectively. Moreover, all edges in $E_2$ contain $\ov{v}$ so Maker will never fill an edge in $E_2$, and all edges in $E_1$ contain $v$ so Maker filling an edge in $E_1$ is equivalent to Maker filling an edge in $E(\hyp)$. All in all, the situation after the first round of play boils down to the $(a,b)$-game on $\hyp$.
\end{proof}

\section{When Breaker has a single token}\label{section3}

We say a pair $(e_1,e_2)$ of distinct edges is {\em $a$-reducible} if $|e_1 \cap e_2| \geq |e_1|+|e_2|-a-2$. Note that we may have $e_1 \cap e_2 = \varnothing$.

\begin{lemma}\label{lem:reduced1}
     Let $\hyp$ be a hypergraph, and let $a \in \Zp$. If Maker wins the $(a,1)$-game on $\hyp$, then there exists either an edge of size at most 1 or an $a$-reducible pair of edges in $\hyp$.
\end{lemma}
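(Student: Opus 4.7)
The plan is to examine a specific play of the game, where Maker uses her assumed winning strategy $\sigma$ and Breaker responds with a strategy designed to detect the required structure.

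After each Maker move, let $M$ denote Maker's current tokens on the board and call ``completable threat'' any edge $e'$ with $|e'| \leq a$ and $|e' \setminus M| = 1$; let $E^*$ be the set of these threats and $H^* = \bigcap_{e' \in E^*}(e' \setminus M)$ the intersection of their holes. Observe that $E^*$ is exactly the set of edges Maker can fill in one move from state $M$: the bound $|e'| \leq a$ is forced because Maker has at most $a$ tokens, and conversely, if $|e'| \leq a$, she can always play the hole vertex, either by placing a new token or (when $|M|=a$) by moving one from $M \setminus e'$, a set of size $|M| - (|e'|-1) \geq 1$. I would define Breaker's strategy $\tau$ to place or move his single token to $v^*$ whenever $H^* = \{v^*\}$ (a legal move since $v^* \notin M$), and to pass otherwise.

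I would then play $\sigma$ against $\tau$. Since $\sigma$ wins, at some turn $t$ Maker fills an edge $e$ by playing its hole $v = e \setminus M_{t-1}$, with $|e| \leq a$; if $|e| \leq 1$ we are done, so assume $|e| \geq 2$. Just before Breaker's turn $t-1$, Maker's tokens are already $M_{t-1}$ and $e \in E^*$, so $E^* \neq \varnothing$. If $H^*$ were a singleton $\{v^*\}$, then $v^* \in e \setminus M_{t-1} = \{v\}$ would force $v^* = v$, and $\tau$ would place Breaker's token on $v$, contradicting the fact that $v$ must be free for Maker's winning move. Hence $H^* = \varnothing$, yielding another completable threat $e_2 \in E^*$ with hole $v_2 \neq v$.

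Setting $e_1 := e$ and $v_1 := v$, I observe that $e_1 \neq e_2$ since $v_1 \notin e_2$ (otherwise $v_1 \in e_2 \setminus M_{t-1} = \{v_2\}$, contradicting $v_1 \neq v_2$), and symmetrically $v_2 \notin e_1$. Then $(e_1 \setminus \{v_1\}) \cup (e_2 \setminus \{v_2\}) \subseteq M_{t-1}$ while $v_1 \neq v_2$ are both in $e_1 \cup e_2$, so $|e_1 \cup e_2| - 2 \leq |M_{t-1}| \leq a$, i.e., $|e_1 \cap e_2| \geq |e_1| + |e_2| - a - 2$, making $(e_1,e_2)$ an $a$-reducible pair. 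The main difficulty is pinpointing the right Breaker strategy: it must block any ``unanimous'' threat but will be unable to handle two threats with distinct holes simultaneously because Breaker has only one token; verifying that $E^*$ correctly captures Maker's one-move winning options under the token-moving rules (rather than just the classical placing setting) is the subtle but routine ingredient.
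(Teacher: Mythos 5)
Your proof is correct and takes essentially the same route as the paper's: the paper's one-line argument is that Maker must at some point hold all but one vertex of two distinct edges $e_1,e_2$ (a double threat), forcing $a \geq (|e_1|-1)+(|e_2|-1)-|e_1 \cap e_2|$, which is exactly the counting you perform at the winning move. Your explicit Breaker strategy of occupying the unique common hole, together with the verification that the one-move-fillable edges are precisely those with $|e'| \leq a$ and $|e' \setminus M| = 1$ under the token-moving rules, simply makes rigorous the ``Maker needs a double threat'' step that the paper leaves implicit.
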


\begin{proof}
    If no edge has size 0 or 1, then Maker needs tokens on all but one of the vertices of two distinct edges $e_1$ and $e_2$ at some point, to create a double threat. Since Maker has $a$ tokens, this can only happen if $a \geq (|e_1|-1)+(|e_2|-1)-|e_1 \cap e_2|$.
\end{proof}

\begin{lemma}\label{lem:reduced2}
    Let $\hyp$ be a hypergraph containing an $a$-reducible pair of edges $(e_1,e_2)$, and let $\hyp'$ be the ``reduced" hypergraph defined by $V(\hyp')=V(\hyp)$ and $E(\hyp')=(E(\hyp)\setminus\{e_1,e_2\})\cup\{e_1 \cap e_2\}$. Then $\hyp$ and $\hyp'$ have the same outcome for the $(a,1)$-game.
\end{lemma}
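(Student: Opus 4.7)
The plan is to show both directions of the equivalence, exploiting the fact that every token configuration filling $e_1$ or $e_2$ in $\hyp$ necessarily covers $X := e_1 \cap e_2$, and that the $a$-reducibility condition, which rewrites as $a \geq |e_1 \cup e_2| - 2$, gives Maker just enough ammunition to set up a double threat on $e_1$ and $e_2$ (which Breaker, with his single token, cannot block on both sides at once).

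For the $(\Rightarrow)$ direction, I would take a winning Maker strategy $\sigma$ on $\hyp$ and run it verbatim on $\hyp'$. At the moment $\sigma$ fills some edge $e \in E(\hyp)$, either $e \notin \{e_1,e_2\}$, in which case $e \in E(\hyp')$ is filled directly, or $e \in \{e_1,e_2\}$, in which case Maker's tokens contain $e \supseteq X$, so she has simultaneously filled the new edge $X$ of $\hyp'$. Either way, $\sigma$ is still a winning Maker strategy on $\hyp'$.

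For the $(\Leftarrow)$ direction, Maker plays her winning $\hyp'$-strategy $\sigma'$ on $\hyp$. If $\sigma'$ ever fills an edge $e \in E(\hyp') \setminus \{X\} \subseteq E(\hyp)$, she wins on $\hyp$ directly. The genuinely new case is when $\sigma'$ culminates in filling $X$; at that moment Maker's tokens cover $X$ and Breaker's single token sits on some $t \notin X$. Setting $A := e_1 \setminus e_2$ and $B := e_2 \setminus e_1$, Maker now enters a ``double-threat phase'': keeping her $X$-tokens in place, she uses her remaining tokens to cover $(A \cup B) \setminus \{u,v\}$ for well-chosen $u \in A$ and $v \in B$ with $u \neq v$ (automatic since $A \cap B = \varnothing$), using $|X| + |A| + |B| - 2 = |e_1 \cup e_2| - 2 \leq a$ tokens in all. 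Once this is set up, Breaker's single token covers at most one of $u$ and $v$, so on her next move Maker claims the other and completes $e_1$ or $e_2$.

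The main obstacle lies in executing the final move when the token budget is tight, i.e.\ $a = |e_1 \cup e_2| - 2$: Maker cannot drop in a fresh token and must instead relocate an already-placed one to the unblocked hole without vacating a vertex of the target edge. The plan is to argue that a movable ``spare'' token can always be found outside the aimed edge and to choose $(u,v)$ depending on the location of Breaker's token $t$ (distinguishing the cases $t \in A$, $t \in B$, and $t \notin e_1 \cup e_2$) so that $t$ never coincides with a vertex Maker has planned to fill. The degenerate configurations where $A = \varnothing$ or $B = \varnothing$ collapse the problem, since then $X \in \{e_1, e_2\}$ and filling $X$ in $\hyp'$ already fills an edge of $\hyp$, so the extension phase is unnecessary.
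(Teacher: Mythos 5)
Your two directions follow the paper's skeleton: the forward direction is verbatim the paper's observation, and in the converse you likewise run the $\hyp'$-strategy until $X=e_1\cap e_2$ is covered. The genuine gap is exactly the step you flag as ``the main obstacle'' and then leave as a plan: the claim that when $a=|e_1\cup e_2|-2$ a movable spare token ``can always be found outside the aimed edge'' is false when a private part is a singleton. If $e_1\setminus e_2=\{u\}$, then after your set-up the tokens occupy $X\cup\bigl((e_2\setminus e_1)\setminus\{v\}\bigr)$, which lies entirely inside $e_2$; Breaker parks his token on $u$ permanently, and you can never occupy $v$ without vacating a vertex of $e_2$ --- a deadlock, not a double threat. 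Worse, no case analysis on $t$ can rescue this, because in these extremal configurations the equivalence itself breaks: take $e_1=\{x,u\}$, $e_2=\{x,z\}$ and $a=1$ (a $1$-reducible pair, as $1\geq 2+2-1-2$); Maker wins the $(1,1)$-game on $\hyp'$, which contains the edge $\{x\}$, but she cannot win on $\hyp$ at all since filling an edge of size $2$ requires two simultaneous tokens. The same phenomenon persists with $|e_2\setminus e_1|=2$ (e.g.\ $e_1=\{x,u\}$, $e_2=\{x,z_1,z_2\}$, $a=2$: Breaker simply camps on $u$ or $x$). So the tight-budget endgame cannot be completed along your lines unless both private parts have size at least $2$ or $a$ strictly exceeds $|e_1\cup e_2|-2$. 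A secondary unresolved point: you choose $(u,v)$ once, ``at that moment'', but Breaker's token roams during the build-up phase and can squat on vertices you intend to cover, so the set-up itself needs an adaptive rule, not a static target set.

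For comparison, the paper avoids the simultaneous static double threat altogether. After securing $X$, its Maker plays a forcing sequence: in each round she places one further token on whichever of $e_1$, $e_2$ does not currently contain Breaker's token; eventually she holds all but one vertex of, say, $e_1$, which pins Breaker's unique token on that last vertex for the remainder of the game; she then fills $e_2$ minus one vertex at leisure and finishes by relocating a token sitting on $e_1\setminus e_2$ (hence off $e_2$) onto the last vertex of $e_2$. This sequential pinning is robust to the roaming-token issue that your static plan ignores. Note, however, that the paper's final relocation also tacitly presupposes a token off $e_2$ in the tight case, i.e.\ $|e_1\setminus e_2|\geq 2$ --- the very corner where your obstacle lives, which the paper's write-up passes over silently (harmlessly for its later applications, where all reduced pairs have private parts of size at least $2$, but fatally for the counterexamples above). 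In summary: your high-level route is the paper's, your endgame variant does work when both $|e_1\setminus e_2|,|e_2\setminus e_1|\geq 2$, but the flagged obstacle is a genuine gap that cannot be closed as stated, since the lemma fails in precisely those degenerate tight configurations.
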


\begin{proof}
    If Maker wins the $(a,1)$-game on $\hyp$, then the same winning strategy also works on $\hyp'$ since each edge of $\hyp'$ is a subset of some edge of $\hyp$. Conversely, suppose Maker wins the $(a,1)$-game on $\hyp'$ and applies that winning strategy on $\hyp$. She will either win on $\hyp$, or get tokens on all vertices of $e_1 \cap e_2$. In the latter case, she keeps these tokens on $e_1 \cap e_2$ throughout and, in every further round, she uses one of her other tokens to place it on whichever of $e_1$ or $e_2$ does not currently contain Breaker's token. At some point, she will have tokens on all but one of the vertices of, say, $e_1$, and will not move them until the very last move of the game, thus forcing Breaker to place his token on the last vertex of $e_1$ and keep it there until the end. Maker then places tokens on all but one of the vertices of $e_2$, and finally uses a token that is not on $e_2$ to move it on the final vertex of $e_2$ and win the game.
\end{proof}

\begin{theorem}\label{theo:breaker1}
    Deciding the outcome of the $(a,1)$-game can be done in polynomial time $O(m^3)$ where $m$ is the number of edges, even if $a$ is part of the input.
\end{theorem}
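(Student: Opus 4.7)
The plan is to turn Lemmas~\ref{lem:reduced1} and~\ref{lem:reduced2} into an iterative reduction algorithm. Starting from $\hyp$, I repeatedly execute one step: if $E(\hyp)$ contains an edge of size at most $1$, output \emph{Maker wins} and halt; otherwise, search for an $a$-reducible pair $(e_1,e_2)$ in $E(\hyp)$ and, if one is found, replace $e_1$ and $e_2$ by the single edge $e_1 \cap e_2$ as in Lemma~\ref{lem:reduced2}; if no edge of size at most $1$ and no $a$-reducible pair exists, output \emph{Breaker wins} and halt.

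Correctness follows directly from the two lemmas. Each reduction preserves the outcome by Lemma~\ref{lem:reduced2}, so it suffices to analyse the hypergraph $\hyp^{\ast}$ on which the algorithm terminates. If $\hyp^{\ast}$ contains an edge of size at most $1$, Maker trivially wins in at most one move ($\varnothing \in E(\hyp^{\ast})$ makes Maker win before play starts, and a singleton is filled on her first move regardless of Breaker's single token). Otherwise, the stopping condition is precisely the negation of the conclusion of Lemma~\ref{lem:reduced1}, so the contrapositive of that lemma shows that Maker loses on $\hyp^{\ast}$. Propagating along the chain of reductions yields the same outcome on the original $\hyp$. Note that the parameter $a$ enters only through the $O(1)$ reducibility test on each pair, so the algorithm is insensitive to the magnitude of $a$, as required.

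For the running time, each reduction strictly decreases $|E(\hyp)|$ by one, bounding the number of iterations by $m$. I maintain an auxiliary table storing $|e_i|$ and $|e_i \cap e_j|$ for every current pair of edges. With this table, checking whether a given pair is $a$-reducible takes $O(1)$, so scanning all $O(m^2)$ pairs to find one, or to certify that none exists, costs $O(m^2)$ per iteration. After a reduction, only the row and column associated to the newly created edge $e_1 \cap e_2$ need to be refreshed, which is $O(m)$ edge operations. Aggregating over at most $m$ iterations yields the announced $O(m^3)$ bound, any polynomial-in-$n$ cost of computing individual set intersections being absorbed into the initial preprocessing.

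The substantive content of the theorem lies entirely in Lemmas~\ref{lem:reduced1} and~\ref{lem:reduced2}; the only mildly delicate point in the proof itself is precisely the incremental bookkeeping described above, since a naive implementation that recomputes intersections from scratch after each reduction would blow the $O(m^3)$ budget.
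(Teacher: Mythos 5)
Your proposal is correct and is essentially the paper's own proof: the same loop of reductions via Lemma~\ref{lem:reduced2}, the same terminal case analysis via Lemma~\ref{lem:reduced1} (with its contrapositive giving Breaker's win), and the same $m$ iterations $\times$ $O(m^2)$ pair-scan accounting for the $O(m^3)$ bound. Your incremental maintenance of edge sizes and pairwise intersection sizes is a slightly more careful implementation detail than the paper bothers to spell out, but it does not change the argument.
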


\begin{proof}
    While there exists an $a$-reducible pair of edges, perform the reduction from Lemma \ref{lem:reduced2}, which is outcome-neutral. At the end of this process, Lemma \ref{lem:reduced1} concludes: if there exists an edge of size 0 or 1, then Maker wins, otherwise Breaker wins. Finding an $a$-reducible pair is done in $O(m^2)$ time, and at most $m-1$ reductions are performed in total.
\end{proof}

Having solved token positional games where Breaker has a single token, we now go further by determining the minimum number of edges that Maker needs to win when her number of tokens equals the size of the edges.

\begin{proposition}
    Let $\hyp$ be a $k$-uniform hypergraph with $k \in \Zp$. If Maker wins the $(k,1)$-game on $\hyp$, then $|E(\hyp)|\geq \left\lfloor \frac{k}{2} \right\rfloor + 1$.
\end{proposition}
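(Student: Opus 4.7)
The plan is to exploit the reduction framework of Lemmas~\ref{lem:reduced1} and~\ref{lem:reduced2} via Theorem~\ref{theo:breaker1}. The cases $k \in \{0,1\}$ are trivial, since Maker needs at least one edge to win at all and $\lfloor k/2 \rfloor + 1 = 1$ there, so I would assume $k \geq 2$. Since Maker wins the $(k,1)$-game on $\hyp$, iteratively performing $k$-reductions produces a hypergraph containing an edge $e^*$ of size at most $1$; as $\hyp$ is $k$-uniform with $k \geq 2$, this $e^*$ cannot be an original edge and therefore carries a nontrivial derivation history.

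The central object I would construct is the rooted binary derivation tree $T$ of $e^*$. Its root is $e^*$, each internal node $f$ equals $f_1 \cap f_2$ where $f_1, f_2$ are the two edges reduced at the corresponding step, and each leaf is an edge of $\hyp$. A key structural remark is that the leaves of $T$ are pairwise distinct edges of $\hyp$: indeed, once an edge participates in a reduction it is removed from the current hypergraph, so no edge of $\hyp$ can appear twice in the history. Denoting by $\ell$ the number of leaves of $T$, this gives $\ell \leq |E(\hyp)|$.

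The heart of the argument is the inductive bound $|f| \geq k + 2 - 2\ell(f)$ for every node $f$ of $T$, where $\ell(f)$ is the number of leaves of the subtree rooted at $f$. The base case $\ell(f) = 1$ holds because $f$ is then an original edge of size $k$. For the inductive step with $f = f_1 \cap f_2$, the $k$-reducibility of $(f_1, f_2)$ gives $|f| \geq |f_1| + |f_2| - k - 2$, and applying the inductive hypothesis to the two children yields $|f| \geq (k + 2 - 2\ell(f_1)) + (k + 2 - 2\ell(f_2)) - k - 2 = k + 2 - 2\ell(f)$.

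Applying this at the root with $|e^*| \leq 1$ gives $\ell \geq (k+1)/2$, which forces $\ell \geq \lfloor k/2 \rfloor + 1$ since $\ell$ is an integer (verified separately for $k$ odd and $k$ even), and the conclusion follows from $|E(\hyp)| \geq \ell$. There is no serious obstacle: the only point worth caring about is the claim that the leaves of $T$ correspond to pairwise distinct edges of $\hyp$, which is immediate from how reductions consume both of their input edges.
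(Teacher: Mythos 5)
Your proof is correct and is essentially the paper's own argument with different bookkeeping: since your derivation tree has one internal node per reduction, the bound $|f| \geq k+2-2\ell(f)$ is exactly the paper's inductive claim $r(p) \geq \frac{k-p}{2}$ (via $\ell = r+1$ for the corresponding sub-derivation), and both inductions are driven by the same $k$-reducibility inequality $|e_1 \cap e_2| \geq |e_1|+|e_2|-k-2$. The only cosmetic difference is the final count: you use distinctness of the leaves to conclude $|E(\hyp)| \geq \ell$, whereas the paper uses that each reduction decreases the number of edges by at least one to conclude $|E(\hyp)| \geq r+1$, which is the same bound.
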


\begin{proof}
    Assume $k \geq 2$, as the result is trivial for $k=1$. It suffices to show that at least $\left\lfloor \frac{k}{2} \right\rfloor$ reductions from Lemma \ref{lem:reduced2} need to be performed on $\hyp$ before an edge of size 0 or 1 is created. Let $r(p)$ denote the minimum number of reductions that must be made from $\hyp$ to get an edge of size at most $p$. We want to show that $r(1) \geq \left\lfloor \frac{k}{2} \right\rfloor$ i.e. $r(1) \geq \frac{k-1}{2}$. We actually claim that:
    $$r(p) \geq \frac{k-p}{2}, \text{\quad for all $p \in \interval{1}{k}$}.$$
    We prove this claim by induction on $k-p$. For $p=k$, we obviously have $r(k)=0$ because $\hyp$ is $k$-uniform. Now, let $p\in\interval{1}{k-1}$, and assume that $r(p') \geq \frac{k-p'}{2}$ for all $p' \in \interval{p+1}{k}$. Suppose an edge $e=e_1 \cap e_2$ of size at most $p$ is created from a $k$-reducible pair $(e_1,e_2)$. Defining $p_1=|e_1|$ and $p_2=|e_2|$, we have $p \geq |e| = |e_1 \cap e_2| \geq |e_1|+|e_2|-k-2 = p_1+p_2-k-2$. Since the creations of $e_1$ and $e_2$ have needed at least $r(p_1)$ and $r(p_2)$ reductions themselves respectively, and using the induction hypothesis, we get:
    $$ r(p) \geq 1 + r(p_1)+r(p_2) \geq 1 +\frac{k-p_1}{2} + \frac{k-p_2}{2} \geq 1 +\frac{k-p_1}{2} + \frac{p_1-p-2}{2} = \frac{k-p}{2}. $$
    This proves the claim and the proposition.
\end{proof}

\begin{proposition}\label{prop:k-unif-kvs1}
    For all $k \geq 1$, there exists a $k$-uniform hypergraph $\hyp$ with $|E(\hyp)| = \left\lfloor \frac{k}{2} \right\rfloor + 1$ such that Maker wins the $(k,1)$-game on $\hyp$.
\end{proposition}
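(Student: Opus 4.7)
The plan is to construct $\hyp$ explicitly so that repeated application of Lemma \ref{lem:reduced2} transforms it, in exactly $\lfloor k/2 \rfloor$ steps, into a hypergraph with a single edge of size at most $1$; this suffices since Maker wins trivially on such a hypergraph and Lemma \ref{lem:reduced2} then transfers the win back to $\hyp$. The main design constraint is that, to go from $k$-uniform edges down to an edge of size $c := k - 2\lfloor k/2 \rfloor \in \{0,1\}$ in only $\lfloor k/2 \rfloor$ reductions, the intersection sizes must decrease by exactly $2$ at each step, so the $k$-reducibility inequality has to be saturated throughout the chain.

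Setting $m = \lfloor k/2 \rfloor$, I will take pairwise disjoint vertex sets: a core $C$ of size $c$, shells $T_1, \ldots, T_m$ each of size $2$, and side pieces $A_1, \ldots, A_m$ with $|A_i| = 2(m-i+1)$. Then I will define
$$ e_i = C \cup T_1 \cup \cdots \cup T_{i-1} \cup A_i \quad (1 \leq i \leq m), \qquad e_{m+1} = C \cup T_1 \cup \cdots \cup T_m. $$
Each edge has size exactly $k$, so the resulting hypergraph $\hyp$ is $k$-uniform with exactly $m+1$ edges as required. Intuitively, $e_{m+1}$ is the ``terminal'' edge that accumulates all $m$ shells, while each $e_i$ with $i \leq m$ contains only the first $i-1$ shells together with its own unique side piece $A_i$.

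The reductions will proceed from right to left. With the convention $f_0 := e_{m+1}$, the claim is that, for every $1 \leq i \leq m$, the pair $(e_{m-i+1}, f_{i-1})$ is $k$-reducible and reduces to the edge $f_i := C \cup T_1 \cup \cdots \cup T_{m-i}$ of size $k - 2i$. This follows from the pairwise disjointness of $C$, the $T_j$'s and the $A_j$'s, which guarantees both that $A_{m-i+1}$ lies outside $f_{i-1}$ and that the shell $T_{m-i+1} \subseteq f_{i-1}$ is absent from $e_{m-i+1}$; the inequality $|e_{m-i+1} \cap f_{i-1}| \geq |e_{m-i+1}| + |f_{i-1}| - k - 2$ then holds with equality. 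After $m$ reductions we are left with the single edge $f_m = C$ of size $c \leq 1$, and Maker wins trivially.

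The main obstacle is to arrange the edges so that every reduction in the chain both meets the $k$-reducibility threshold and strips away exactly the intended vertices; once the shells-and-sides structure is set up, the pairwise disjointness makes this a routine verification. A minor sanity check is that the quantities $|A_i| = 2(m-i+1)$ are positive for $i \leq m$, which is immediate, and the small edge cases $k \in \{1,2,3\}$ can be confirmed directly to align with earlier examples.
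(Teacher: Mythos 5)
Your proof is correct and follows essentially the same route as the paper's: both build a $k$-uniform hypergraph with $\lfloor k/2\rfloor + 1$ edges arranged so that $\lfloor k/2\rfloor$ successive applications of Lemma~\ref{lem:reduced2} shrink a nested chain of intersections by $2$ per reduction until an edge of size at most $1$ appears. The only immaterial difference is parity bookkeeping: your chain saturates the $k$-reducibility inequality at every step and ends in the empty edge when $k$ is even, whereas the paper's nested suffixes of $\{u_1,\ldots,u_k\}$ always terminate in the singleton $\{u_k\}$.
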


\begin{proof}
    We define $\hyp$ as follows (see Figure \ref{fig:k-unif-kvs1}):
    $$V(\hyp)=\{u_1,\ldots,u_k\} \cup \bigcup_{i=1}^{\left\lfloor \frac{k}{2} \right\rfloor} \left\{ v^{(i)}_1,\ldots,v^{(i)}_{k-1-2\left(\left\lfloor \frac{k}{2} \right\rfloor - i \right)} \right\}
    \text{\quad and \quad}
    E(\hyp)=\{e_0,e_1,\ldots,e_{\left\lfloor \frac{k}{2} \right\rfloor}\},$$
    where $e_0=\{u_1,\ldots,u_k\}$ and
    $e_i= \left\{ v^{(i)}_1,\ldots,v^{(i)}_{k-1-2\left(\left\lfloor \frac{k}{2} \right\rfloor - i \right)}, u_{k-2\left(\left\lfloor \frac{k}{2} \right\rfloor - i \right)},\ldots,u_k \right\}$
    for all $i\in\interval{1}{\left\lfloor \frac{k}{2} \right\rfloor}$.
    \\ Note that $\hyp$ is $k$-uniform. We now perform reductions until we get an edge of size 1. Define $e'_i=\left\{ u_{k-2\left(\left\lfloor \frac{k}{2} \right\rfloor - i \right)} , \ldots , u_k \right\}$ for all $i\in\interval{1}{\left\lfloor \frac{k}{2} \right\rfloor}$. We have:
    \begin{itemize}
        \item $e_0 \cap e_1 = e'_1$, and $$|e_0 \cap e_1|=|e'_1| = 2 \left( \left\lfloor \frac{k}{2} \right\rfloor -1 \right) +1 \geq k-2 = |e_0|+|e_1|-k-2.$$
        \item For all $i\in\interval{2}{\left\lfloor \frac{k}{2} \right\rfloor}$: $e'_{i-1} \cap e_i = e'_i$, and $$|e'_{i-1} \cap e_i|=|e'_i| = 2 \left( \left\lfloor \frac{k}{2} \right\rfloor -i \right) +1 \geq 2 \left( \left\lfloor \frac{k}{2} \right\rfloor -(i-1) \right) +1 -2 = |e'_{i-1}|+|e_i|-k-2.$$
    \end{itemize}
    Therefore, $(e_0,e_1)$ is a $k$-reducible pair, and after replacing $e_0$ and $e_1$ with the edge $e'_1=e_0 \cap e_1$ we get $(e'_1,e_2)$ as a $k$-reducible pair, and after replacing $e'_1$ and $e_2$ with the edge $e'_2=e'_1 \cap e_2$ we get $(e'_2,e_3)$ as a $k$-reducible pair, etc. until we get the edge $e'_{\left\lfloor \frac{k}{2} \right\rfloor}=\{u_k\}$ of size 1. By Lemma \ref{lem:reduced2}, the outcome is preserved throughout the reductions, so Maker wins the $(k,1)$-game on $\hyp$.
\end{proof}

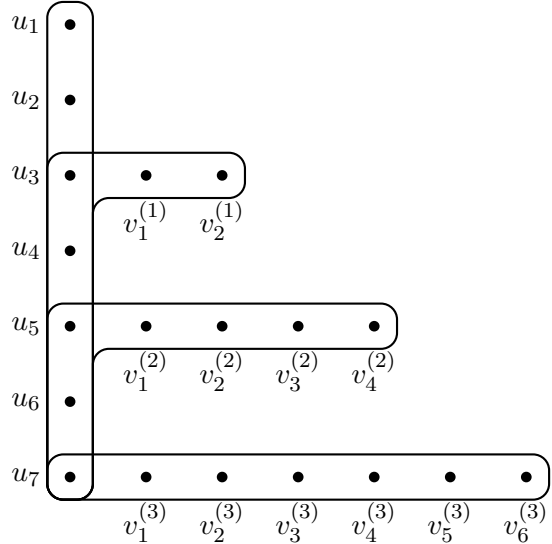
\begin{figure}[h]
    \centering
\begin{tikzpicture}[
    vertex/.style={circle, draw, fill=white, inner sep=1.5pt},
    hyperedge/.style={draw, thick, rounded corners=6pt}
]

\draw (0,0) node[v] (u7) {} node[left = .25]{$u_7$};
\draw (0,1) node[v] (u6) {} node[left = .25]{$u_6$};
\draw (0,2) node[v] (u5) {} node[left = .25]{$u_5$};
\draw (0,3) node[v] (u4) {} node[left = .25]{$u_4$};
\draw (0,4) node[v] (u3) {} node[left = .25]{$u_3$};
\draw (0,5) node[v] (u2) {} node[left = .25]{$u_2$};
\draw (0,6) node[v] (u1) {} node[left = .25]{$u_1$};

\draw (1,4) node[v] (v11) {} node[below = .2]{$v^{(1)}_1$};
\draw (2,4) node[v] (v12) {} node[below = .2]{$v^{(1)}_2$};

\draw (1,2) node[v] (v21) {} node[below = .2]{$v^{(2)}_1$};
\draw (2,2) node[v] (v22) {} node[below = .2]{$v^{(2)}_2$};
\draw (3,2) node[v] (v23) {} node[below = .2]{$v^{(2)}_3$};
\draw (4,2) node[v] (v24) {} node[below = .2]{$v^{(2)}_4$};

\draw (1,0) node[v] (v31) {} node[below = .2]{$v^{(3)}_1$};
\draw (2,0) node[v] (v32) {} node[below = .2]{$v^{(3)}_2$};
\draw (3,0) node[v] (v33) {} node[below = .2]{$v^{(3)}_3$};
\draw (4,0) node[v] (v34) {} node[below = .2]{$v^{(3)}_4$};
\draw (5,0) node[v] (v35) {} node[below = .2]{$v^{(3)}_5$};
\draw (6,0) node[v] (v34) {} node[below = .2]{$v^{(3)}_6$};

\draw[hyperedge]  (-0.3,-0.3) --  (-0.3,4.3) --  (2.3,4.3) --  (2.3,3.7) --  (0.3,3.7) --  (0.3,-0.3) --  cycle;
\draw[hyperedge]  (-0.3,-0.3) --  (-0.3,2.3) --  (4.3,2.3) --  (4.3,1.7) --  (0.3,1.7) --  (0.3,-0.3) --  cycle;
\draw[hyperedge]  (-0.3,-0.3) --  (-0.3,0.3) --  (6.3,0.3) --  (6.3,-0.3) --   cycle;
\draw[hyperedge]  (-0.3,-0.3) --  (-0.3,6.3) --  (0.3,6.3) --  (0.3,-0.3) --   cycle;

\end{tikzpicture}
\caption{The construction from Proposition~\ref{prop:k-unif-kvs1} for $k = 7$.}
    \label{fig:k-unif-kvs1}
\end{figure}

\section{When Breaker has unlimited tokens}\label{section4}

This section addresses the $(a,\ast)$-game. As such, we introduce the following notation: for any hypergraph $\hyp$, we define $\theta(\hyp)$ as the minimum $a$ such that Maker wins the $(a,\ast)$-game on $\hyp$. If Breaker wins the $(\ast,\ast)$-game on $\hyp$, then $\theta(\hyp)=\infty$.

We relate our study of $\theta$ with that of another hypergraph parameter, which has been studied a lot in the literature of positional games~\cite{chvatal1978,HEFETZ2007213,HEFETZ200939,Bonnet2016,Bonnet2017,BUJTAS202410}: for any hypergraph $\hyp$, let $\tau(\hyp)$ denote the minimum $t$ such that Maker has a strategy to win the $(\ast,\ast)$-game on $\hyp$ in at most $t$ rounds of play (that is, Maker fills an edge after having played at most $t$ moves). If Breaker wins the $(\ast,\ast)$-game on $\hyp$, then $\tau(\hyp)=\infty$.

When Maker wins the $(\ast,\ast)$-game on a hypergraph $\hyp$, we can think of two ways to change the rules and complicate her task. The first is to give her limited time to fill an edge: this corresponds to the study of $\tau(\hyp)$. The second is to limit her number of tokens (while Breaker's remains unlimited), and work out the least amount so that she still wins: this corresponds to the study of $\theta(\hyp)$. The two quantities are linked through the following inequalities, where $\textup{ark}(\hyp)$ denotes the {\em antirank} of $\hyp$ i.e. the size of its smallest edge.

\begin{proposition}\label{prop:theta}
    Let $\hyp$ be a hypergraph, and suppose that $\theta(\hyp)$ and $\tau(\hyp)$ are finite i.e. Maker wins the $(\ast,\ast)$-game on $\hyp$. Then $\textup{ark}(\hyp) \leq \theta(\hyp) \leq \tau(\hyp) \leq \left\lceil\frac{|V(\hyp)|}{2}\right\rceil$.
\end{proposition}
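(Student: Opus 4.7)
The plan is to address the three inequalities separately, each being essentially a short observation about the nature of the three quantities involved.

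For the leftmost inequality $\textup{ark}(\hyp) \leq \theta(\hyp)$, the key point is that filling any edge $e$ of $\hyp$ requires Maker to have one of her tokens on every vertex of $e$ simultaneously. In particular, Maker needs at least $|e|$ tokens, and since this must hold for the edge she ultimately fills, she needs at least $\textup{ark}(\hyp)$ tokens to win. Hence any winning value $a$ for Maker in the $(a,\ast)$-game satisfies $a \geq \textup{ark}(\hyp)$, giving the bound.

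For the middle inequality $\theta(\hyp) \leq \tau(\hyp)$, I would argue directly from the definition of $\tau(\hyp)$. Let $\strat$ be a Maker strategy in the $(\ast,\ast)$-game that guarantees filling an edge within the first $\tau(\hyp)$ of her own moves. Applying $\strat$ in the $(a,\ast)$-game with $a = \tau(\hyp)$, Maker simply places a new token (never moving a previously placed one) on each of her turns, following $\strat$ exactly; since Breaker has unlimited tokens in both games, his feasible moves are the same, so $\strat$ still wins for Maker, and she runs out of tokens only after the game is already won. Hence Maker wins the $(\tau(\hyp),\ast)$-game, so $\theta(\hyp) \leq \tau(\hyp)$.

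For the rightmost inequality $\tau(\hyp) \leq \lceil |V(\hyp)|/2 \rceil$, I would use the fact that Maker wins the $(\ast,\ast)$-game. In the $(\ast,\ast)$-game, each round both players claim a new vertex, so after $\lceil |V(\hyp)|/2 \rceil$ Maker moves (and at most that many Breaker moves) all vertices of $\hyp$ are claimed. If Maker had not filled an edge by then, Breaker would win, contradicting the assumption. Therefore any Maker winning strategy in the $(\ast,\ast)$-game wins within $\lceil |V(\hyp)|/2 \rceil$ of her own moves, yielding the desired bound on $\tau(\hyp)$.

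None of the three steps presents a real obstacle; the statement is essentially a collection of sanity-check relationships between the parameters. The only subtlety worth writing down carefully is that, in the middle inequality, Maker's strategy $\strat$ from the $(\ast,\ast)$-game transfers verbatim to the $(\tau(\hyp),\ast)$-game because both players' legal moves coincide during the first $\tau(\hyp)$ rounds (neither player has yet exhausted their stock of fresh tokens, so nothing forces either of them to move an already-placed token).
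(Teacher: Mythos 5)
Your proposal is correct and matches the paper's proof in all three steps: the antirank bound because Maker needs a token on each vertex of the edge she fills, the transfer of a $\tau(\hyp)$-round winning strategy to the $(\tau(\hyp),\ast)$-game by using a fresh token per move, and the $\lceil|V(\hyp)|/2\rceil$ bound because all vertices are claimed by then. The paper states these same observations more tersely; your added remark that both players' legal moves coincide during the transfer is a fair elaboration, not a deviation.
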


\begin{proof}
    Maker needs at least $\textup{ark}(\hyp)$ tokens to fill an edge, hence the first inequality. Moreover, Maker's strategy to win the $(\ast,\ast)$-game on $\hyp$ in at most $t$ rounds is also winning for the $(t,\ast)$-game on $\hyp$, using a new token for each move. Finally, it is obvious that all vertices have been claimed after Maker has made $\lceil|V(\hyp)|/2\rceil$ moves.
\end{proof}

It is easy to see that there exist hypergraphs for which all quantities from Proposition \ref{prop:theta} are equal: for instance, that is the case if $\hyp$ has $2k-1$ vertices and $\binom{2k-1}{k}$ edges corresponding to all possible subsets of size $k$. But how large can the gaps between these quantities be in general? We first note that the 2-uniform case is straightforward.

\begin{proposition}
    Let $\hyp$ be a 2-uniform hypergraph, and suppose that $\theta(\hyp)$ and $\tau(\hyp)$ are finite i.e. Maker wins the $(\ast,\ast)$-game on $\hyp$. Then $\theta(\hyp) = \tau(\hyp) = 2$.
\end{proposition}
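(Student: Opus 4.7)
The plan is to squeeze both quantities between $2$ and $2$ using Proposition~\ref{prop:theta} on one side and a direct Maker strategy on the other. Since $\hyp$ is 2-uniform, $\textup{ark}(\hyp)=2$ (we may assume $E(\hyp)\neq\varnothing$, otherwise Maker would not win the $(\ast,\ast)$-game), so Proposition~\ref{prop:theta} immediately gives the lower bound $2 = \textup{ark}(\hyp) \leq \theta(\hyp) \leq \tau(\hyp)$. The remaining task is therefore to show $\tau(\hyp) \leq 2$, i.e.\ that Maker can fill an edge after at most two of her own moves.

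To this end, I would invoke Proposition~\ref{prop:2-unif} applied to the $(\ast,\ast)$-game: since Maker wins, it cannot be the case that the edges of $\hyp$ are pairwise disjoint (otherwise Breaker, with his unlimited supply, would satisfy $b \geq \min(\ast,|E(\hyp)|)$ and win). Hence $\hyp$ contains two distinct edges of the form $\{x,y\}$ and $\{y,z\}$ sharing a common vertex~$y$.

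Maker's two-move winning strategy is then immediate and mirrors the first case in the proof of Proposition~\ref{prop:2-unif}: on move~1 she claims~$y$, creating a double threat on both $\{x,y\}$ and $\{y,z\}$. Whichever of $x$ or $z$ Breaker claims (or, more generally, leaves free), Maker claims the other on move~2 and fills an edge. This gives $\tau(\hyp)\leq 2$, and combining with the lower bound yields $\theta(\hyp)=\tau(\hyp)=2$.

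I don't anticipate any real obstacle here: the argument is a direct combination of the $\textup{ark}$ lower bound from Proposition~\ref{prop:theta} and the characterization in Proposition~\ref{prop:2-unif}, together with the textbook two-move fork on the shared vertex. The only subtlety to mention is the trivial edge-case where $\hyp$ has no edges, which is excluded by the hypothesis that Maker wins.
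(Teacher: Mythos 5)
Your proposal is correct and follows essentially the same route as the paper: the paper's proof likewise derives two intersecting edges from Proposition~\ref{prop:2-unif} (noting pairwise disjoint edges would let Breaker win with unlimited tokens) and concludes that Maker wins in two rounds via the fork on the shared vertex, with the lower bound $2 \leq \theta(\hyp) \leq \tau(\hyp)$ left implicit as it follows immediately from Proposition~\ref{prop:theta}. You merely spell out the squeeze more explicitly, which is fine.
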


\begin{proof}
    Since Maker wins the $(\ast,\ast)$-game on $\hyp$, there exist two intersecting edges by Proposition \ref{prop:2-unif}, so Maker wins in two rounds.
\end{proof}

The remainder of this section is dedicated to $k$-uniform hypergraphs for $k \geq 3$. In all figures, we will represent edges of size 3 or 4 as in Figure \ref{fig:edges}, with circled vertices corresponding to Maker's tokens and crossed out vertices corresponding to Breaker's tokens (to emphasize that these vertices are permanently out of the game from Maker's perspective).

\begin{figure}[h]
	\centering
	\includegraphics[scale=.58]{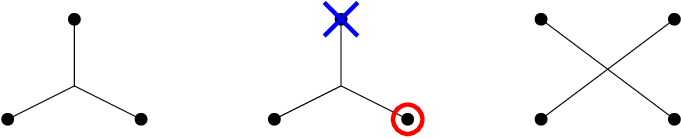}
	\caption{Left: an edge of size 3. Middle: an edge of size 3 on which Maker and Breaker each have a token. Right: an edge of size 4.}\label{fig:edges}
\end{figure}

\subsection{In 3-uniform hypergraphs}

A structural characterization of 3-uniform hypergraphs on which Maker wins the $(\ast,\ast)$-game is provided in \cite{GGS25}. It is based on two basic structures, represented in Figure \ref{fig:nunchaku_necklace}. A {\em nunchaku} of length $L \geq 1$ has vertex set $\{a_0,\ldots,a_L,b_1,\ldots,b_L\}$ and edge set $\{\{a_{i-1},b_i,a_i\}\mid i \in \interval{1}{L}\}$, such that the only tokens sitting on it are Maker tokens on $a_0$ and $a_L$. A {\em necklace} of length $L \geq 2$ has vertex set $\{a_1,\ldots,a_L,b_1,\ldots,b_L\}$ and edge set $\{\{a_i,b_i,a_{i+1}\}\mid i \in \interval{1}{L-1}\} \cup \{\{a_L,b_L,a_1\}\}$, such that the only token sitting on it is a Maker token on $a_1$. In the following theorem, the ``rounds of play'' need to be full rounds, meaning that we always look at the situation before Maker's turn.

\begin{figure}[h]
	\centering
	\includegraphics[scale=.58]{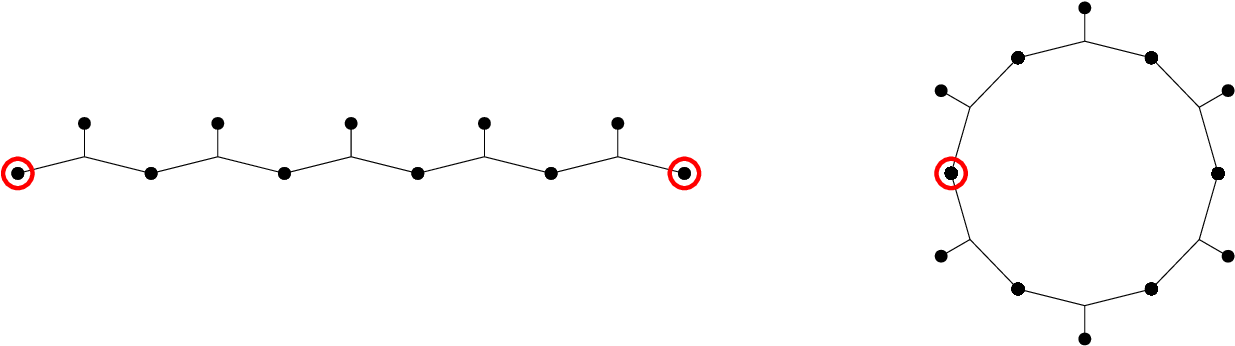}
	\caption{Left: a nunchaku of length 5. Right: a necklace of length 6.}\label{fig:nunchaku_necklace}
\end{figure}

\begin{theorem}[{\cite[Theorem 3.23]{GGS25}}]\label{theo:MB3}
    Let $\hyp$ be a 3-uniform hypergraph. Maker wins the $(\ast,\ast)$-game on $\hyp$ if and only if she has a strategy ensuring that the position obtained after at most three rounds of play contains a nunchaku or a necklace.
\end{theorem}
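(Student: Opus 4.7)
\emph{Proof plan.} I would split the proof into the two implications, treating the sufficiency direction as essentially constructive and reserving the bulk of the work for the necessity direction. For sufficiency, I would show by induction on the length $L$ that a nunchaku or necklace, with the prescribed Maker tokens, is a winning position for Maker. A nunchaku of length $1$ is won immediately by claiming $b_1$. For a nunchaku of length $L \geq 2$ with Maker tokens on $a_0$ and $a_L$, Maker plays $a_1$, creating a one-move threat on the edge $\{a_0,b_1,a_1\}$ which Breaker must defuse by claiming $b_1$; what remains inside the other edges is exactly a nunchaku of length $L-1$ with Maker tokens on $a_1$ and $a_L$, so induction concludes. A necklace reduces to a nunchaku after one analogous forced exchange. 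In every case, Maker's strategy is simply to propagate single threats along the structure until a double threat materializes at the last step.

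\emph{The heart of the theorem is the necessity direction.} Assuming that Maker wins the $(\ast,\ast)$-game on $\hyp$, my plan is to analyze what her winning strategy must look like in a $3$-uniform setting. Since every edge has size $3$, Maker can only win via a \emph{double threat}: a move after which two distinct edges each contain two Maker tokens and no Breaker token. I would work backwards from this double threat: each earlier Maker move in an optimal line must itself have posed a single threat, otherwise Breaker could have played a defensive move that eliminates the double-threat plan. This traceback produces a sequence of edges, each sharing two vertices with the next, with a Maker token seeding one end; depending on whether the induced subhypergraph is a path of such edges with a second Maker token anchoring the other end, or closes back onto itself, it is a nunchaku or a necklace.

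\emph{The main obstacle is the round bound.} The conclusion that such a structure is forced to appear within at most three full rounds is the delicate part. My plan would be a case analysis on Maker's first move together with Breaker's optimal response, using a potential function based on the number of edges containing two Maker tokens and no Breaker token, together with the constraint that any two such edges must ``merge'' into the emerging nunchaku/necklace. In each case I would argue either that Breaker has a pairing-style or transversal-style reply that refutes Maker unless she plays moves that incrementally build up the target structure, or that Maker's move already locks in a partial nunchaku/necklace. The hardest subcase will be when Breaker plays centrally enough to cut an emerging structure into two short fragments; I expect one must then show that Maker can reattach these fragments via her second and third moves, relying on Proposition~\ref{prop:2-unif}-style small-case analysis to certify that the characterization holds with the tight three-round horizon rather than a larger constant.
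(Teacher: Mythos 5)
Your proposal cannot be compared against an in-paper proof, because the paper does not prove this statement: it imports it wholesale as \cite[Theorem 3.23]{GGS25}. The only fragment the paper does work out is essentially your sufficiency half: the forcing strategy along a nunchaku or necklace appears (in token form) in the proof of Corollary~\ref{coro:theta3}, and your induction on the length $L$ is a correct way to organize it --- each single threat forces Breaker's reply, a necklace reduces to a nunchaku after one exchange, and a length-$1$ nunchaku is an immediate win. One factual slip there: consecutive edges of a nunchaku share exactly \emph{one} vertex (the $a_i$), not two, so your traceback description of the emerging structure is already off.

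The genuine gap is the necessity direction, which is the entire substance of the cited theorem. Your traceback rests on the premise that, in an optimal winning line, every earlier Maker move must itself pose a single threat; this is false, and the theorem's content is precisely the opposite: Maker may need up to three rounds of \emph{non-forcing} setup moves (moves that threaten nothing in one move) before any forcing sequence begins, so her early moves are positional and, against Breaker's branching replies, there is no single ``optimal line'' to trace backwards --- the structure she ends up building depends on Breaker's choices, and the theorem asserts she can guarantee a nunchaku or necklace after at most three rounds against \emph{every} Breaker play. More fundamentally, establishing the three-round bound amounts to proving the contrapositive: if Maker has no strategy creating a nunchaku or necklace within three rounds, then \emph{Breaker} wins, which requires constructing explicit Breaker strategies (a classification of the threats Maker can assemble in few moves and pairing-like replies defusing each); this occupies the bulk of the long proof in \cite{GGS25} and is absent from your plan. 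The proposed potential function (counting edges with two Maker tokens and no Breaker token) is not developed, and nothing in the sketch explains why the constant is three rather than some larger bound; as written, the plan would not close.
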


A corollary is that, when playing on a 3-uniform hypergraph against an unlimited amount of tokens for Breaker, Maker's number of tokens makes no difference: either she wins with just three tokens, or she does not win at all. 

\begin{corollary}\label{coro:theta3}
    Every 3-uniform hypergraph $\hyp$ satisfies $\theta(\hyp) \in \{3,\infty\}$.
\end{corollary}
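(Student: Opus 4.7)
The plan is to split on whether Maker wins the $(\ast,\ast)$-game on $\hyp$. If Breaker wins, then by monotonicity of the outcome in the players' token counts (the first Proposition of Section~\ref{section2}), Breaker also wins the $(a,\ast)$-game for every $a \in \Zp$, so $\theta(\hyp) = \infty$. If instead Maker wins the $(\ast,\ast)$-game, then Proposition~\ref{prop:theta} gives $\theta(\hyp) \geq \textup{ark}(\hyp) = 3$, so it suffices to exhibit a winning Maker strategy for the $(3,\ast)$-game in order to conclude $\theta(\hyp) = 3$.

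For that strategy, Maker would spend her first three rounds following the plan provided by Theorem~\ref{theo:MB3}, placing three fresh tokens to reach a position that contains either a nunchaku or a necklace. Since each of these three moves uses a previously unused token, the dynamics in that initial phase coincide with those of the $(\ast,\ast)$-game, so the theorem applies verbatim. Afterwards, Maker has exactly three tokens on the board, with two of them lying on $\{a_0,a_L\}$ in the nunchaku case (resp.\ one on $a_1$ in the necklace case), and the remaining one or two tokens placed elsewhere in $V(\hyp)$ and thus freely movable.

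The main task is then to win on the nunchaku (or necklace) substructure using only three tokens throughout. My plan is a ``sweeping'' strategy. On a nunchaku of length $L$, I would show by induction on $k$ that Maker can maintain the invariant that after round $3+k$, for $1 \leq k \leq L-1$, her three tokens sit on $\{a_{k-1},a_k,a_L\}$ while Breaker holds $\{b_1,\dots,b_k\}$. At each step she moves her oldest token, which has become useless because both edges containing it are already killed, onto $a_{k+1}$, threatening $\{a_k,b_{k+1},a_{k+1}\}$ and forcing Breaker to defend at $b_{k+1}$. When only $\{a_{L-1},b_L,a_L\}$ remains alive, she moves the useless $a_{L-2}$ onto $b_L$ and wins. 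For a necklace the same ``move-and-threaten'' sweep works around the cycle, using the two free tokens to initiate the sweep from $a_1$ and eventually filling $\{a_L,b_L,a_1\}$ by a final repositioning.

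The main obstacle will be verifying this invariant carefully: at every step Maker must have a genuinely useless token available to move, and her final move must actually complete an edge rather than only threaten one. Both facts should follow from the local structure of a nunchaku/necklace, in which each internal $a_i$-vertex lies in exactly two edges that die as soon as Breaker has been forced onto the corresponding two $b$-vertices, enabling safe recycling of tokens to propagate the sweep along the path (or cycle).
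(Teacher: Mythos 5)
Your proposal is correct and takes essentially the same approach as the paper: use Theorem~\ref{theo:MB3} to reach a position containing a nunchaku or necklace within three rounds (valid in the $(3,\ast)$-game since Maker uses a fresh token each round), then win by the same forcing sweep that recycles the token whose two incident edges are dead, creating successive one-move threats until the final completion (a double threat in the necklace case). Your invariant $\{a_{k-1},a_k,a_L\}$ is exactly the paper's strategy of moving the token from $a_{i-2}$ to $a_i$, merely phrased as an induction.
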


\begin{proof}
    Assume that $\theta(\hyp)<\infty$ i.e. Maker wins the $(\ast,\ast)$-game on $\hyp$. By Theorem \ref{theo:MB3}, Maker has a strategy ensuring that the position obtained after at most three rounds of play contains a nunchaku or a necklace. Therefore, Maker can also reach such a position playing the $(3,\ast)$-game. We now show that Maker has a ``forcing strategy'' to force every move of Breaker until he is trapped. First suppose that there is a nunchaku with vertex set $\{a_0,\ldots,a_L,b_1,\ldots,b_L\}$, edge set $\{\{a_{i-1},b_i,a_i\}\mid i \in \interval{1}{L}\}$, and Maker tokens on $a_0$ and $a_L$. Maker moves her third token to $a_1$, threatening to fill the edge $\{a_0,b_1,a_1\}$ and thus forcing Breaker to claim $b_1$. Then, for all $i \in \interval{2}{L-1}$, Maker moves a token from $a_{i-2}$ to $a_i$, threatening to fill the edge $\{a_{i-1},b_i,a_i\}$ and thus forcing Breaker to claim $b_i$. However, since Maker's token on $a_L$ has not moved, she can now fill the edge $\{a_{L-1},b_L,a_L\}$ by moving a token from $a_{L-2}$ to $b_L$. This concludes the case where there is a nunchaku. If there is a necklace instead, the same forcing strategy works: Maker leaves the token of the necklace where it is, then uses her other two tokens to create threats along the cycle until we return to the unmoved token at which point there are two threats at once.
\end{proof}

The difference with the 2-uniform case is that the finite values of $\tau$ are not bounded. As such, there can be a gap between $\theta$ and $\tau$, and we know exactly how big it can be.

\begin{proposition}
    For all $n \geq 7$, there exists a 3-uniform hypergraph $\hyp$ on $n$ vertices such that $\theta(\hyp)=3$ and $\tau(\hyp)=\log_2(n)+O(1)$. Moreover, the finite values of $\tau$ are bounded by $\lceil\log_2(n)\rceil+3$ for general 3-uniform hypergraphs, so this maximizes the gap between $\theta$ and $\tau$ for 3-uniform hypergraphs up to an additive constant.
\end{proposition}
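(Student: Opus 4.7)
For the upper bound, our plan is to apply Theorem~\ref{theo:MB3}: since Maker wins the $(\ast,\ast)$-game on $\hyp$, she has a strategy such that after at most $3$ rounds of play the position contains a nunchaku or a necklace of some length $L$ as sub-configuration. Such a sub-configuration uses at least $2L$ vertices of $\hyp$, so $L \leq \lfloor n/2 \rfloor$. From a nunchaku position (Maker tokens on $a_0, a_L$, no Breaker tokens on the nunchaku), we propose a halving strategy: Maker plays $a_{\lceil L/2 \rceil}$; whichever $b$-vertex Breaker claims in reply, exactly one of the two resulting halves of the nunchaku remains clean, i.e.~it is a nunchaku of length at most $\lceil L/2 \rceil$ with Maker tokens at both endpoints and no Breaker tokens inside, and Maker recurses on it. With $f(1) = 1$ (play the sole free vertex) and $f(L) \leq 1 + f(\lceil L/2 \rceil)$, induction yields $f(L) \leq \lceil \log_2 L \rceil + 1$. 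The necklace case is handled analogously by first playing an antipodal $a$-vertex, which leaves one of the two $a_1$-to-$a_m$ paths clean and turns it into a nunchaku of length at most $\lceil L/2 \rceil$ with Maker tokens at both endpoints. Summing,
\[
\tau(\hyp) \,\leq\, 3 + \lceil \log_2 L \rceil + 1 \,\leq\, \lceil \log_2 n \rceil + 3.
\]

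For the construction, our plan is to take $L = \Theta(n)$ and let $\hyp$ consist of a nunchaku of length $L$ augmented by a constant-sized forcing gadget. The gadget is necessary because a standalone nunchaku admits the complete pairing $\{(a_{i-1}, b_i) : i \in \interval{1}{L}\}$ and is therefore Breaker-won by Proposition~\ref{prop:pairing}; the role of the gadget is to enable Maker to reach the canonical nunchaku position (tokens on $a_0, a_L$, no Breaker on the nunchaku) within $O(1)$ initial rounds despite Breaker's attempts to disrupt. Keeping the gadget at size $O(1)$ ensures $L = \Theta(n)$. Since $\hyp$ is then Maker-win, Corollary~\ref{coro:theta3} gives $\theta(\hyp) = 3$, and the upper bound above yields $\tau(\hyp) \leq \lceil \log_2 n \rceil + 3$.

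The matching lower bound $\tau(\hyp) \geq \log_2 n - O(1)$ is the main obstacle. Our plan is a potential-function argument. Define $\phi$ as the length of the longest clean nunchaku bordered by two Maker tokens present in the current position, so that $\phi = 0$ at the start and Maker cannot complete an edge unless $\phi \geq L$. Let Breaker play the combined strategy: block any immediate $2$-of-$3$ threat when one exists, and otherwise follow the nunchaku's complete pairing $\{(a_{i-1}, b_i)\}$. We then argue that $\phi$ at most doubles after each Maker move: this mirrors the dual of the halving strategy, since any new clean nunchaku longer than twice the current $\phi$ would have required the just-played Maker vertex to border two separately-clean sub-nunchakus, which Breaker's pairing response prevents. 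Hence at least $\log_2 L - O(1) = \log_2 n - O(1)$ Maker moves are needed. The delicate technical point is to design the gadget so that no alternative Maker strategy exploits its edges for a faster win bypassing the long halving chain; this will be established by a case analysis verifying that the gadget's local interactions force every Maker winning strategy to route through the long nunchaku.
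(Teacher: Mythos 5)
Your proof of the second assertion (the general upper bound) is correct and is essentially the paper's argument: Theorem~\ref{theo:MB3} to reach a nunchaku or necklace in at most three rounds, then the halving strategy winning in $1+\lceil\log_2 L\rceil$ further rounds with $L\leq n/2$. The first assertion, however, is where the real work lies, and there your proposal has two genuine gaps. The first is that the construction is never actually produced: you posit ``a constant-sized forcing gadget'' whose correctness ``will be established by a case analysis,'' but designing and verifying that gadget \emph{is} the content of this part of the proposition. The paper's solution is concrete: replace each of the two end tokens of the nunchaku by a small ``diamond'' widget with the property that if Maker claims its degree-3 vertex and Breaker does not immediately answer inside that diamond, Maker wins in two further moves; this installs the two end tokens in $O(1)$ forced rounds, works for every odd $n\geq 7$ (the two diamonds sharing a vertex when $n=7$), and extends to even $n$ by an isolated vertex. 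Without an explicit gadget and its verification, neither Maker's win (hence $\theta(\hyp)=3$ via Corollary~\ref{coro:theta3}) nor the value of $\tau(\hyp)$ is established.

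The second gap is that your lower-bound potential is inverted, and its anchoring claim is false. On a nunchaku, Maker wins by manufacturing a \emph{short} clean Maker-bordered segment: a clean bordered segment of length $1$ is an immediate threat, and one of length $2$ yields a double threat. So ``Maker cannot complete an edge unless $\phi\geq L$'' is simply wrong, and an ``at most doubles'' bound on the \emph{longest} clean segment gives no lower bound on the number of rounds. Indeed, in the actual game on the gadget-augmented hypergraph, after the $O(1)$ forced opening Maker holds both ends of a Breaker-free nunchaku, so your $\phi$ already equals roughly $L$ after constantly many moves; the rest of the game consists of segments \emph{shrinking}. The correct dual statement bounds the \emph{shortest} clean Maker-bordered segment from below: Breaker answers each Maker split inside the smaller of the two resulting sub-nunchakus, so this minimum at most halves per round starting from $L$, forcing at least $1+\lceil\log_2 L\rceil$ rounds. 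Note that the paper does not reprove this optimality at all --- it cites \cite[Lemma 5.6]{GGS25} for both directions on the nunchaku and observes the argument persists on the diamond-augmented hypergraph --- whereas your route would additionally require checking that the halving invariant survives the gadget's edges, the very ``alternative Maker strategy'' worry you flag but do not resolve.
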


\begin{proof}
    Even though it is not a valid example strictly speaking, as it features tokens which are already placed, let us start by considering a nunchaku of length $L$ with vertex set $\{a_0,\ldots,a_L,b_1,\ldots,b_L\}$, edge set $\{\{a_{i-1},b_i,a_i\}\mid i \in \interval{1}{L}\}$, and Maker tokens on $a_0$ and $a_L$. The forcing strategy that we have seen wins in $\Theta(L)$ rounds, since it goes through the entire nunchaku from one end to the other. However, we claim that Maker can win much faster, namely in $1+\lceil \log_2(L) \rceil$ rounds, and that this is optimal. This result is proved in \cite[Lemma 5.6]{GGS25}, but we can easily explain it here since it uses a very simple dichotomy argument. It is easy to see that claiming some $b_i$ is a losing move for Maker, as Breaker could claim $a_{i-1}$ or $a_i$ next and destroy Maker's attack. Therefore, Maker should claim some $a_i$, thus effectively dividing the nunchaku into two smaller nunchakus. Breaker cannot play inside both of these nunchakus at once, so his next move will leave one of them intact for Maker to play in. If the two nunchakus are of different size, then Breaker should play inside the smaller one. As such, Maker's first move should be to claim $a_{\lceil L/2 \rceil}$, so as to divide the nunchaku into two nunchakus whose lengths differ by at most 1. By repeatedly claiming the $a_i$ which is in the middle of the nunchaku that Breaker has not played in, Maker eventually wins in a total of exactly $1+\lceil \log_2(L) \rceil$ rounds.
    
    To get a starting hypergraph which is token-free, one may simply take a nunchaku and replace each token with a ``diamond'' as in Figure \ref{fig:nunchaku_diamond}. Note that, if Maker claims a vertex of degree 3 and Breaker does not immediately answer inside the corresponding diamond, then Maker can win in two more moves. This construction on $n$ vertices is possible for any odd $n \geq 7$ (the case $n=7$ is the one where the two diamonds share a vertex), and can be extended to any even $n \geq 8$ by adding an isolated vertex. The dichotomy argument is still valid on this hypergraph $\hyp$, so that $\tau(\hyp)=\log_2(n)+O(1)$. On the other hand, Corollary \ref{coro:theta3} yields $\theta(\hyp) = 3$.

    \begin{figure}[h]
		\centering
		\includegraphics[scale=.58]{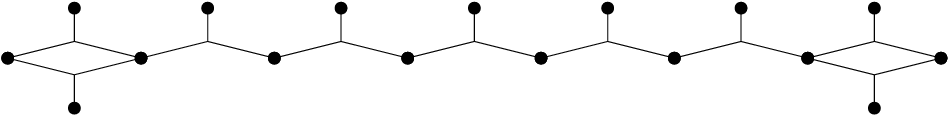}
		\caption{A token-free ``equivalent'' of the nunchaku.}\label{fig:nunchaku_diamond}
	\end{figure}
    
    As for the final statement of this proposition, it is a consequence of Theorem \ref{theo:MB3}. Indeed, if Maker wins the $(\ast,\ast$)-game on a 3-uniform hypergraph, then she has a strategy ensuring that the position obtained after at most three rounds of play contains a nunchaku or a necklace. Since she can win in $1+\lceil \log_2(L) \rceil$ rounds on a nunchaku of length $L$ (or a necklace of length $L$, with the same proof), she can win in a total number of rounds of at most $3+(1+\lceil \log_2(L) \rceil) = \lceil \log_2(L) \rceil + 4$. Finally, since a nunchaku (resp. a necklace) of length $L$ has $2L+1$ (resp. $2L$) vertices, we have $L \leq \frac{n}{2}$ hence $\lceil \log_2(L) \rceil + 4 \leq \lceil\log_2(n)\rceil+3$.
\end{proof}

\subsection{In $k$-uniform hypergraphs with $k \geq 4$}

We have just seen that, for 3-uniform hypergraphs, the gap between $\theta$ and $\tau$ can be arbitrarily large but no more than logarithmic. We now show that this gap can be much bigger for $k$-uniform hypergraphs with $k \geq 4$, as there are then examples where the lower bound for $\theta$ and the upper bound for $\tau$ given by Proposition \ref{prop:theta} are both exactly attained. Our construction generalizes the necklace for edges of size more than 3, except that the dichotomy strategy does not work anymore and Maker has to employ the (very slow) forcing strategy.

\begin{proposition}\label{prop:gap}
    For all $k \geq 4$ and all $n \geq 2k+1$, there exists a $k$-uniform hypergraph $\hyp$ on $n$ vertices such that $\theta(\hyp)=k$ and $\tau(\hyp)=\left\lceil \frac{n}{2} \right\rceil$. Moreover, this maximizes the gap between $\theta$ and $\tau$.
\end{proposition}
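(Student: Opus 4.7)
The plan is to construct a $k$-uniform hypergraph $\hyp$ on $n$ vertices generalizing the 3-uniform necklace defined earlier, and then to verify the two equalities separately. I would take $\hyp$ to be a cyclic sequence of $L = \Theta(n/k)$ edges $e_1, \ldots, e_L$ of size $k$, arranged so that Maker can apply a ``forcing chain'' passing tokens from one edge to the next, while the overall structure prevents any faster strategy. Additional vertices (via a small starter gadget producing local double threats, or simple isolated vertices) adjust the total count to exactly $n$; the hypothesis $n \geq 2k+1$ ensures room for a nontrivial cycle plus the gadget.

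For $\theta(\hyp) = k$, the lower bound $\theta(\hyp) \geq \textup{ark}(\hyp) = k$ is immediate from Proposition \ref{prop:theta}. For the upper bound, I would describe a Maker strategy using $k$ tokens, in two phases. In the setup phase, Maker exploits the starter gadget to accumulate $k-1$ tokens on some edge $e_1$, with Breaker's responses tied up by local double threats so that $e_1$ is never blocked. In the forcing phase, Maker maintains the invariant of having $k-1$ tokens on some current edge $e_i$; she chooses the missing vertex inside $e_i \setminus e_{i+1}$ so that Breaker's forced block does not kill $e_{i+1}$, then moves one of her tokens onto $e_{i+1}$ to re-establish the invariant. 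Since the cycle must eventually close on itself, Maker is guaranteed to fill an edge after $O(L)$ rounds.

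For $\tau(\hyp) = \lceil n/2 \rceil$, the upper bound is immediate from Proposition \ref{prop:theta}. The lower bound is the delicate part: I would exhibit a Breaker strategy in the $(\ast,\ast)$-game preventing Maker from winning in fewer than $\lceil n/2 \rceil$ rounds. The key point, foreshadowed in the text preceding the proposition, is that for $k \geq 4$ the dichotomy argument available in the 3-uniform case breaks down: a single Maker token inside a size-$k$ edge is only a fraction $1/k$ of filling it and does not split the remaining structure into independent halves the way the middle vertex of a size-3 edge does. A Breaker strategy based on pairing or on a careful matching of Maker's advances (possibly invoking Proposition \ref{prop:pairing} on suitable sub-positions) should thus force Maker to progress linearly through the cycle over $\Theta(n)$ rounds.

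The ``moreover'' clause follows directly from Proposition \ref{prop:theta}: for any $k$-uniform hypergraph on $n$ vertices on which Maker wins, $\theta \geq \textup{ark}(\hyp) = k$ and $\tau \leq \lceil n/2 \rceil$, yielding $\tau - \theta \leq \lceil n/2 \rceil - k$, which our construction attains. The main obstacle will be concretely designing the starter gadget and overlap pattern so that Maker's forcing really does succeed with only $k$ tokens, while simultaneously ensuring $\tau$ cannot drop below $\lceil n/2 \rceil$: naive cyclic constructions often admit a Breaker pairing strategy covering all edges, which would give $\theta = \tau = \infty$, and avoiding this without reintroducing a fast Maker strategy requires a subtle balance.
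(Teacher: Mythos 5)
Your outline follows the paper's broad strategy (a necklace-like cyclic chain of overlapping edges, a starter gadget so Maker's first tokens are uncontested, a forcing strategy for the upper bounds, and a pairing-based Breaker argument for the lower bound on $\tau$), but it has two concrete problems. First, a design inconsistency: for your forcing invariant to work --- Maker holds $k-1$ tokens on $e_i$, Breaker is forced to block the missing vertex, and \emph{one} token move re-establishes $k-1$ tokens on $e_{i+1}$ --- consecutive edges must overlap in exactly $k-2$ vertices, so each edge introduces only two new vertices and the cycle has $L=\Theta(n)$ edges, not $L=\Theta(n/k)$ as you wrote. With $O(1)$ overlap, after Breaker blocks $e_i$ Maker is several moves away from threatening $e_{i+1}$, Breaker is not forced and can kill $e_{i+1}$ directly; and a game advancing one edge per round through $\Theta(n/k)$ edges could not have $\tau=\left\lceil \frac{n}{2}\right\rceil$ anyway. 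The paper avoids designing for general $k$ altogether: it constructs only the case $k=4$, with $e_i=\{a_i,a_{i+1},a_{i+2},b_i\}$ in a cycle and gadget edge-families $E_1,E_2$ through five vertices $u_1,\ldots,u_5$ emulating edges $\{a_i,\ov{a_i}\}$, then lifts to all $k\geq 5$ by $k-4$ applications of Proposition \ref{prop:thetaplus1}, each of which increments both $\theta$ and $\tau$ by exactly one. Your proposal makes no use of this reduction and would have to redo the entire analysis for every $k$.

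Second, and more seriously, the heart of the proposition --- the lower bound $\tau(\hyp)\geq\left\lceil \frac{n}{2}\right\rceil$ --- is only asserted, not proved: you say a pairing-based Breaker strategy ``should'' force linear progress, which is precisely the statement requiring proof. In the paper this is the bulk of the argument: it defines \emph{regular play} (the $L+2$-move forcing line) and proves in Claim \ref{cla:k-necklace1} that if Maker ever deviates from it, even having banked extra tokens, Breaker has a single reply after which the position admits a \emph{complete} pairing, whence Breaker wins outright by Proposition \ref{prop:pairing}; this requires exhibiting an explicit pairing in each of seven deviation cases ($x=a_{i_0}$, $x=b_{i_0}$, $x=\ov{a_{i_0}}$ during Phase 1, and $x=a_{i_0}$, $x=b_{i_0}$ during Phase 2, with parity checks that the pairs really cover the uncontested edges), and Claim \ref{cla:k-necklace2} shows Breaker deviating only loses faster. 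Together these make regular play optimal for both players, yielding $\tau(\hyp)=\frac{n+1}{2}$ and $\theta(\hyp)=4$ simultaneously. Your own closing caveat --- that naive cyclic constructions admit a complete pairing and give $\theta=\tau=\infty$ --- is exactly why this cannot be waved through: the construction must be calibrated so that a complete pairing exists after \emph{any} Maker deviation but not along regular play, and that calibration (the roles of $\ov{a_1},\ov{a_2}$ and the $u_j$'s) is the actual content of the proof. Your handling of the ``moreover'' clause via Proposition \ref{prop:theta} is correct and matches the paper.
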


\begin{proof}
    Let us first show that addressing the case $k=4$ is sufficient. Suppose that the result holds for $k=4$. Let $k \geq 5$, and let $n \geq 2k+1$. Since $n-2(k-4) \geq 9$, we know there exists a 4-uniform hypergraph $\hyp$ on $n-2(k-4)$ vertices such that $\theta(\hyp)=4$ and $\tau(\hyp)=\left\lceil \frac{n-2(k-4)}{2} \right\rceil$. Through $k-4$ consecutive applications of Proposition \ref{prop:thetaplus1}, we get a $k$-uniform hypergraph $\hyp'$ on $n$ vertices such that $\theta(\hyp')=\theta(\hyp)+(k-4) = k$ and $\tau(\hyp')=\tau(\hyp)+(k-4) = \left\lceil \frac{n-2(k-4)}{2} \right\rceil + (k-4) = \left\lceil \frac{n}{2} \right\rceil$.
    
    Now, for the case $k=4$, we may assume that $n$ is odd: indeed, if $n$ is even, then we simply add an isolated vertex to our construction for $n-1$. Define $L=\frac{n-3}{2}$ and the 4-uniform hypergraph $\hyp$ as follows (see Figure \ref{fig:biggap}):
     \begin{align*}
         V(\hyp) & = \{a_1,\ldots,a_{L+1},b_1,\ldots,b_L,\ov{a_1},\ov{a_2}\} \cup \{u_1,u_2,u_3,u_4,u_5\};\\
         E(\hyp) & = \{e_1,\ldots,e_L\} \cup E_1 \cup E_2, \,\,\,\text{where} \\
         e_i & = \{a_i,a_{i+1},a_{i+2},b_i\} \,\,\,\text{for all $i \in \interval{1}{L-1}$,}\,\,\,
         e_L = \{a_L,a_{L+1},a_1,b_L\}, \,\,\,\text{and}\\
         E_i & = \{ \{a_i,\ov{a_i},u_j,u_{j'}\} \mid j \neq j' \in \interval{1}{5}\} \,\,\,\text{for all $i \in \interval{1}{2}$}.
     \end{align*}
     
     \begin{figure}[h]
		\centering
		\includegraphics[scale=.55]{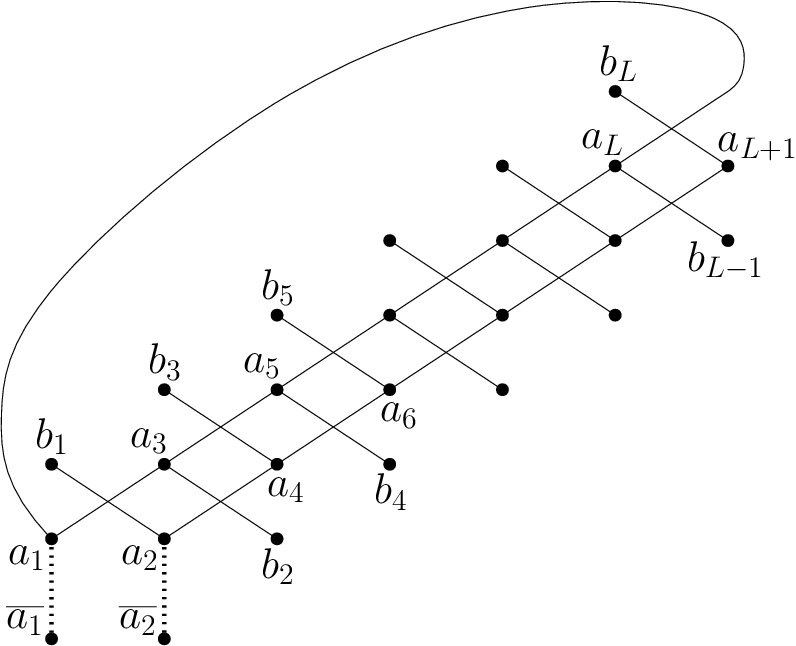}
		\caption{The construction for Proposition \ref{prop:gap}, with $L=11$. The two dotted lines at the bottom symbolize the edges in $E_1$ and $E_2$ respectively.}\label{fig:biggap}
	\end{figure}
	
     To minimize the number of vertices used in this construction, we actually choose $u_1,u_2,u_3,u_4,u_5$ arbitrarily among the $2L-1$ vertices $a_3,\ldots,a_{L+1},b_1,\ldots,b_L$ (as made possible by the fact that $n \geq 9$ hence $L \geq 3$), so that $\hyp$ indeed has $(L+1)+L+2=n$ vertices. However, we prefer to name them differently, as imagining that the $u_j$'s are separate from the other vertices should help the reader. The edges in $E_1 \cup E_2$, which are the only ones containing some of the $u_j$'s, will be irrelevant after the first phase of play anyway, as they are simply here to allow Maker to claim $a_1$ and $a_2$ uncontested at the start of the game. The set $E_i$ is there to emulate an edge $\{a_i,\ov{a_i}\}$, since we cannot use such an edge in a 4-uniform construction.
     
     For the $(\ast,\ast)$-game on $\hyp$, we now define {\em regular play}, a specific unfolding of the game which sees Maker win in a total of exactly $\frac{n+1}{2}=L+2$ moves and using only four tokens:
     \begin{itemize}[nolistsep,noitemsep]
         \item Phase 1: Maker places tokens on $a_1$ and $a_2$ in no particular order, which Breaker answers by claiming $\ov{a_1}$ and $\ov{a_2}$ respectively.
         \item Phase 2: Maker employs a forcing strategy along the edges $e_1,\ldots,e_L$ in that order, making one-move threats until the end of the game similarly to what we have seen in the 3-uniform case. She starts by placing a token on $a_3$, which prompts Breaker to claim $b_1$ because of the edge $e_1$, then she places a token on $a_4$, which prompts Breaker to claim $b_2$ because of the edge $e_2$. Maker now has four tokens on the board. The token on $a_1$ remains there, and for all $i \in \interval{3}{L-1}$, Maker moves the token on $a_{i-1}$ to put it on $a_{i+2}$, which prompts Breaker to claim $b_i$ because of the edge $e_i$. Finally, Maker moves the token on $a_{L-1}$ to put it on $b_L$ and wins by filling the edge $e_L$.
     \end{itemize}
     We now show that regular play is optimal for both players, where optimality should be understood as Maker trying to win as fast as possible and Breaker trying to survive for as long as possible. Despite Maker only using four tokens throughout regular play, we show that she could not win any faster using more tokens.
     
     \begin{claim}\label{cla:k-necklace1}
         If Maker deviates from regular play after both players had followed regular play from the beginning (even if Maker had used a new token for each move), then Maker gets a losing position.
     \end{claim}
     
     \begin{claim}\label{cla:k-necklace2}
         It is optimal for Breaker to follow regular play on his turn if both players have followed regular play so far.
     \end{claim}
     
     \begin{proofclaim}[Proof of Claim \ref{cla:k-necklace2} assuming Claim \ref{cla:k-necklace1}]
         We start with the easier proof, assuming Claim \ref{cla:k-necklace1} holds. On his turn, Breaker has the option to keep following regular play for as long as Maker also does: if Maker ever deviates, then Breaker wins by Claim \ref{cla:k-necklace1}, otherwise Breaker loses in $\frac{n+1}{2}$ moves which is the slowest possible loss. Therefore, the only way that deviating from regular play could be a better option for Breaker is if it allows him to win the game. As such, it suffices to show that deviating from regular play leads to Breaker losing the game. This is clear in Phase 2, as Breaker is simply defending Maker's successive threats of winning in one move. As for Phase 1, suppose that Maker has just claimed $a_i$ for some $i \in\interval{1}{2}$ (possibly after having already claimed $a_{3-i}$ and Breaker having responded by claiming $\ov{a_{3-i}}$ as per regular play), and suppose that Breaker answers by claiming some $x \neq \ov{a_i}$. Maker can now claim $\ov{a_i}$ herself. At this point, at least four of the $u_j$'s are still available (there are five of them in total, but $x$ might have been one). Therefore, Maker may claim any two arbitrary vertices among the $u_j$'s with her next two moves, thus automatically filling some edge in $E_i$.
     \end{proofclaim}
     
     \begin{proofclaim}[Proof of Claim \ref{cla:k-necklace1}]
         We show that, if Maker deviates from regular play, then Breaker has a move after which the resulting position admits a complete pairing, so Breaker has a winning strategy from this point on by Proposition \ref{prop:pairing}.
         \begin{itemize}
             
             \item Case 1: Maker deviates from regular play during Phase 1, i.e. Maker claims some $x \not\in \{a_1,a_2\}$ without having claimed both $a_1$ and $a_2$ beforehand. If $x=\ov{a_i}$ for some $i \in\interval{1}{2}$, then Breaker claims $y=a_i$, otherwise Breaker claims an arbitrary available vertex $y\in \{a_1,a_2\}$. In all cases, we now exhibit a pairing $\Pi(x,y)$ which is complete in the resulting position (see Figure \ref{fig:biggap_case1}).
             
             \begin{itemize}
                 \item Case 1.a: $x=a_{i_0}$ for some $i_0 \in \interval{3}{L+1}$ and $y=a_1$. Since the edges in $E_1 \cup \{e_1,e_L\}$ all contain $y$, we just need to cover those in $E_2 \cup \{e_2,\ldots,e_{L-1}\}$. For this, we define:
                 $$ \Pi(x,y) = \{\{a_2,\ov{a_2}\}\} \cup \{\{a_j,b_{j-1}\}\mid j \in \interval{3}{i_0-1}\} \cup \{\{a_j,b_{j-2}\}\mid j \in \interval{i_0+1}{L+1}\}. $$
                 Note that it is possible that $a_2$ and $\ov{a_2}$ have already been claimed in the first round, by Maker and Breaker respectively, but it is not an issue since the edges in $E_2$ do not need to be covered in that case. Our definition of a pairing allows for including vertices that have already been claimed, and it can still be complete as long as those vertices are not needed to cover the necessary edges.
                 \item Case 1.b: $x=b_{i_0}$ for some $i_0 \in \interval{1}{L}$ and $y=a_1$. We define:
                 $$ \Pi(x,y) = \{\{a_2,\ov{a_2}\}\} \cup \left\{\{a_{2j-1},a_{2j}\} \,\middle|\, j \in\interval{2}{\left\lfloor\frac{L+1}{2}\right\rfloor} \right\}. $$
                 The pair of highest index is $\{a_L,a_{L+1}\}$ if $L$ is odd or $\{a_{L-1},a_L\}$ if $L$ is even, so it covers $e_{L-1}$.
                 \item Case 1.c: $x=a_{i_0}$ for some $i_0 \in \interval{3}{L+1}$ and $y = a_2$. Since the edges in $E_2 \cup \{e_1,e_2\}$ all contain $y$, we just need to cover those in $E_1 \cup \{e_3,\ldots,e_L\}$. We define:
                 $$ \Pi(x,y) = \{\{a_1,\ov{a_1}\}\} \cup \{\{a_j,b_j\}\mid j \in \interval{3}{i_0-1}\} \cup \{\{a_j,b_{j-1}\}\mid j \in \interval{i_0+1}{L+1}\}.$$
                 \item Case 1.d: $x=b_{i_0}$ for some $i_0 \in \interval{1}{L}$ and $y = a_2$. We define:
                 $$ \Pi(x,y) = \{\{a_1,\ov{a_1}\}\} \cup \left\{\{a_{L-2j},a_{L+1-2j}\} \,\middle|\, j \in\interval{0}{\left\lceil\frac{L}{2}\right\rceil-2} \right\}. $$
                 The pair of lowest index is $\{a_3,a_4\}$ if $L$ is odd or $\{a_4,a_5\}$ if $L$ is even, so it covers $e_3$.
                 \item Case 1.e: $x=\ov{a_{i_0}}$ for some $i_0 \in \interval{1}{2}$. Since $y=a_{i_0}$ in this case, the edges in $E_i$ do not need to be covered, so we may reuse one of the pairings defined in the previous cases e.g. $\Pi(x,y)=\Pi(a_3,a_{i_0})$.
             \end{itemize}
             
             \begin{figure}[h]
		        \centering
		        \includegraphics[scale=.55]{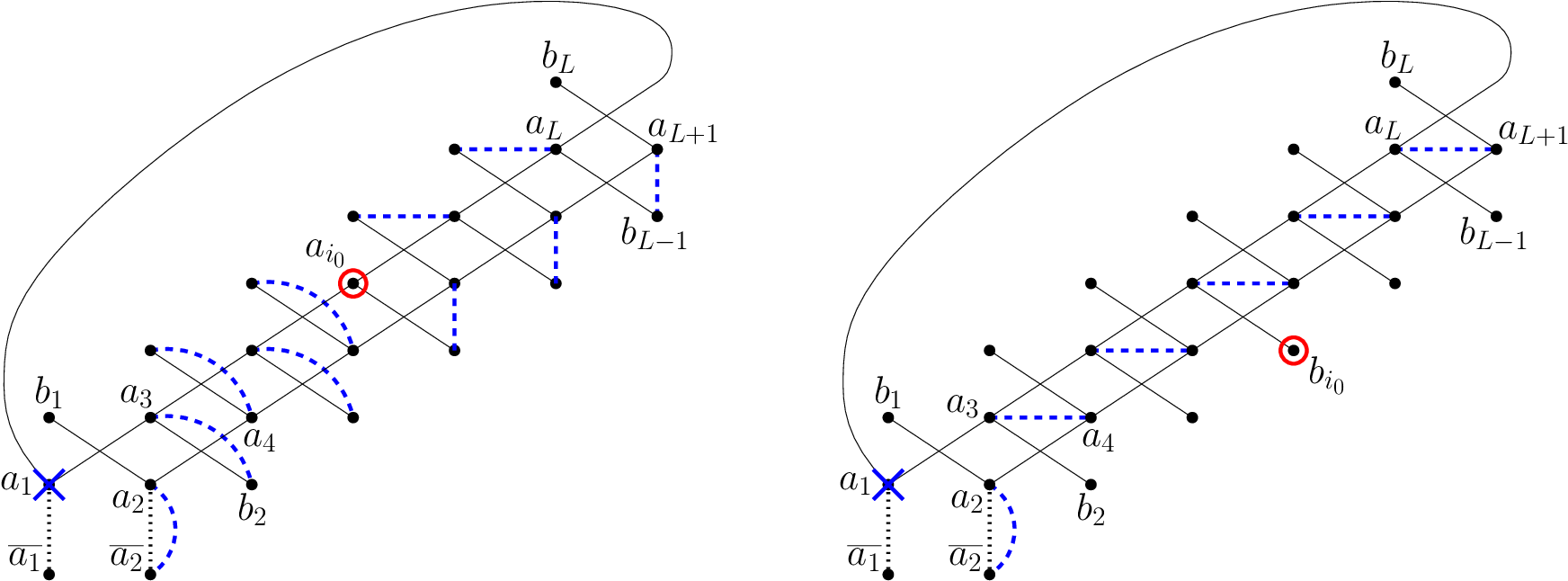}
		        \caption{Illustration of Cases 1.a (left) and 1.b (right) with $L=11$. The dashed blue lines represent the pairing.}\label{fig:biggap_case1}
	        \end{figure}
             
             \item Case 2: Maker deviates from regular play during Phase 2. Say this happens after exactly $i-1$ rounds of regular play have been played ($i \in \interval{3}{L+1}$) is unique), during which the successive vertices claimed by the players were $a_1,a_2,a_3,\ldots,a_{i-1}$ for Maker and $\ov{a_1},\ov{a_2},b_1,\ldots,b_{i-3}$ for Breaker. At this point, Maker deviates and claims some $x \neq a_i$ instead of claiming $a_i$ as per regular play. Now, Breaker claims $a_i$ himself. Note that the only edges that Breaker has not yet played in at this point are $e_{i+1},\ldots,e_L$. Again, in all cases, we exhibit a pairing $\Pi(x)$ which is complete in the resulting position (see Figure \ref{fig:biggap_case2}).
             
             \begin{itemize}
                 \item Case 2.a: $x=a_{i_0}$ for some $i_0 \in \interval{i+1}{L+1}$. We define:
                 $$\Pi(x) = \{\{a_j,b_j\}\mid j \in \interval{i+1}{i_0-1}\} \cup \{\{a_j,b_{j-1}\}\mid j \in \interval{i_0+1}{L+1}\}.$$
                 \item Case 2.b: $x=b_{i_0}$ for some $i_0 \in \interval{i-2}{L}$.
                 We define:
                 $$ \Pi(x) = \left\{\{a_{L-2j},a_{L+1-2j}\} \,\middle|\, j \in\interval{0}{\left\lceil\frac{L-i}{2}\right\rceil-1} \right\}.$$
                 The pair of lowest index is $\{a_{i+2},a_{i+3}\}$ if $L$ and $i$ have same parity or $\{a_{i+1},a_{i+2}\}$ otherwise, so it covers $e_{i+1}$. \qedhere
             \end{itemize}
             
             \begin{figure}[h]
		        \centering
		        \includegraphics[scale=.55]{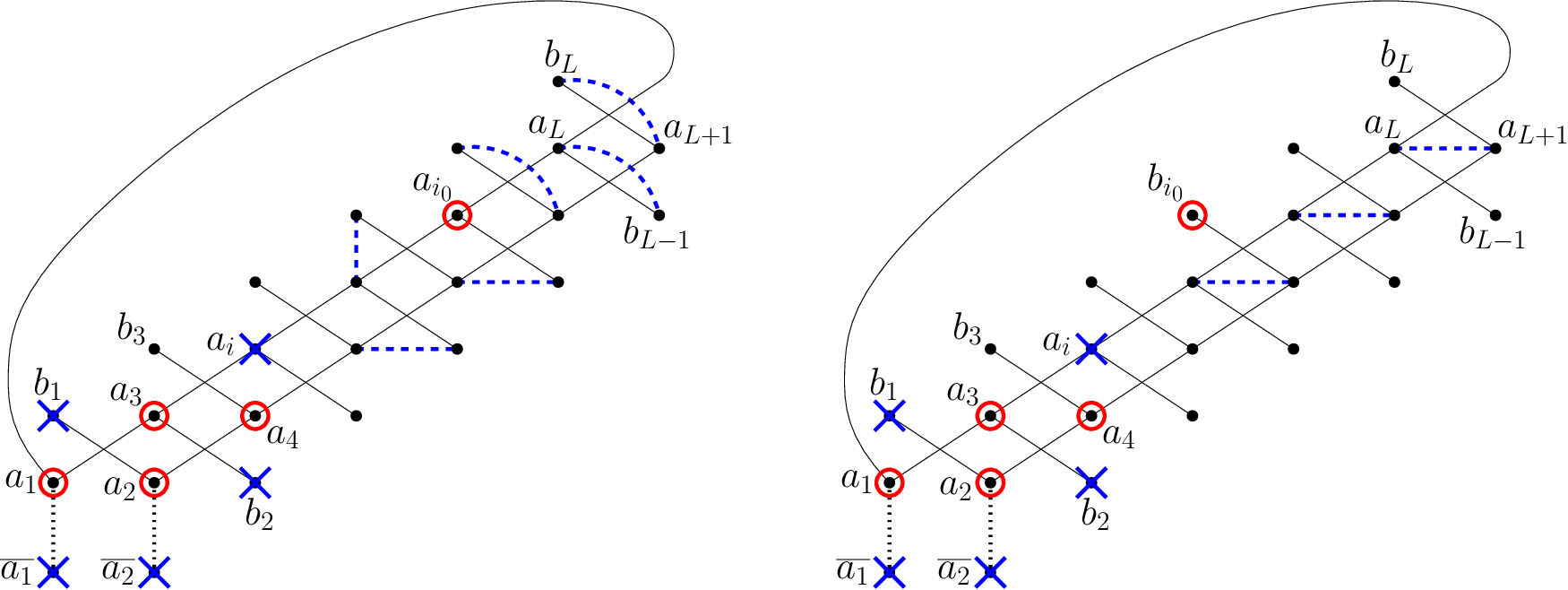}
		        \caption{Illustration of Cases 2.a (left) and 2.b (right) with $L=11$ and $i=5$. The dashed blue lines represent the pairing.}\label{fig:biggap_case2}
	        \end{figure}
	        
         \end{itemize}
     \end{proofclaim}
     Putting Claims \ref{cla:k-necklace1} and \ref{cla:k-necklace2} together, we get optimality of regular play from both players, hence $\theta(\hyp)=4$ and $\tau(\hyp)=\frac{n+1}{2}$. The final assertion of the proposition is then a direct consequence of Proposition \ref{prop:theta}.
\end{proof}

We have seen that, in 3-uniform hypergraphs, all but one of the vertices chosen by Maker only help her to make threats for one or two consecutive rounds and become useless afterwards, hence why she can manage with just three tokens. Things are very different in 4-uniform hypergraphs: there may be many vertices that Maker must hold on to for a long time, which can only be done if she has many tokens at her disposal. We actually show that the number of tokens that Maker needs to win can be linear in the total number of vertices.

\begin{proposition}\label{prop:bigtheta}
    For all $k \geq 4$ and all $n \geq 2k+12$, there exists a $k$-uniform hypergraph $\hyp$ on $n$ vertices such that $\theta(\hyp) \geq \left\lfloor\frac{n}{6}\right\rfloor$.
\end{proposition}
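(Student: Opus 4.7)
My plan is to reduce to the base case $k=4$ via Proposition~\ref{prop:thetaplus1}, then construct a suitable 4-uniform hypergraph. For the reduction, suppose we have a 4-uniform hypergraph $\hyp_0$ on $n_0 = n - 2(k-4)$ vertices with $\theta(\hyp_0) \geq \lfloor n_0/6 \rfloor$. Applying Proposition~\ref{prop:thetaplus1} exactly $k-4$ times produces a $k$-uniform hypergraph on $n$ vertices whose $\theta$-value is $\theta(\hyp_0) + (k-4)$. A short case analysis on $n \bmod 6$ confirms that $\lfloor (n - 2(k-4))/6 \rfloor + (k-4) \geq \lfloor n/6 \rfloor$ for every $k \geq 4$ (using $\lfloor (n-2d)/6\rfloor \geq \lfloor n/6 \rfloor - \lceil d/3 \rceil$ and $d \geq \lceil d/3 \rceil$), so the bound is preserved. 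The hypothesis $n \geq 2k+12$ guarantees $n_0 \geq 20$, which will be the minimum size needed in the base case.

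For the base case $k=4$, I would build a 4-uniform hypergraph on $n_0$ vertices structured around $m = \lfloor n_0/6 \rfloor$ \emph{anchor gadgets} $G_1,\ldots,G_m$, each consuming roughly $6$ vertices and containing a designated anchor vertex $a_i$, together with a chain of \emph{bridge edges} linking consecutive gadgets. The design generalizes the necklace construction from Proposition~\ref{prop:gap}: there, Maker only needed to keep a single persistent anchor on $a_1$ in addition to a constant-sized bundle of sliding tokens, whereas here the structure forces her to simultaneously hold one anchor token per gadget. Maker's intended winning strategy with $m$ tokens mimics Phases 1 and 2 of the necklace: she first places the anchors $a_1,\ldots,a_m$ one by one, with Breaker forced to respond locally inside the respective gadgets because of a one-move threat, then uses the remaining tokens to force her way along the bridges and fill a final winning edge.

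The heart of the proof is the lower bound: for any $a < m$, Breaker wins the $(a,\ast)$-game. My strategy would be to invoke Proposition~\ref{prop:pairing}: at any position of the game, Maker has at most $a < m$ tokens on the board, so by pigeonhole at least one gadget $G_j$ contains no Maker token. Breaker would use the vacant vertices of $G_j$ together with a fixed internal pairing on every other gadget and on the bridges to exhibit a complete pairing in the current position. As Maker's tokens move around and the identity of the ``empty gadget'' changes, Breaker adapts his pairing accordingly, always keyed to a currently anchor-free gadget. Since Maker can never occupy all $m$ anchors simultaneously, such a gadget always exists, and by Proposition~\ref{prop:pairing} Breaker wins from that point on.

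The main obstacle is the construction itself: designing the gadgets and the bridges so that both (a) Maker's forcing strategy with $m$ tokens genuinely wins, and (b) a complete pairing for Breaker exists whenever some gadget is anchor-free. In particular, verifying the second property requires ensuring that switching between ``anchor-free gadgets'' is consistent with Breaker's prior token placements and does not leak a winning edge. This will demand a careful case analysis over the types of Maker's moves (placing a new token versus moving an old one, creating a local threat in a gadget versus an inter-gadget threat through a bridge), similar in spirit to the case analysis used in the proof of Proposition~\ref{prop:gap}, but substantially more involved because of the many anchors that must be tracked simultaneously.
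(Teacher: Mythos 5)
Your reduction to the case $k=4$ via Proposition~\ref{prop:thetaplus1} is correct and is exactly the paper's first step, including the arithmetic and the role of the hypothesis $n \geq 2k+12$. But the heart of your argument for the base case has a genuine flaw, visible already from an internal contradiction: you claim that whenever Maker has fewer than $m$ tokens, some anchor-free gadget $G_j$ exists and Breaker can exhibit a complete pairing ``keyed'' to it. At the starting position $(\hyp,\varnothing,\varnothing)$ \emph{every} gadget is anchor-free, so if an anchor-free gadget implied a complete pairing, Proposition~\ref{prop:pairing} would make Breaker win against \emph{any} number of Maker tokens --- contradicting the requirement that Maker wins the $(m,\ast)$-game, which your construction must also satisfy. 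The point you are missing is temporal: an \emph{untouched} gadget still has all its auxiliary edges alive, and no cheap pairing covers them. In the paper's construction a gadget only becomes pairing material \emph{after} it has been spent: when Maker first places a token on an anchor $c_i$, Breaker's pairing strategy makes him claim $\ov{c_i}$, killing all edges of $E_{c_i}$; only once Maker later \emph{vacates} $c_i$ can Breaker use the pair $\{c_i,b_i\}$ to cover the path edge $e_i$ and complete a pairing. Moreover, an anchor being free at a given moment does not suffice even for spent gadgets, because Maker's other tokens sitting on path vertices can break the pairs needed there; ``adapting the pairing'' as the empty gadget changes is exactly the unjustified step, since Proposition~\ref{prop:pairing} requires completeness \emph{in the current position}.

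The paper resolves both issues with two ideas absent from your sketch. First, the construction is not a chain of local gadgets with bridges: each anchor $c_i$ is placed in \emph{two} path edges $e_i=\{a_{i-1},a_i,b_i,c_i\}$ and $e_{N+i}=\{a_{N+i-1},a_{N+i},b_{N+i},c_i\}$ at distance $N$ along the path, so that the ``control windows'' $I_i$ of all anchors overlap at a common vertex ($b_N$, which Maker must take on her very last move). This is what genuinely forces simultaneous control: with a purely local gadget chain, Maker could release each anchor after passing its gadget and recycle tokens, exactly as she recycles the sliding tokens in Proposition~\ref{prop:gap}, and your claimed property that she must ``simultaneously hold one anchor per gadget'' would simply fail. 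Second, the lower bound is not a pigeonhole applied at arbitrary positions but an argument by contradiction applied at two carefully chosen moments: the \emph{first} time Maker enters an interval $I_i$, the interval is clean of her tokens, so if she does not already hold $c_i$ Breaker answers on $c_i$ and completes a pairing (Claim~\ref{cla:bigtheta_control1}); then a token-counting argument (using $e_N \subseteq M_T$ and $|M_T| \leq N+1$) shows she must at some point remove a token from some $c_i$, and at the \emph{first} such removal one entire side $I_i^L$ or $I_i^R$ is Maker-free (Claim~\ref{cla:bigtheta_control2}), which is what makes the adapted pairing complete. Your proposal correctly identifies that the construction is the main obstacle, but the mechanism you propose for the lower bound would not survive making that construction precise.
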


\begin{proof}
    Let us first show that addressing the case $k=4$ is sufficient. Suppose that the result holds for $k=4$. Let $k \geq 5$, and let $n \geq 2k+12$. Since $n-2(k-4) \geq 20$, we know there exists a 4-uniform hypergraph $\hyp$ on $n-2(k-4)$ vertices such that $\theta(\hyp) \geq \left\lfloor\frac{n-2(k-4)}{6}\right\rfloor$. Through $k-4$ consecutive applications of Proposition \ref{prop:thetaplus1}, we get a $k$-uniform hypergraph $\hyp'$ on $n$ vertices such that $\theta(\hyp')=\theta(\hyp)+(k-4) \geq \left\lfloor\frac{n-2(k-4)}{6}\right\rfloor + (k-4) = \left\lfloor\frac{n+4k-16}{6}\right\rfloor \geq \left\lfloor\frac{n}{6}\right\rfloor$.
    
    To prove the case $k=4$, we will show the following: for all $N \geq 2$, there exists a 4-uniform hypergraph $\hyp$ on $6N+8$ vertices such that $\theta(\hyp)=N+2$. Indeed, suppose that this statement is true. Given $n \geq 20$, we can then define $N=\left\lfloor\frac{n-8}{6}\right\rfloor$, and get a 4-uniform hypergraph $\hyp$ on $6N+8$ vertices such that $\theta(\hyp)=N+2$. Up to adding at most five isolated vertices to $\hyp$, we get a 4-uniform hypergraph $\hyp'$ on $n$ vertices such that $\theta(\hyp')=\theta(\hyp)=N+2 = \left\lfloor\frac{6N+13}{6}\right\rfloor \geq \left\lfloor\frac{n}{6}\right\rfloor$.
    
    Let $N \geq 2$. We define the 4-uniform hypergraph $\hyp$ as follows (see Figure \ref{fig:bigtheta}):
    \begin{align*}
        V(\hyp) & = \{a_0,\ldots,a_{2N},b_1,\ldots,b_{2N},c_1,\ldots,c_N,\ov{c_1},\ldots,\ov{c_N},\ov{a_0},\ov{a_{2N}},u_1,u_2,u_3,u_4,u_5\}; \\
        E(\hyp) & = \{e_1,\ldots,e_{2N}\} \cup E_{a_0} \cup E_{a_{2N}} \cup E_{c_1} \cup \cdots \cup E_{c_N}, \,\,\,\text{where} \\
         e_i & = \{a_{i-1},a_i,b_i,c_i\} \,\,\,\text{for all $i \in \interval{1}{N}$,}\,\,\, \\
         e_i & = \{a_{i-1},a_i,b_i,c_{i-N}\} \,\,\,\text{for all $i \in \interval{N+1}{2N}$,}\,\,\,\text{and}\\
         E_v & = \{ \{v,\ov{v},u_j,u_{j'}\} \mid j \neq j' \in \interval{1}{5}\} \,\,\,\text{for all $v \in \{a_0,a_{2N},c_1,\ldots,c_N\}$}.
    \end{align*}

    \begin{figure}[h]
		\centering
		\includegraphics[scale=.55]{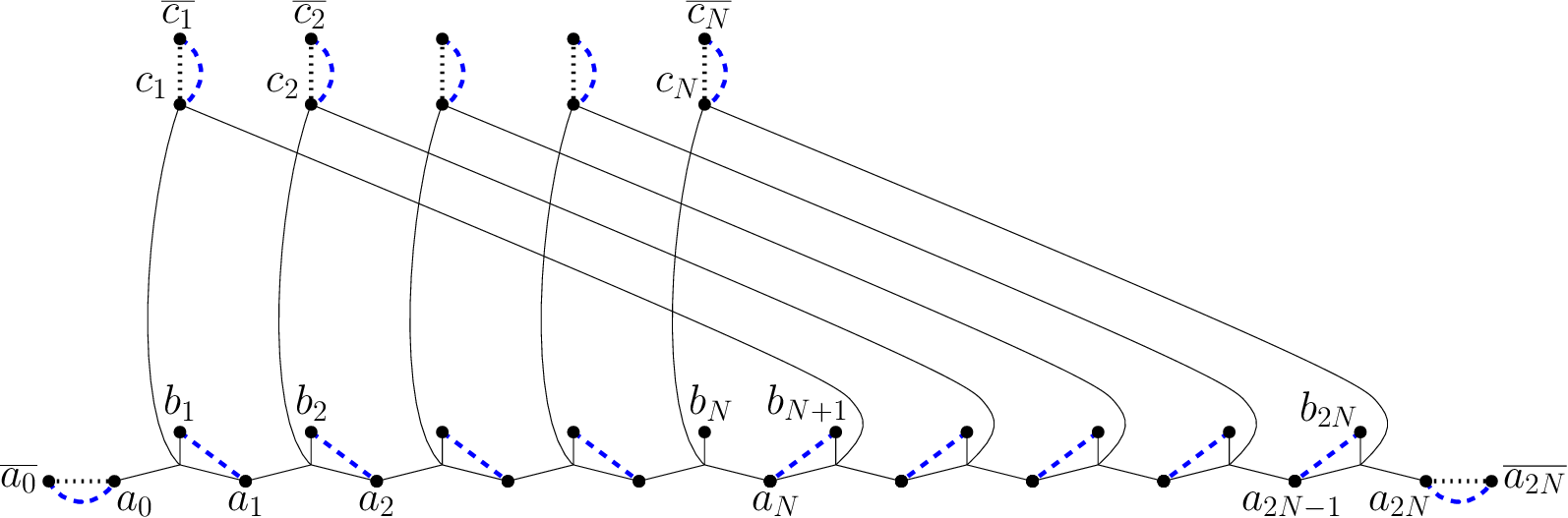}
		\caption{The construction for Proposition \ref{prop:bigtheta}. Dotted black lines symbolize the edges in the $E_v$'s. The dashed blue lines represent the pairing $\Pi$.}\label{fig:bigtheta}
	\end{figure}
            
    It is clear that $\hyp$ has $6N+8$ vertices. To conclude, we must show that $\theta(\hyp)=N+2$. The idea behind our construction is simple. Similarly to the $E_i$'s in the proof of Proposition \ref{prop:gap}, the $E_v$'s are there to allow Maker to place tokens on each $v\in\{a_0,a_{2N},c_1,\ldots,c_N\}$ uncontested: indeed, if Breaker does not answer by claiming the corresponding $\ov{v}$, then Maker can place a token on $\ov{v}$ herself and win in two more moves by placing tokens on any two of the vertices $u_1,u_2,u_3,u_4,u_5$ next. The upper bound on $\theta(\hyp)$ comes from the observation that, if Maker had tokens on $a_0,a_{2N},c_1,\ldots,c_N$, then she could leave them there and win using a forcing strategy and just one extra token, using the fact that the edges $e_1,\ldots,e_{2N}$ would then form a path behaving exactly like a nunchaku does in the 3-uniform case. As for the lower bound on $\theta(\hyp)$, the key is that the edges in each $E_v$ are ``single-use'': if Maker ever vacates the vertex $v$, then putting a token back on $v$ later will have no effect since Breaker had already claimed $\ov{v}$ the first time. As such, Maker needs to keep tokens on $c_1,\ldots,c_N$ throughout. We now proceed with the rigorous proofs.
    
    Let us start by showing that $\theta(\hyp) \leq N+2$, which is straightforward. Playing the $(N+2,\ast)$-game on $\hyp$, Maker starts by placing tokens on $a_0,c_1,\ldots,c_N$, forcing Breaker to claim $\ov{a_0},\ov{c_1},\ldots,\ov{c_N}$ respectively. Note that she does not actually put a token on $a_{2N}$ just yet, thus saving her last token, which she places on $a_1$. As Maker is threatening to fill $e_1$, Breaker is forced to claim $b_1$. After that, for all $i \in \interval{2}{2N}$, Maker moves her token on $a_{i-2}$ to put it on $a_i$, threatening to fill $e_i$ and forcing Breaker to claim $b_i$. Maker can then move any three tokens other than the one on $a_{2N}$, which she places successively on $\ov{a_{2N}}$ and on any two of $u_1,u_2,u_3,u_4,u_5$, thus filling some edge in $E_{a_{2N}}$ and winning the game.
    
    Let us finally show that $\theta(\hyp) \geq N+2$. We start by defining the following pairing (represented in Figure \ref{fig:bigtheta}), which is incomplete in $\hyp$ but covers every edge except for $e_N$:
    $$ \Pi = \{\{v,\ov{v}\} \mid v \in \{a_0,a_{2N},c_1,\ldots,c_N\}\} \cup \{\{a_i,b_i\} \mid i \in \interval{1}{N-1}\} \cup \{\{a_i,b_{i+1}\} \mid i \in \interval{N}{2N-1}\}.$$
    We will establish that the following two-phase strategy is a winning Breaker strategy for the $(N+1,\ast)$-game played on $\hyp$:
    \begin{itemize}[noitemsep,nolistsep]
        \item Phase 1: If Breaker can claim a vertex such that the resulting position admits a complete pairing, then he claims an arbitrary such vertex and switches to Phase 2. Otherwise, he applies the pairing strategy associated to the incomplete pairing $\Pi$. This means that, denoting by $x$ the vertex that Maker has just placed a token on, if $x$ is in a pair from $\Pi$ then Breaker answers by claiming its twin i.e. the other vertex of that pair, whereas if $x$ is in no pair from $\Pi$, or if it is but its twin has already been claimed by Breaker (which can happen since Maker can vacate a vertex and then re-place a token on it later), then Breaker claims an arbitrary vertex.
        \item Phase 2: Breaker applies the pairing strategy associated to a complete pairing until the end.
    \end{itemize}
    Let the $(N+1,\ast)$-game play out, with Maker employing any strategy of her choice and Breaker applying the above strategy. What remains to be seen is that Phase 2 is actually reached, since Proposition \ref{prop:pairing} then allows us to conclude.
    
    Whenever Maker moves a token that was already placed, we see it as two consecutive actions: Maker removes a token, then Maker places a token. It will help to consider the ``remove'' action as part of the previous round. Therefore, we define play during Round $t \geq 1$ as follows, in that order:
	\begin{enumerate}[noitemsep,nolistsep,label=\arabic*.]
		\item Maker places a token on some free vertex $x_t$. (We define $M_t \subseteq \{x_1,\ldots,x_t\}$ as the set of all vertices on which Maker has tokens at this point, with $M_0 = \varnothing$.)
		\item Breaker claims some free vertex $y_t$.
		\item If applicable, Maker removes one of her tokens. (We define $\pos_t$ as the position obtained at the end of Round $t$, with $\pos_0=(\hyp,\varnothing,\varnothing)$ being the starting position.) 
	\end{enumerate}
	
	Suppose for a contradiction that Maker wins. Let $T$ be the duration of the game i.e. Maker completes an edge during Round $T$. Note that $x_t,M_t$ are defined for $t \in \interval{0}{T}$ while $y_t,\pos_t$ are defined for $t \in \interval{0}{T-1}$. To account for moves being made from a given position $\pos=(\hyp,M,B)$, we introduce the notation $(\pos,M',B')=(\hyp,M \cup M', B \cup B')$.
	
	\begin{claim}\label{cla:bigtheta_pairing}
		All of Breaker's moves are made according to the pairing strategy associated to $\Pi$ and, for all $t \in \interval{1}{T}$ and for all free vertex $y$ in the position $(\pos_{t-1},\{x_t\},\varnothing)$, there is no pairing that is complete in the position $(\pos_{t-1},\{x_t\},\{y\})$. Moreover, for all $t \in\interval{0}{T-1}$, $\Pi$ covers all edges in $E(\pos_t)$ apart from $e_N$.
	\end{claim}
	\begin{proofclaim}[Proof of Claim \ref{cla:bigtheta_pairing}]
		Maker winning means Breaker is stuck in Phase 1 for the whole duration of the game. As for the last assertion of this claim, we know it holds for $t=0$, and the fact that Breaker applies the pairing strategy associated to $\Pi$ ensures that it remains true throughout the game.
	\end{proofclaim}
	
	The following claim consists in the simple observation that the edge filled by Maker to win is necessarily $e_N$, and that Maker only places a token on $b_N$ as her very last move.
	
	\begin{claim}\label{cla:bigtheta_last}
		We have $e_N \subseteq M_T$, moreover $x_1,\ldots,x_{T-1} \neq b_N$ and $x_T=b_N$.
	\end{claim}
	\begin{proofclaim}[Proof of Claim \ref{cla:bigtheta_last}]
		Since Breaker applies the pairing strategy associated to $\Pi$, which covers all edges in $E(\hyp)$ apart from $e_N$, the only edge that Maker can fill is $e_N$. Now, suppose for a contradiction that $x_t=b_N$ for some $t \in\interval{1}{T-1}$. Since $t<T$, there exists $y \in e_N \setminus M_t$: consider the position $\pos=(\pos_{t-1},\{b_N\},\{y\})$. We know by Claim \ref{cla:bigtheta_pairing} that $\Pi$ covers all edges in $E(\pos_{t-1})$ apart from $e_N$, moreover $e_N \not\in E(\pos)$ since $y \in e_N$, so $\Pi$ is complete in $\pos$. This contradicts Claim \ref{cla:bigtheta_pairing}.
	\end{proofclaim}
	
	Let us introduce the following notations, for all $i \in \interval{1}{N}$ (see Figure \ref{fig:intervals}):
	\begin{itemize}[noitemsep,nolistsep]
		\item $I_i^L = \{a_i,\ldots,a_{N-1}\} \cup \{b_i,\ldots,b_N\}$;
		\item $I_i^R = \{a_N,\ldots,a_{N+i-1}\} \cup \{b_N,\ldots,b_{N+i}\}$;
		\item $I_i = I_i^L \cup I_i^R$;
		\item $t(i) = \min\{t \in \interval{1}{T} \mid x_t \in I_i\}=\min\{t \in \interval{1}{T} \mid M_t \cap I_i \neq \varnothing\}$.
	\end{itemize}
	Note that $t(i)$ is well defined since $b_N \in I_i$ for all $i \in \interval{1}{N}$ and $b_N \in M_T$ by Claim \ref{cla:bigtheta_last}.
	\begin{figure}[h]
		\centering
		\includegraphics[scale=.58]{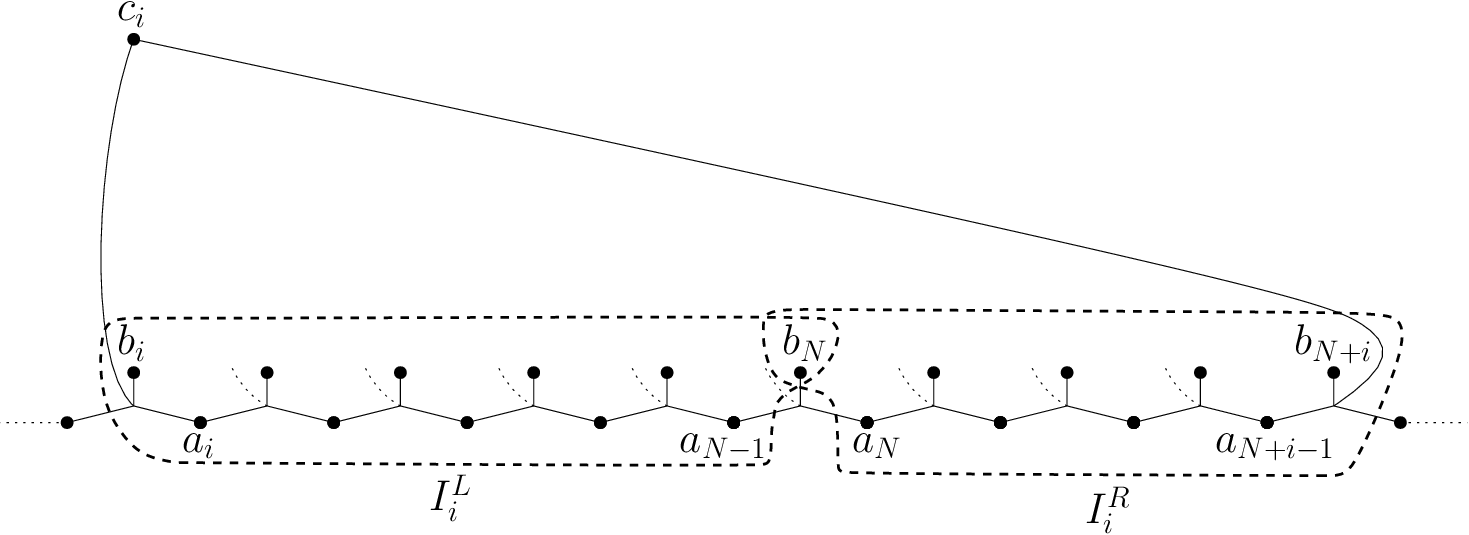}
		\caption{Definition of $I_i^L$ and $I_i^R$.}\label{fig:intervals}
	\end{figure}
	
	In the proof of the upper bound, we have seen how controlling the $c_i$'s (i.e. having tokens on $c_1,\ldots,c_N$) was key for Maker: it allowed her to force all of Breaker's moves and thus make progress from left to right until she won the game. The idea for the end of the proof relies on this principle, as follows. The first time that Maker places a token on some $x_t$ inside the ``interval" $I_i$, Maker must control $c_i$ (this will be Claim \ref{cla:bigtheta_control1}), otherwise Breaker could answer by claiming $c_i$ himself which breaks the path on both sides of $x_t$ and creates a complete pairing. Since $b_N \in I_i$ for all $i$, this means each $c_i$ is controlled by Maker at some point. However, Maker uses at most $N+1$ tokens in total, therefore she necessarily removes a token from some $c_i$ during the game (this will be Claim \ref{cla:bigtheta_control2}). The first time this happens, said $c_i$ is freed up to help building a complete pairing since Breaker has claimed $\ov{c_i}$ already. We now provide the details.
	
	\begin{claim}\label{cla:bigtheta_control1}
		For all $i \in\interval{1}{N}$, we have $c_i \in M_{t(i)-1}$.
	\end{claim}
	\begin{proofclaim}[Proof of Claim \ref{cla:bigtheta_control1}]
		Suppose for a contradiction that $c_i \not\in M_{t(i)-1}$ for some $i \in\interval{1}{N}$. Using the fact that there is no Maker token inside $I_i$ in $\pos_{t(i)-1}$ by minimality of $t(i)$, we are going to build a pairing $\Pi'$ that is complete in $\pos=(\pos_{t(i)-1},\{x_{t(i)}\},\{c_i\})$. This should be understood as: ``After Maker placed her token on $x_{t(i)}$, Breaker could have claimed $c_i$ and obtained a complete pairing''. We construct that pairing from $\Pi$ by modifying only the pairs inside $I_i$ (see Figure \ref{fig:claim_intervals}):
		\begin{itemize}[noitemsep,nolistsep]
			\item First case: $x_{t(i)}=b_i$ (the case $x_{t(i)}=b_{N+i}$ is analogous). We define $\Pi'$ to be the same as $\Pi$ except that the pairs inside $I_i$ are replaced by the pairs $\{a_{\ell},b_{\ell}\}$ for $\ell \in \interval{i+1}{N+i-1}$.
			\item Second case: $x_{t(i)}=b_j$ for some $j\in\interval{i+1}{N+i-1}$. We define $\Pi'$ to be the same as $\Pi$ except that the pairs inside $I_i$ are replaced by the pairs $\{a_{j-1},a_j\}$, $\{a_{\ell},b_{\ell+1}\}$ for $\ell \in \interval{i}{j-2}$, and $\{a_{\ell},b_{\ell}\}$ for $\ell \in \interval{j+1}{N+i-1}$.
			\item Third case: $x_{t(i)}=a_j$ for some $j\in\interval{i}{N+i-1}$. We define $\Pi'$ to be the same as $\Pi$ except that the pairs inside $I_i$ are replaced by the pairs $\{a_{\ell},b_{\ell+1}\}$ for $\ell \in \interval{i}{j-1}$ and $\{a_{\ell},b_{\ell}\}$ for $\ell \in \interval{j+1}{N+i-1}$.
		\end{itemize}
		In all cases, the newly defined pairs cover the edges $e_{i+1},\ldots,e_{N+i-1}$ (note that this includes $e_N$ which was the only edge not covered by $\Pi$). Moreover $e_i,e_{N+i} \not\in E(\pos)$ since $c_i \in e_i$ and $c_i \in e_{N+i}$, therefore $\Pi'$ covers all edges in $E(\pos)$. This contradicts Claim \ref{cla:bigtheta_pairing}.
	\end{proofclaim}
	
	\begin{figure}[h]
		\centering
		\includegraphics[scale=.58]{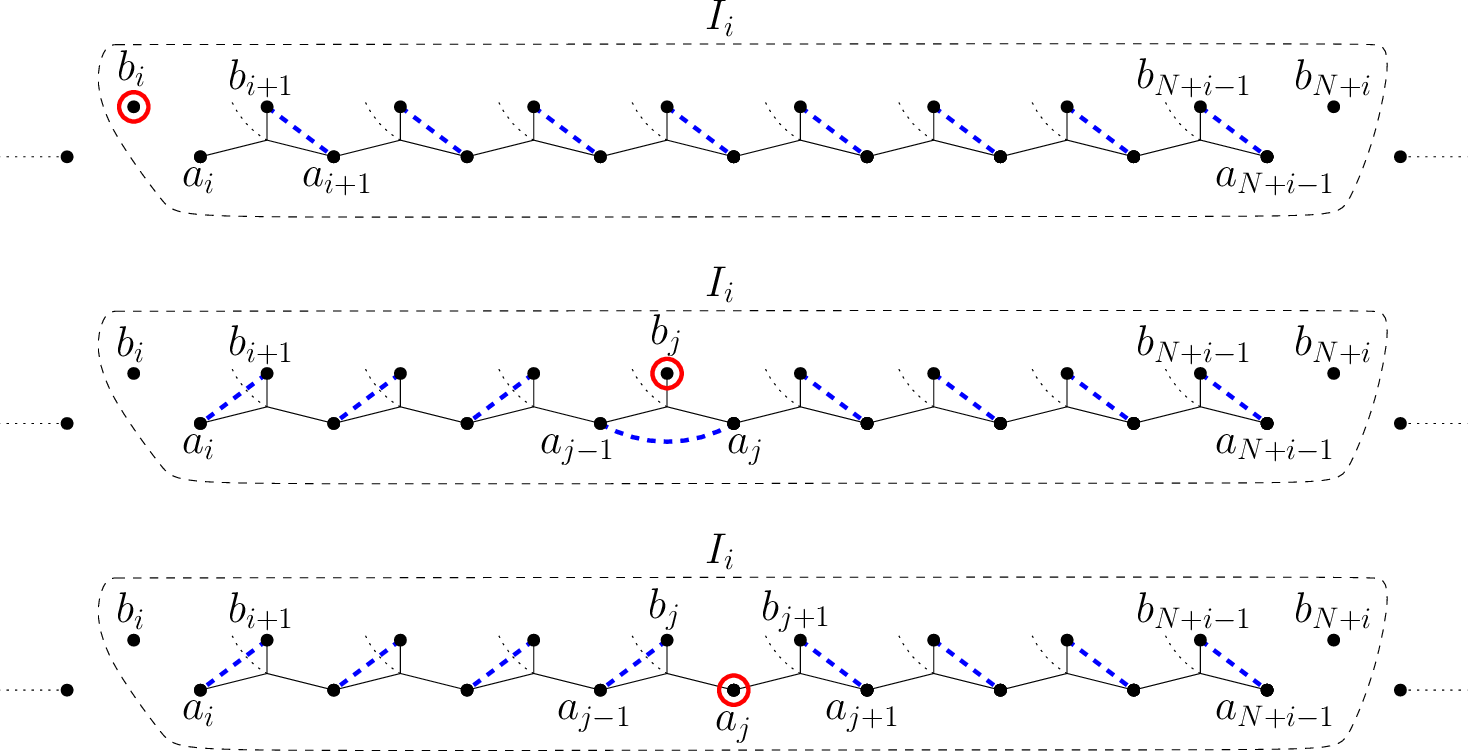}
		\caption{The pairing used inside $I_i$ in the proof of Claim \ref{cla:bigtheta_control1}. Three cases from top to bottom: $x_{t(i)}=b_i$, $x_{t(i)}=b_j$ ($j\in\interval{i+1}{N+i-1}$), $x_{t(i)}=a_j$. The dashed blue lines represent the pairing. The edges $e_i$ and $e_{i+N}$ are not drawn, to emphasize the fact that they need not be covered since they contain $c_i$.}\label{fig:claim_intervals}
	\end{figure}
	
	\begin{claim}\label{cla:bigtheta_control2}
		There exist $(t,i) \in \interval{0}{T-1} \times \interval{1}{N}$ such that $c_i \in M_t$ and $c_i \not\in M_{t+1}$. In other words, at some point during the game, Maker removes a token from some $c_i$. Moreover, for such $(t,i)$ with $t$ minimal, we have $M_t \cap I_i^L = \varnothing$ or $M_t \cap I_i^R = \varnothing$.
	\end{claim}
	\begin{proofclaim}[Proof of Claim \ref{cla:bigtheta_control2}]
		Suppose for a contradiction that the first assertion is false: then $c_i \in M_T$ for all $i \in \interval{1}{N}$ by Claim \ref{cla:bigtheta_control1}. Since $e_N \subseteq M_T$ by Claim \ref{cla:bigtheta_last}, we get $|M_T| \geq N+3$, contradicting the fact that Maker only has $N+1$ tokens.
		
		As for the second assertion, let $(t,i) \in \interval{0}{T-1} \times \interval{1}{N}$ such that $c_i \in M_t$ and $c_i \not\in M_{t+1}$, with $t$ minimal. Suppose for a contradiction that there exist $v^L \in M_t \cap I_i^L$ and $v^R \in M_t \cap I_i^R$. Note that $v^L \neq v^R$: indeed, $I_i^L \cap I_i^R=\{b_N\}$ by definition, and Claim \ref{cla:bigtheta_last} ensures that $b_N \not\in M_t$ since $t<T$. Now, for all $j \in \interval{1}{N}$:
		\begin{itemize}[noitemsep,nolistsep]
			\item We have $t \geq t(j)$. Indeed, if $j \leq i$ then $I_j \supseteq I_j^L \supseteq I_i^L \ni v^L$, and if $j \geq i$ then $I_j \supseteq I_j^R \supseteq I_i^R \ni v^R$.
			\item We know $c_j \in M_{t(j)-1}$ by Claim \ref{cla:bigtheta_control1}. By minimality of $t$, this token on $c_j$ is not removed before Round $t$, hence $c_j \in M_t$.
		\end{itemize}
		In conclusion, we have $\{c_1,\ldots,c_N,v^L,v^R\} \subseteq M_t$ hence $|M_t| \geq N+2$, again contradicting the fact that Maker only has $N+1$ tokens.
	\end{proofclaim}
	
    Let $(t,i) \in \interval{0}{T-1} \times \interval{1}{N}$ satisfying Claim \ref{cla:bigtheta_control2}, with $t$ minimal, and suppose that $M_t \cap I_i^L = \varnothing$ (the case $M_t \cap I_i^R = \varnothing$ is analogous). We are going to build a pairing $\Pi'$ that is complete in $\pos_t$, using several facts:
	\begin{itemize}[noitemsep,nolistsep]
		\item Recall that $\pos_t$ is the position obtained at the end of Round $t$ i.e. just after the token on $c_i$ has been removed, so $\pos_t$ contains Maker tokens on exactly $M_t \setminus \{c_i\}$. In particular, there is no Maker token inside $I_i^L \cup \{c_i\}$: all of these vertices may be used in our pairing.
		\item Since $c_i \in M_t$, there exists $t' \in \interval{1}{t}$ such that $x_{t'}=c_i$. This implies $y_{t'}=\ov{c_i}$ by definition of Breaker's strategy. As a result, all edges in $E_{c_i}$ have already been taken care of by Breaker in $\pos_{t'}$, and in $\pos_t$ as well since $t \geq t'$.
	\end{itemize}
	Define $\Pi'$ to be the same as $\Pi$ except that $\{c_i,\ov{c_i}\}$ and the pairs inside $I_i^L$ are replaced by $\{c_i,b_i\}$ and $\{a_{\ell},b_{\ell+1}\}$ for $\ell \in \interval{i}{N-1}$, as in Figure \ref{fig:claim_intervals_bis}.
	
	\begin{figure}[h]
	\centering
	\includegraphics[scale=.58]{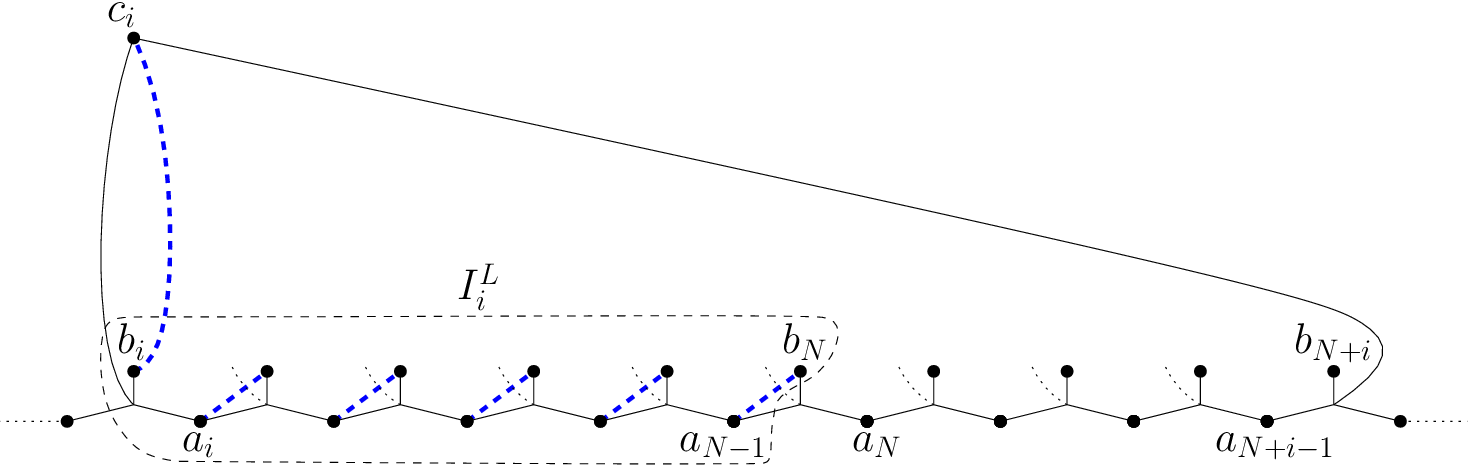}
	\caption{The pairing used inside $I_i^L \cup \{c_i\}$ if $M_t \cap I_i^L = \varnothing$. The dashed blue lines represent the pairing.}\label{fig:claim_intervals_bis}
	\end{figure}

	These new pairs cover $e_i,\ldots,e_N$, so the above facts ensure that $\Pi'$ is complete in $\pos_t$. In particular, if $y$ denotes the twin of $x_{t+1}$ in $\Pi'$ (or an arbitrary vertex if $x_{t+1}$ is in no pair of $\Pi'$), then $\Pi'$ is also complete in $(\pos_t,\{x_{t+1}\},\{y\})$, which contradicts Claim \ref{cla:bigtheta_pairing}.
\end{proof}

\section{Complexity results}\label{section5}

\subsection{General results}

Since the $(*,*)$-game coincides with the classical Maker-Breaker game, token positional games on $4$-uniform hypergraphs inherit \PSPACE-hardness from the classical version.

\begin{theorem}[{\cite[Theorem 3.1]{galliot2025}}]
    Deciding the winner of a token positional game is \PSPACE-hard, even restricted to $4$-uniform hypergraphs.
\end{theorem}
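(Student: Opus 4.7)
The plan is to establish PSPACE-hardness via a trivial polynomial-time reduction from the classical Maker-Breaker game on 4-uniform hypergraphs, which has been shown to be PSPACE-hard in the cited reference. Given a 4-uniform hypergraph $\hyp$ viewed as an instance of classical Maker-Breaker, I would output the same $\hyp$ together with $a=b=\ast$ (or equivalently $a=b=|V(\hyp)|$, which fits within polynomial input size) as the token positional game instance. The input hypergraph is unchanged, so 4-uniformity is preserved.

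Correctness reduces to the identity already announced in the introduction: the $(\ast,\ast)$-game on $\hyp$ coincides with the classical Maker-Breaker game on $\hyp$. I would briefly justify this by observing that, once $a$ and $b$ are at least $\lceil |V(\hyp)|/2 \rceil$, each player has enough fresh tokens to finish the game by placement only, so token moves are never forced. Moreover, moving an already-placed token or passing strictly worsens a player's position (vacating a vertex offers the opponent a free claim on it, while passing forfeits a move), so optimal play consists exclusively of placements of fresh tokens, and this restricted form of play is precisely the classical Maker-Breaker game.

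I do not anticipate a genuine obstacle, since the equivalence is essentially stated in the paper's preliminaries. The only step worth being careful about is verifying that allowing moves and passes cannot improve either player's outcome: a classical Maker strategy transfers to the token game by playing placements only, and conversely any winning token-game strategy for either player can be simulated classically by interpreting every non-placement opponent action as an arbitrary claim on an available vertex, using monotonicity in the set of Maker- and Breaker-claimed vertices. With this equivalence in hand, Maker wins the constructed token positional instance if and only if she wins the original classical instance, completing the reduction.
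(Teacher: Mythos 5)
Your proposal is correct and takes essentially the same approach as the paper: the paper simply imports this theorem from the cited reference, observing that the $(\ast,\ast)$-game coincides with the classical Maker-Breaker game, which is precisely your identity reduction on the same $4$-uniform hypergraph. Your additional care in verifying via simulation arguments that token moves and passes cannot help either player only spells out what the paper treats as a definitional remark in its preliminaries.
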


However, contrary to classical Maker-Breaker games, if there are strictly less tokens than vertices (when adding up both players' tokens) then one cannot ensure that the game will end in a polynomial number of moves. Hence, token positional games might not be in \PSPACE. However, we can prove that they lie in \EXPTIME.

\begin{proposition}
    Deciding the winner of a token positional game can be done in \EXPTIME.
\end{proposition}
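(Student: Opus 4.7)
The plan is to model the game as a two-player reachability game on a finite state graph and apply a standard attractor (backward induction) computation. Let $n=|V(\hyp)|$. A position of the $(a,b)$-game is completely described by a triple $(M,B,p)$ where $M,B\subseteq V(\hyp)$ are disjoint with $|M|\leq a$ and $|B|\leq b$, recording the current token placements of Maker and Breaker, and $p\in\{\textup{Maker},\textup{Breaker}\}$ indicates whose turn it is. The total number of such triples is at most $2\cdot 2^n \cdot 2^n = 2^{2n+1}$. From each position, the player to move has at most $n+1$ legal moves (place or move one of their tokens onto any of the at most $n$ free vertices, or pass), so the game graph $G$ has size $2^{O(n)}$.

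Next I would compute the \emph{Maker attractor} $W\subseteq V(G)$: the least set of positions containing every $(M,B,p)$ such that $e\subseteq M$ for some $e\in E(\hyp)$ (these are the Maker-terminal positions), closed under the usual rules that a Maker-turn position lies in $W$ as soon as one of its successors does, and a Breaker-turn position lies in $W$ as soon as all of its successors do. This fixed point is computed in time polynomial in $|G|$ by a saturation procedure, hence in time $2^{O(n)}$.

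The main step is then to show that Maker wins the $(a,b)$-game on $\hyp$ if and only if the initial position $(\varnothing,\varnothing,\textup{Maker})$ lies in $W$. If it does, Maker plays the positional strategy associated with $W$: by definition of $W$, every subsequent position stays in $W$, and because $W$ is a \emph{finite} attractor built by the standard least-fixed-point iteration, a Maker-terminal position is reached within at most $|V(G)|$ rounds, so Maker fills an edge. Conversely, if the initial position lies outside $W$, Breaker has a positional strategy that keeps the play inside the complement of $W$ forever; since $G$ is finite, the same position must eventually be revisited, at which point Breaker wins by the game rules.

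The only slightly delicate point is this last equivalence, which captures the way the rulebook condition ``Breaker wins if he indefinitely prevents Maker from winning or the game reaches the same state twice'' reduces to a clean reachability game, avoiding the need to track the play history in the state. Once this is noticed, no further argument is needed: the attractor computation decides the winner in time $2^{O(n)}$, placing the problem in \EXPTIME.
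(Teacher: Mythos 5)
Your proposal is correct and takes essentially the same approach as the paper: both enumerate the $2^{O(n)}$ game states and decide the winner by exhaustive backward induction (the paper phrases it as a minimax exploration bounded by $3^n$ positions, while your attractor computation is the standard implementation of the same idea, and your rank-decreasing argument in fact makes explicit the paper's brief claim that a winning Maker can avoid repeating a position, so your handling of the repetition rule is if anything more careful). The only slip is your count of at most $n+1$ legal moves per position --- moving different tokens of the same player to the same free vertex yields different successor positions, so the out-degree is $O(n^2)$ rather than $O(n)$ --- but this is harmless since the game graph still has size $2^{O(n)}$.
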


\begin{proof}
    Let $\hyp$ be a hypergraph, and let $n = |V(\hyp)|$. Each vertex of $\hyp$ is either free, has a Maker token on it, or has a Breaker token on it. Therefore, the number of different positions of the game is $3^n$. If Maker has a winning strategy, then she has one in which the same position never appears twice, so the game ends in at most $3^n$ moves. Therefore, the minimax algorithm can explore all possible positions in \EXPTIME. 
\end{proof}

Let us also recall that, by Theorem \ref{theo:breaker1}, token positional games where Breaker only has one token can be solved in polynomial time.

\subsection{{\sf XP} algorithm}

The idea for the {\sf XP} algorithm is that, when both players have a bounded number of tokens, the set of all positions has polynomial size and can thus be fully explored.

In what follows, a {\em game state} refers to a position of a token positional game plus the knowledge of which player is next to play. Given a hypergraph $\hyp$ and two integers $a,b \in \Zp$, we define the directed graph $G_{a,b}(\hyp)$ whose vertex set is the set of all possible game states for the $(a,b)$-game on $\hyp$, and where there is an arc $(u,v)$ if and only if the game state $v$ can be reached from the game state $u$ in exactly one move (played by the player prescribed by $u$).

\begin{lemma}\label{lem:xp}
    Let $\hyp$ be a hypergraph, and let $a,b \in \Zp$. The graph $G_{a,b}(\hyp)$ has at most $2n^{2k}$ vertices and $2kn^{2k+1}$ arcs, where $n=|V(\hyp)|$ and $k=\max(a,b)$.
\end{lemma}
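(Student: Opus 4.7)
The plan is to establish the two bounds by direct counting.

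For the vertex bound, a game state of $G_{a,b}(\hyp)$ is determined by the triple $(M, B, \text{turn})$ where $M$ and $B$ are the current Maker and Breaker token sets (disjoint, of sizes at most $a \leq k$ and $b \leq k$) and the turn specifies which player moves next. The factor $2$ in $2n^{2k}$ accounts for the turn, so it suffices to bound the number of positions $(M, B)$ by $n^{2k}$. I would construct an injection from positions into $V(\hyp)^k \times V(\hyp)^k$ as follows: list the elements of $M$ in some canonical (say lexicographic) order, then pad the resulting tuple to length $k$ by repeating the first element (using an arbitrary fixed vertex if $M$ is empty), and do the same for $B$. Different positions yield different $k$-tuple pairs, since $M$ (resp.\ $B$) is recovered as the set of distinct entries of its sub-tuple. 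Thus the number of positions is at most $n^{2k}$, yielding at most $2n^{2k}$ game states.

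For the arc bound, the approach is to bound the out-degree of each state uniformly and multiply by the vertex count. From a given state, the active player has three types of legal moves: (i) pass, contributing one option; (ii) if not all of their tokens are yet on the board, place a previously unused token on one of at most $n$ free vertices; (iii) move one of the at most $k$ currently placed tokens to one of at most $n$ free vertices, contributing at most $kn$ options. Adding these gives an out-degree at most $kn + n + 1$, and multiplying by $2n^{2k}$ yields the claimed total arc count (after a small amount of absorption into the leading $2kn^{2k+1}$ term in the relevant $k,n \geq 1$ regime).

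The only step requiring care is the injective encoding of positions, to make sure padding does not collapse genuinely different sets; everything else is bookkeeping. I do not expect any serious obstacle, only the need to keep constants under control so as to match the exact bounds in the statement.
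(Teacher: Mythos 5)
Your proposal takes the same route as the paper: count game states as triples $(M,B,\text{turn})$ to get the vertex bound, then multiply by a uniform out-degree bound for the arcs. The paper does this more bluntly (it bounds positions by $\binom{n}{k}\times\binom{n}{k}\times 2\leq 2n^{2k}$ and the out-degree by $kn$, ``selecting a token'' among at most $k$ and a destination among at most $n$ --- a single count that already covers both placing a fresh token and moving a placed one). However, two of your bookkeeping steps fail as written, and they are exactly the steps you flagged as needing care.

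First, your padding encoding is not injective: $M=\varnothing$ and $M=\{v_0\}$, where $v_0$ is your fixed default vertex, both map to the constant tuple $(v_0,\ldots,v_0)$. This is trivially patchable (the nonempty sets inject into $n^k$ tuples and the empty set can be absorbed, or handled by a flag), but the claim ``different positions yield different $k$-tuple pairs'' is false as stated. Second, and more substantively, your additive out-degree count $kn+n+1$ does not ``absorb'' into the stated arc bound: $2n^{2k}(kn+n+1)=2kn^{2k+1}+2n^{2k+1}+2n^{2k}$ strictly exceeds $2kn^{2k+1}$ for all $k,n\geq 1$, so your final step does not prove the lemma's constants. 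The fix is to notice that your cases (ii) and (iii) cannot both be at their maxima: if an unused token exists, then at most $k-1$ tokens are on the board, so (ii) and (iii) together contribute at most $(k-1)n+n=kn$ moves, recovering the paper's bound of $kn$ per state (with pass moves left aside, as the paper's own proof also silently does). None of this affects the downstream theorem, which only uses $O(n^{2k}+kn^{2k+1})=O(n^{2k+2})$, but as a proof of the exact bounds in the lemma your write-up needs both patches.
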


\begin{proof}
    A game state is described by three choices: the set of vertices occupied by Maker, the set of vertices occupied by Breaker, and the identity of the next player. This amounts to at most $\binom{n}{k} \times \binom{n}{k} \times 2 $ possibilities. Since $\binom{n}{k} \leq n^k$, the number of vertices of $G_{a,b}(\hyp)$ is at most $2n^{2k}$.

    To bound the number of arcs, we bound the maximum out-degree of a vertex in $G_{a,b}(\hyp)$. From any game state $u$, a move consists in selecting a token (at most $k$ choices) and its destination vertex (at most $n$ choices) so the out-degree of $u$ is at most $kn$. Therefore, the number of arcs of $G_{a,b}(\hyp)$ is at most $2kn^{2k+1}$.
\end{proof}

We conclude by encoding our token positional game as a {\em reachability game}. Such a game is played on a directed graph $G$, with a given subset $T$ of vertices which we see as a {\em target set}. A unique token is initially placed on a prescribed vertex of $G$. The players then take turns moving this token along an arc of $G$, i.e. the new vertex must be an out-neighbor of the previous one. The first player wins if the token ever sits on a vertex in $T$. If, instead, the game lasts indefinitely without any vertex in $T$ being visited, then the second player wins. It is known that deciding the winner of a reachability game can be done in linear time, by computing the 0-attractor of the target set \cite{reachability1,reachability2}.

\begin{theorem}
    Deciding the winner of a token positional game is in {\sf XP} parameterized by the number of tokens of both players. More precisely,
    deciding the winner of the $(a,b)$-game on a hypergraph on $n$ vertices can be done in time $O(n^{2\max(a,b)+2})$.
\end{theorem}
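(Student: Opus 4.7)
The plan is to leverage the game-state graph $G_{a,b}(\hyp)$ introduced in Lemma~\ref{lem:xp} and reduce the decision problem to a reachability game, which can be solved in linear time in the size of the graph.

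First, I would construct $G_{a,b}(\hyp)$ explicitly by enumerating all game states and, from each of them, listing the at most $kn$ legal successors (where $k=\max(a,b)$), while also marking the target set $T \subseteq V(G_{a,b}(\hyp))$ consisting of the game states in which Maker's occupied set contains some edge of $\hyp$ as a subset. By Lemma~\ref{lem:xp}, this construction runs in time $O(n^{2k+2})$ once polynomial factors in $k$ and in the number of edges are absorbed into an extra $n$ factor.

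Next, I would interpret the token positional game as a reachability game on $G_{a,b}(\hyp)$: the vertex set is naturally bipartitioned according to whose turn it is, the arcs encode the allowed moves, the initial state is $(\hyp,\varnothing,\varnothing)$ with Maker to play, and the target set is $T$. The crucial point is that Maker wins the $(a,b)$-game on $\hyp$ from the initial state if and only if she wins the associated reachability game: a play that reaches $T$ is precisely a play in which Maker fills an edge, while a play that never reaches $T$ must, since $G_{a,b}(\hyp)$ is finite, eventually revisit a vertex, which matches Breaker's win condition in the original game (either indefinite cycling after a transversal has been completed, or reaching the same game state twice).

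Finally, I would apply the standard attractor computation~\cite{reachability1,reachability2} to decide whether the initial state lies in the Maker-attractor of $T$. This runs in linear time in the size of the graph, hence in $O(n^{2k+2})$ overall by Lemma~\ref{lem:xp}. No substantial obstacle arises; the only point requiring care is verifying the equivalence between Breaker's original win condition (by repetition or by holding a transversal) and the reachability-game notion of ``never visiting $T$'', which is immediate from the finiteness of $G_{a,b}(\hyp)$.
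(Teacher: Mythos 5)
Your proposal is correct and follows essentially the same route as the paper: the paper likewise defines the target set of game states in which Maker fills an edge, reduces the $(a,b)$-game to a reachability game on $G_{a,b}(\hyp)$ solved by the linear-time attractor computation of \cite{reachability1,reachability2}, and invokes Lemma~\ref{lem:xp} to obtain the bound $O(n^{2k} + kn^{2k+1}) = O(n^{2\max(a,b)+2})$. Your explicit handling of the repetition rule (no winning reachability play revisits a state, and any non-reaching play eventually cycles) is exactly the paper's remark as well.
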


\begin{proof}
    Let $F$ be the set of all game states of the $(a,b)$-game on $\hyp$ in which some edge of $\hyp$ is filled with Maker tokens. It is straightforward that whichever player has a winning strategy for the reachability game on $G_{a,b}(\hyp)$ with target set $F$ also has a winning strategy for the $(a,b)$-game on $\hyp$. Note that our rule that Breaker wins if the same game state is reached twice is not an issue here, since no vertex is visited twice when the first player carries out a winning strategy for the reachability game (we would get an infinite loop of moves, contradicting the fact that $T$ is eventually reached).

    Therefore, using the linear-time algorithm from \cite{reachability2}, deciding the winner of the $(a,b)$-game on $\hyp$ can be done in time $O(|V|+|A|)$, where $V$ and $A$ are the vertex set and the arc set of $G_{a,b}(\hyp)$ respectively. By Lemma \ref{lem:xp}, defining $k=\max(a,b)$, we get a running time $O(n^{2k} + kn^{2k+1}) = O(n^{2k+2})$.
\end{proof}

\subsection{\EXPTIME-completeness for token sliding}

We study here the algorithmic complexity of the {\em token sliding} version of token positional games. This means that a token can only be moved from a vertex $u$ to a vertex $v$ if there is an edge of the hypergraph containing both $u$ and $v$, as opposed to the constraint-free {\em token jumping} version which was studied in all previous sections. The study of this variation is motivated by analogy with reconfiguration problems, in which both token jumping and token sliding settings are typically studied~\cite{Demaine2014,Bartier2021}. Note that all edges are relevant in the token sliding version, including those of size larger than Maker's number of tokens, because they might allow certain moves to be played. Also remark that, up to adding an edge containing all the vertices of the hypergraph (which cannot be filled), it is always possible to emulate token jumping in the token sliding setting.

We will perform a reduction from the {\em eternal dominating set} problem, which is defined as follows. Let $G$ be a graph, and let $D$ be a subset of vertices which we call {\em guards}. The eternal domination game is played by two players, Attacker and Defender. In each turn, Attacker selects ({\em attacks}) a vertex $v \not\in D$, then Defender chooses some guard $u \in D$ that is a neighbor of $v$ and moves it to $v$, and the game continues with the new set of guards $(D \cup \{v\}) \setminus \{u\}$. If, at any point, Defender has no legal move i.e. no guard is a neighbor of $v$, then Attacker wins. Otherwise, Defender wins i.e. the game lasts indefinitely. When Defender has a winning strategy, we say that the initial set $D$ is an {\em eternal dominating set} of $G$. Virgile proved that determining whether $D$ is an eternal dominating set of $G$ is \EXPTIME-complete~\cite{virgelot2024}. We use it to prove our result, which addresses general positions of the token sliding game, meaning that we consider games that start from a position in which some tokens may already be placed on the board.

\begin{theorem}\label{thm:exptime-token-sliding}
    Deciding the winner of a token positional game with token sliding is \EXPTIME-complete, even restricted to hypergraphs of rank 2 and Maker having a single token.
\end{theorem}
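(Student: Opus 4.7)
The plan is as follows. The upper bound follows from the same argument as in the earlier \EXPTIME proposition: the state space of any token positional game on a hypergraph with $n$ vertices has size at most $2 \cdot 3^n$ regardless of whether tokens move by jumping or sliding, so the minimax algorithm runs in exponential time.

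For the lower bound, I reduce from the eternal dominating set problem~\cite{virgelot2024}. Given an instance $(G, D)$, I construct a hypergraph $\hyp$ with two vertices $a_v, g_v$ per vertex $v \in V(G)$, and the following edges: the singleton $\{g_v\}$ for each $v$ (Maker's only possible winning sets), $\{a_v, g_v\}$ for each $v$ (the ``kill'' edge of an attack), $\{g_u, g_v\}$ for each $uv \in E(G)$ (so that Breaker can slide guards along $G$-edges), and $\{a_u, a_v\}$ for each pair of distinct $u, v \in V(G)$ (so that Maker can transition between attacks in a single slide). The game starts from the position where Maker's token sits on $a_{v_0}$ for some fixed $v_0 \in D$ and Breaker's $b := |D|$ tokens sit on $\{g_v \mid v \in D\}$. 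The hypergraph has rank 2, Maker has a single token, and the construction is polynomial in $|V(G)|$.

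The intended correspondence is: Maker sliding her token from $a_u$ to some $a_v$ represents Attacker attacking $v$, and Breaker sliding a guard from $g_{u'}$ to $g_v$ (where $u' \in N_G(v)$) represents Defender moving a guard from $u'$ to $v$. If the attacked $v$ is undefended (i.e.\ $g_v$ is unoccupied), Maker threatens to slide $a_v \to g_v$ on her next turn and win, so Breaker is forced to put a token on $g_v$. The main subtlety, which will be the core of the argument, is that Breaker cannot cheat: he might consider parking a guard on an $a$-vertex via the edge $\{a_v, g_v\}$ or relocating a guard along a $G$-edge without defending, but any such move fails to put a token on $g_v$, so Maker wins on her next slide. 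Hence Breaker's only safe response to a threat on $v$ is to slide a guard from some $g_{u'}$ with $u' \in N_G(v)$ onto $g_v$, which is precisely a legal Defender move in eternal domination.

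From this, the equivalence is straightforward. If Attacker wins on $(G, D)$, Maker mimics that strategy; she eventually reaches a state where no guard lies in $N_G(v)$ for the attacked $v$, at which point Breaker cannot defend and Maker wins on the next slide. Conversely, if Defender wins, Breaker mimics Defender, passing whenever Maker attacks a currently defended vertex (which is not a real threat since $g_v$ is already occupied). Then Maker can never win, and since the state space is finite, the same position must eventually repeat, giving Breaker the win. Combined with \EXPTIME membership, this yields \EXPTIME-completeness.
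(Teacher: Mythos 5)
Your proposal is correct and matches the paper's proof essentially verbatim: the same \EXPTIME membership argument via the $3^n$-state minimax, and the same reduction from eternal domination with your $a_v, g_v$ playing the roles of the paper's $u_i, u'_i$, the same four edge families (singletons on guard-slots, matching edges, a clique on attack-slots, and $G$-edges on guard-slots), the same initial position, and the same two-directional simulation, including the observation that during a threat the only way for Breaker to occupy $g_v$ is a slide corresponding to a legal Defender move. Your explicit treatment of Breaker ``parking'' tokens on $a$-vertices is a slightly more careful spelling-out of a point the paper leaves implicit, but it is not a different argument.
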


\begin{proof}
Let $(G, D)$ be an instance of the eternal dominating set problem, where we write the vertex set of $G$ as $\{ v_1, \ldots, v_n \}$.
We build an instance position $\pos = (\hyp, M, B)$ as follows. Since this proof involves graphs and hypergraphs, we will exceptionally use the word ``hyperedge'', reserving the word ``edge'' for graphs.
\begin{itemize}
\item $V(\hyp) = \{ u_1, \ldots, u_n \} \cup \{ u'_1, \ldots, u'_n \}$.
\item $E(\hyp) = E_1 \cup E_2 \cup E_3 \cup E_4$, where:
\begin{itemize}
    \item $E_1 = \{\{u'_i\} \mid i \in \interval{1}{n} \}$;
    \item $E_2 = \{\{u_i,u'_i\} \mid i \in \interval{1}{n} \}$;
    \item $E_3 = \{\{u_i,u_j\} \mid i \neq j \in \interval{1}{n} \}$;
    \item $E_4 = \{\{u'_i,u'_j\} \mid i \neq j \in \interval{1}{n}, \text{$v_i$ and $v_j$ are adjacent in $G$} \}$.
\end{itemize}
\item $M = \{ u_{i_0} \}$ for some arbitrary $i_0 \in \interval{1}{n}$ such that 
$v_{i_0} \in D$.
\item $B = \{ u'_i \mid i \in \interval{1}{n}, v_i \in D \}$.
\end{itemize}

This construction is depicted in Figure~\ref{fig:construction-exptime-token-sliding}. Note that, since Maker only has one token, she cannot win through the hyperedges of size $2$: those are only here to allow for sliding.

Let us prove that Attacker wins the eternal domination game on $(G, D)$ if and only if Maker wins the $(1,|D|)$-game with token sliding starting from the position $\pos$. For both directions of this equivalence, the idea is to mimic the winning strategy from the eternal domination game into the token positional game in such a way that, at the start of every round, Breaker's tokens are on the $u'_j$'s corresponding to the $v_j$'s that host Defender's guards in the parallel eternal domination game (note that, by definition of $B$, this is the case at the start of the first round).

First, suppose that Attacker has a winning strategy $\strat$ for the eternal domination game on $(G, D)$. For this, anytime $\strat$ prescribes to attack some $v_i$, Maker simply slides her token to $u_i$ (as permitted by the hyperedges in $E_3$). Since Defender has no guard on $v_i$ at that point, we know Breaker has no token on $u'_i$, so Maker is threatening to win on her next move by sliding her token from $u_i$ to $u'_i$ (as permitted by the hyperedges in $E_2$). As such, there are two cases. If Breaker fails to slide some token to $u'_i$ himself immediately, then he loses. Otherwise, Breaker slides a token from some $u'_k$ to $u'_i$, and we consider that Defender has moved a guard from $v_k$ to $v_i$ in the eternal domination game (as permitted by the fact that the hyperedges in $E_4$ correspond to the edges of $G$). Since $\strat$ is a winning strategy for Attacker, we will be in the former case at some point, meaning Maker will win.

Now, suppose that Defender has a winning strategy $\strat$ for the eternal domination game on $(G, D)$.
It suffices to show that Breaker can ensure that Maker's token remains on one of the $u_j$'s at all times. Note that, by definition of $M$ and $B$, this is the case after Maker's very first move: indeed, Breaker has a token on $u'_{i_0}$ at the start of the game, which prevents Maker from sliding her token from $u_{i_0}$ to $u'_{i_0}$. Now, anytime Maker moves her token to some $u_i$, there are two cases. If Breaker already has a token on $u'_i$, then he passes his turn. Otherwise, we consider that Attacker attacks the vertex $v_i$: the strategy $\strat$ then prescribes to move a guard from some $v_k$ to $v_i$, which Breaker replicates by sliding a token from $u'_k$ to $u'_i$ (as permitted by the fact that the hyperedges in $E_4$ correspond to the edges of $G$). Since $\strat$ is a winning strategy for Defender, Breaker can carry out this strategy indefinitely. Consequently, at the end of each round, Maker's token is on some vertex $u_i$ such that Breaker has a token on $u'_i$, so Maker can never win.
\end{proof}

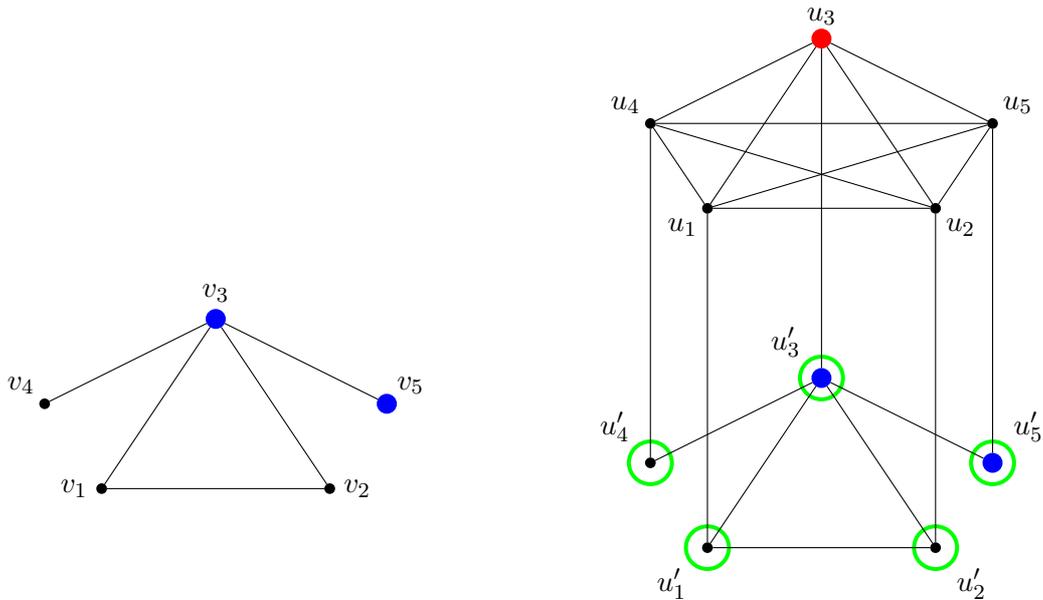
\begin{figure}[h]
    \centering

\begin{subfigure}{.4\linewidth}

\centering

\begin{tikzpicture}[scale=1.5]

\draw (0,0) node[v] (v1){} node[left = .05] {$v_{1}$};
\draw (2,0) node[v] (v2){} node[right = .05] {$v_{2}$};
\draw (1,1.5) node[B] (v3){} node[above = .1] {$v_{3}$};
\draw (-.5,.75) node[v] (v4){} node[above left] {$v_{4}$};
\draw (2.5,.75) node[B] (v5){} node[above right] {$v_{5}$};

\draw (2.5,-1) node[inv] {};

\draw (v1) -- (v2) -- (v3) -- (v1); 
 \draw (v4) -- (v3) -- (v5) ;

\end{tikzpicture}

\caption{An instance $(G, D)$ of the eternal domination game. The set of guards $D$ is represented by blue vertices.}
    
\end{subfigure} \hfil \begin{subfigure}{.45\textwidth}

\centering

\begin{tikzpicture}[scale=1.5]
\draw (0,0) node[rond]{};
\draw (2,0) node[rond]{} ;
\draw (1,1.5) node[rond]{} ;
\draw (-.5,.75) node[rond]{};
\draw (2.5,.75) node[rond]{} ;

\draw (0,0) node[v] (v1){} node[below left = 2mm] {$u'_{1}$};
\draw (2,0) node[v] (v2){} node[below right = 2mm] {$u'_{2}$};
\draw (1,1.5) node[B] (v3){} node[above left = 2mm] {$u'_{3}$};
\draw (-.5,.75) node[v] (v4){} node[above left = 2mm] {$u'_{4}$};
\draw (2.5,.75) node[B] (v5){} node[above right = 2mm] {$u'_{5}$};

\draw (v1) -- (v2) -- (v3) -- (v1); 
 \draw (v4) -- (v3) -- (v5) ;

\draw (0,3) node[v] (u1){} node[below left] {$u_{1}$};
\draw (2,3) node[v] (u2){} node[below right] {$u_{2}$};
\draw (1,4.5) node[R] (u3){} node[above = .05] {$u_{3}$};
\draw (-.5,3.75) node[v] (u4){} node[above left] {$u_{4}$};
\draw (2.5,3.75) node[v] (u5){} node[above right] {$u_{5}$};

\draw (u1) -- (u2) -- (u3) -- (u4) -- (u5) -- (u1); 
 \draw (u1) -- (u3) -- (u5) -- (u2) -- (u4) -- (u1) ; 

\draw (u1) -- (v1);
\draw (u2) -- (v2);
\draw (u3) -- (v3);
\draw (u4) -- (v4);
\draw (u5) -- (v5);

\end{tikzpicture}

\caption{The corresponding hypergraph from our reduction. The red vertex represents Maker's token; the blue vertices represent Breaker's tokens.}
    
\end{subfigure}
    
    \caption{The construction from the proof of Theorem~\ref{thm:exptime-token-sliding}.  Hyperedges of size $1$ are represented by green circles, while hyperedges of size $2$ are represented by black lines.}
    \label{fig:construction-exptime-token-sliding}
\end{figure}

\section{Conclusion}\label{section6}

In this paper, we have initiated the study of Maker Breaker token positional games. In particular, we have shown that the token number $\theta(\hyp)$ of a $k$-uniform hypergraph $\hyp$ (if finite) is equal to $k$ for $k \in \{2,3\}$, but can vary from $k$ to $\Omega(|V(\hyp)|)$ for all $k \geq 4$. Our results also give a better understanding of classical Maker-Breaker games, in two ways. First, since a high token number means that Maker's moves can still be useful long after they are played, we can see that Maker's winning strategies can become much more complex for $k=4$ compared to $k=3$. This is consistent with the gap in algorithmic complexity: classical Maker-Breaker games are tractable for $k=3$ \cite{GGS25} but \PSPACE-complete for $k=4$ \cite{galliot2025}. Second, we have shown that, for all $k \geq 4$, there are arbitrarily large hypergraphs $\hyp$ on which the duration $\tau(\hyp)$ of the classical Maker-Breaker game attains the trivial upper bound $\lceil |V(H)|/2 \rceil$, again in contrast with the case $k=3$. It remains an open question whether the token number also attains this bound, since the highest token number we have managed to construct is around $|V(H)|/6$.

Instances where Breaker has a single token ($b=1$) are completely understood, and we have provided a polynomial-time algorithm to solve them. One may hope that the concept of reducible pairs of edges, which is key to the case $b=1$, would admit some generalization to any constant value of $b$ and maybe yield an {\sf FPT} algorithm parameterized by $b$. However, this does not seem to be the case. Indeed, even though Maker trivially needs two simultaneous threats to win when $b=1$, she may not need three simultaneous threats to win when $b=2$ for instance. As an illustration of this, consider a linear 3-uniform cycle (i.e. a necklace without the initial token on it) of length at least 4: it can be shown that Maker wins the $(3,2)$-game, despite never threatening a one-move win in three different edges. A structural characterization of hypergraphs on which Maker wins the $(3,2)$-game would be a good step towards understanding the case $b=2$.

Our \EXPTIME-completeness result for the token sliding version of the game has room for improvement. As it addresses general positions, with tokens already sitting on the board at specific locations, the complexity remains unknown for starting positions of the game. Moreover, we would welcome an \EXPTIME-completeness result for ``standard'' token positional games, i.e. the token jumping version.

Finally, conventions other than Maker-Breaker may be considered for token positional games. In particular, it would be natural to study Maker-Maker token positional games, where whichever player first fills an edge wins.

\bibliographystyle{plain}
\bibliography{references}

\begin{thebibliography}{10}

\bibitem{BaganIncidence}
Guillaume Bagan, Quentin Deschamps, Eric Duchêne, Bastien Durain, Brice Effantin, Valentin Gledel, Nacim Oijid, and Aline Parreau.
\newblock Incidence, a scoring positional game on graphs.
\newblock {\em Discrete Mathematics}, 347(8):113570, 2024.

\bibitem{BAGANpopo}
Guillaume Bagan, Eric Duchêne, Florian Galliot, Valentin Gledel, Mirjana Mikalački, Nacim Oijid, Aline Parreau, and Miloš Stojaković.
\newblock {Poset positional games}.
\newblock {\em Discrete Mathematics}, 348(7):114455, 2025.

\bibitem{balogh2009diameter}
J{\'o}zsef Balogh, Ryan Martin, and Andr{\'a}s Pluh{\'a}r.
\newblock {The diameter game}.
\newblock {\em Random Structures \& Algorithms}, 35(3):369--389, 2009.

\bibitem{Bartier2021}
Valentin Bartier, Nicolas Bousquet, Cl{\'e}ment Dallard, Kyle Lomer, and Amer~E. Mouawad.
\newblock {On girth and the parameterized complexity of token sliding and token jumping}.
\newblock {\em Algorithmica}, 83(9):2914--2951, September 2021.

\bibitem{beck1982}
J{\'o}zsef Beck.
\newblock {Remarks on positional games}.
\newblock {\em Acta Mathematica Hungarica}, 40(1-2):65--71, 1982.

\bibitem{beckpicker}
J{\'o}zsef Beck.
\newblock Positional games and the second moment method.
\newblock {\em Combinatorica}, 22(2):169--216, 2002.

\bibitem{beckbook2008}
J{\'o}zsef Beck.
\newblock {\em {Combinatorial Games: Tic-Tac-Toe Theory}}, volume 114.
\newblock Cambridge University Press, 2008.

\bibitem{Bednarska2000}
Malgorzata Bednarska and Tomasz Luczak.
\newblock {Biased positional games for which random strategies are nearly optimal}.
\newblock {\em Combinatorica}, 20:477--488, 04 2000.

\bibitem{Bonnet2017}
\'{E}douard Bonnet, Serge Gaspers, Antonin Lambilliotte, Stefan R\"{u}mmele, and Abdallah Saffidine.
\newblock {The parameterized complexity of positional games}.
\newblock In Ioannis Chatzigiannakis, Piotr Indyk, Fabian Kuhn, and Anca Muscholl, editors, {\em {44th International Colloquium on Automata, Languages, and Programming (ICALP 2017)}}, volume~80 of {\em Leibniz International Proceedings in Informatics (LIPIcs)}, pages 90:1--90:14, Dagstuhl, Germany, 2017. Schloss Dagstuhl -- Leibniz-Zentrum f{\"u}r Informatik.

\bibitem{Bonnet2016}
\'Edouard Bonnet, Florian Jamain, and Abdallah Saffidine.
\newblock {On the complexity of connection games}.
\newblock {\em Theoretical Computer Science}, 644:2--28, 2016.

\bibitem{BUJTAS202410}
Csilla Bujtás and Pakanun Dokyeesun.
\newblock {Fast winning strategies for Staller in the Maker–Breaker domination game}.
\newblock {\em Discrete Applied Mathematics}, 344:10--22, 2024.

\bibitem{chvatal1978}
Vašek Chvátal and Paul Erdös.
\newblock {Biased positional games}.
\newblock In B.~Alspach, P.~Hell, and D.J. Miller, editors, {\em {Algorithmic Aspects of Combinatorics}}, volume~2 of {\em Annals of Discrete Mathematics}, pages 221--229. Elsevier, 1978.

\bibitem{Demaine2014}
Erik~D. Demaine, Martin~L. Demaine, Eli Fox-Epstein, Duc~A. Hoang, Takehiro Ito, Hirotaka Ono, Yota Otachi, Ryuhei Uehara, and Takeshi Yamada.
\newblock {Polynomial-time algorithm for sliding tokens on trees}.
\newblock In Hee-Kap Ahn and Chan-Su Shin, editors, {\em Algorithms and Computation}, pages 389--400, Cham, 2014. Springer International Publishing.

\bibitem{erdos1973}
Paul Erd{\"o}s and John~L. Selfridge.
\newblock {On a combinatorial game}.
\newblock {\em Journal of Combinatorial Theory, Series A}, 14(3):298--301, 1973.

\bibitem{fouadi2025asymptotic}
Adnane Fouadi, Mourad El~Ouali, and Anand Srivastav.
\newblock {Asymptotic thresholds for (a:b) minimum-Degree games}.
\newblock {\em Games}, 16(5):47, 2025.

\bibitem{galliot2023hypergraphs}
Florian Galliot.
\newblock {\em {Hypergraphs and the Maker-Breaker game: a structural approach}}.
\newblock PhD thesis, Universit{\'e} Grenoble Alpes [2020-....], 2023.

\bibitem{galliot2025}
Florian Galliot.
\newblock {4-uniform Maker-Breaker and Maker-Maker games are PSPACE-complete. Preprint}, 2025.

\bibitem{GGS25}
Florian Galliot, Sylvain Gravier, and Isabelle Sivignon.
\newblock {Maker-Breaker is solved in polynomial time on hypergraphs of rank 3. Preprint}, 2025.

\bibitem{gardner1959}
Martin Gardner.
\newblock {\em {Hexaflexagons and other mathematical diversions}}, chapter Ticktacktoe, pages 38--40.
\newblock University Of Chicago Press, 2nd ed ISBN 978-0226282541, 1959.

\bibitem{gebauer2012clique}
Heidi Gebauer.
\newblock {On the clique-game}.
\newblock {\em European Journal of Combinatorics}, 33(1):8--19, 2012.

\bibitem{reachability2}
Erich Grädel, Wolfgang Thomas, and Thomas Wilke, editors.
\newblock {\em {Automata, Logics, and Infinite Games: a Guide to Current Research}}, volume 2500 : Lecture notes in artificial intelligence of {\em Lecture notes in computer science}.
\newblock Springer, Berlin [u.a.], 2002.

\bibitem{Hales1963}
Robert~I. Hales and Alfred~W. Jewett.
\newblock {Regularity and positional games}.
\newblock {\em Trans. Am. Math. Soc}, 106:222--229, 1963.

\bibitem{Har81}
Frank Harary.
\newblock Achievement and avoidance games designed from theorems.
\newblock {\em Rendiconti del Seminario Matematico e Fisico di Milano}, 51:163--172, 1981.

\bibitem{HEFETZ2007213}
Dan Hefetz, Michael Krivelevich, Miloš Stojaković, and Tibor Szabó.
\newblock {Fast winning strategies in positional games}.
\newblock {\em Electronic Notes in Discrete Mathematics}, 29:213--217, 2007.
\newblock European Conference on Combinatorics, Graph Theory and Applications.

\bibitem{HEFETZ200939}
Dan Hefetz, Michael Krivelevich, Miloš Stojaković, and Tibor Szabó.
\newblock {Fast winning strategies in Maker–Breaker games}.
\newblock {\em Journal of Combinatorial Theory, Series B}, 99(1):39--47, 2009.

\bibitem{hefetz2012doubly}
Dan Hefetz, Mirjana Mikala{\v{c}}ki, and Milo{\v{s}} Stojakovi{\'c}.
\newblock {Doubly biased Maker-Breaker Connectivity Game}.
\newblock {\em The Electronic Journal of Combinatorics}, page P61, 2012.

\bibitem{koepke2025advances}
Finn~Orson Koepke.
\newblock {Solving Maker-Breaker games on 5-uniform hypergraphs is PSPACE-complete}.
\newblock {\em arXiv preprint arXiv:2502.20271}, 2025.

\bibitem{kraitchik1942}
Maurice Kraitchik.
\newblock {\em {Mathematical recreations}}, chapter Positional games, pages 290--292.
\newblock W.W. Norton \& Company inc., New York, 1942.

\bibitem{Lu92}
Xiaoyun Lu.
\newblock Hamiltonian games.
\newblock {\em J. Comb. Theory, Ser. B}, 55(1):18--32, 1992.

\bibitem{rahman20236}
Md~Lutfar Rahman and Thomas Watson.
\newblock {6-uniform Maker-Breaker game is PSPACE-complete}.
\newblock {\em Combinatorica}, 43(3):595--612, 2023.

\bibitem{schaefer1978complexity}
Thomas~J Schaefer.
\newblock {On the complexity of some two-person perfect-information games}.
\newblock {\em Journal of computer and system Sciences}, 16(2):185--225, 1978.

\bibitem{reachability1}
Wolfgang Thomas.
\newblock {Infinite games and verification}.
\newblock In Ed~Brinksma and Kim~Guldstrand Larsen, editors, {\em Computer Aided Verification}, pages 58--65, Berlin, Heidelberg, 2002. Springer Berlin Heidelberg.

\bibitem{virgelot2024}
Virgélot Virgile.
\newblock {\em {Mobile Guards' Strategies for Graph Surveillance and Protection}}.
\newblock PhD thesis, University of Victoria, 2024.

\bibitem{zaslavsky1982tic}
Claudia Zaslavsky.
\newblock {\em {Tic-Tac-Toe: and other three-in-a row games from ancient Egypt to the modern computer}}.
\newblock Crowell, 1982.

\end{thebibliography}

\end{document}